\sloppy \usepackage{microtype}
\newcommand{\partialto}{\rightharpoonup}
\newcommand{\partialtone}{\partialto_{\mathrm{ne}}}
\newcommand{\dom}{\mathrm{dom}}
\newcommand{\into}{\hookrightarrow}
\newcommand{\Lor}{\bigvee}
\newcommand{\Land}{\bigwedge}
\newcommand{\doublemod}[1]{\langle #1\rangle}
\newcommand{\predlift}[1]{\mathsf{PL}(#1)}
\newcommand{\frel}[2]{#1 \mathbin{\ooalign{$\rightarrow$\cr$\hspace{0.15ex}+$\cr}} #2}
\newcommand{\rev}[1]{#1 ^ \circ}
\newcommand{\idcon}[1]{\mathsf{Id}^\mathsf{c}_{#1}}
\newcommand{\natcomp}{\bullet}
\newcommand{\posbool}{\mathsf{pos}}
\newcommand{\Pos}{\mathsf{Pos}}
\newcommand{\logcomp}{\mathbin{\blacktriangleright}}
\newcommand{\forthcon}{L_{\mathsf{f}}}
\newcommand{\tracecon}{L_{\mathsf{t}}}
\newcommand{\simul}{\preceq}
\newcommand{\bisim}{\simeq}
\newcommand{\by}[1]{\text{(#1)}}
\newcommand{\resetCurThmBraces}{%
\gdef\curThmBraceOpen{(}%
\gdef\curThmBraceClose{)}}
\newcommand{\removeThmBraces}{%
\gdef\curThmBraceOpen{}%
\gdef\curThmBraceClose{}}
\patchcmd{\thmhead}{(#3)}{\curThmBraceOpen #3\curThmBraceClose}{}{}
   \def\@citecolor{blue}%
   \def\@urlcolor{blue}%
   \def\@linkcolor{blue}%
\def\orcidID#1{\smash{\href{http://orcid.org/#1}{\protect\raisebox{-1.25pt}{\protect\includegraphics{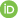}}}}}
\newcommand{\id}{\mathit{id}}
\numberwithin{equation}{section}
\tikzset{
   n/.style= {circle,fill,inner sep=1.5pt,node distance=2cm}
  ,acc/.style={circle,draw,inner sep=3pt,node distance=2cm}
  ,phantom/.style={circle},
  ,arr/.style={->, >=stealth, semithick, shorten <= 3pt, shorten >= 3pt}
}
\renewcommand{\Box}{\square}
\renewcommand{\Diamond}{\lozenge}
\newcommand{\pow}{\mathcal{P}}
\def\moverlay{\mathpalette\mov@rlay}
\def\mov@rlay#1#2{\leavevmode\vtop{%
   \baselineskip\z@skip \lineskiplimit-\maxdimen
   \ialign{\hfil$\m@th#1##$\hfil\cr#2\crcr}}}
\newcommand{\charfusion}[3][\mathord]{
    #1{\ifx#1\mathop\vphantom{#2}\fi
        \mathpalette\mov@rlay{#2\cr#3}
      }
    \ifx#1\mathop\expandafter\displaylimits\fi}
\newcommand{\FA}{{\mathfrak A}}
\newcommand{\Dist}{\mathcal{D}}
\newcommand{\Nat}{{\mathbb{N}}}
\newcommand{\dual}[1]{\overline{#1}}
\newcommand{\FLang}{\mathcal{F}}
\newcommand{\Lang}{\mathcal{L}}
\newcommand{\Set}{\mathsf{Set}}
\newcommand{\Sem}[1]{{[\![#1]\!]}}
\newcommand{\A}{\mathcal{A}}
\newcommand{\B}{\mathcal{B}}
\newcommand{\C}{\mathcal{C}}
\newcommand{\takeout}[1]{\empty}
\newcommand{\mysubsec}[1]{%
  \par\medskip\noindent{\bfseries\sffamily #1}\hspace{3mm}%
  \@ifnextchar\par{\@gobble}{}% this eats a following \par if any
}
\spnewtheorem{thm}[theorem]{Theorem}{\bfseries}{\itshape}
\spnewtheorem{cor}[theorem]{Corollary}{\bfseries}{\itshape}
\spnewtheorem{cnj}[theorem]{Conjecture}{\bfseries}{\itshape}
\spnewtheorem{lem}[theorem]{Lemma}{\bfseries}{\itshape}
\spnewtheorem{lemdefn}[theorem]{Lemma and Definition}{\bfseries}{\itshape}
\spnewtheorem{prop}[theorem]{Proposition}{\bfseries}{\itshape}
\spnewtheorem{defn}[theorem]{Definition}{\bfseries}{\upshape}
\spnewtheorem{rem}[theorem]{Remark}{\bfseries}{\upshape}
\spnewtheorem{notation}[theorem]{Notation}{\bfseries}{\upshape}
\spnewtheorem{expl}[theorem]{Example}{\bfseries}{\upshape}
\spnewtheorem{thmdefn}[theorem]{Theorem and Definition}{\bfseries}{\itshape}
\spnewtheorem{propdefn}[theorem]{Proposition and Definition}{\bfseries}{\itshape}
\spnewtheorem{assn}[theorem]{Assumption}{\bfseries}{\upshape}
 \renewenvironment{theorem}{\begin{thm}}{\end{thm}}
 \renewenvironment{corollary}{\begin{cor}}{\end{cor}}
 \renewenvironment{lemma}{\begin{lem}}{\end{lem}}
 \renewenvironment{proposition}{\begin{prop}}{\end{prop}}
 \renewenvironment{remark}{\begin{rem}}{\end{rem}}
 \renewenvironment{example}{\begin{expl}}{\end{expl}}
\newcommand{\BREL}{\mathsf{BinRel}}
\newcommand{\ORD}{\mathsf{PreOrd}}
\newcommand{\catA}{\mathsf{A}}
\newcommand{\catB}{\mathsf{B}}
\begin{document}
\title{Relational Connectors and \\Heterogeneous Simulations}
\titlerunning{Relational Connectors and Heterogeneous Bisimulations}
\author{Pedro Nora\inst{2}\orcidID{0000-0001-8581-0675}\and Jurriaan Rot\inst{2}\orcidID{0000-0002-1404-6232}\thanks{Funded by the 
Dutch Research Council (NWO) -- project number~VI.Vidi.223.096} \and Lutz Schröder\inst{1}\orcidID{0000-0002-3146-5906}\thanks{Funded by the Deutsche Forschungsgemeinschaft (DFG, German Research Foundation) -- project number 531706730}\and Paul Wild\inst{1}\orcidID{0000-0001-9796-9675}}
\institute{Friedrich-Alexander-Universität Erlangen-Nürnberg \and Radboud University, Nijmegen}
\maketitle

\begin{abstract}
  While behavioural equivalences among systems of the same type, such as Park/Milner bisimilarity of labelled transition systems, are an established notion, a systematic treatment of relationships between systems of different types is currently missing. We provide such a treatment in the framework of universal coalgebra, in which the type of a system (nondeterministic, probabilistic, weighted, game-based etc.) is abstracted as a set functor: We introduce \emph{relational   connectors} among set functors, which induce notions of heterogeneous (bi)simulation among coalgebras of the respective types. We give a number of constructions on relational connectors. In particular, we identify composition and converse operations on relational connectors; we construct corresponding identity relational connectors, showing that the latter generalize the standard Barr extension of weak-pullback-preserving functors; and we introduce a Kantorovich construction in which relational connectors are induced from relations between modalities. For Kantorovich relational connectors, one has a notion of dual-purpose modal logic interpreted over both system types, and we prove a corresponding Hennessy-Milner-type theorem stating that generalized (bi)similarity coincides with theory inclusion on finitely-branching systems. We apply these results to a number of example scenarios involving labelled transition systems with different label alphabets, probabilistic systems, and input/output conformances.
\end{abstract}

\section{Introduction}\label{sec:intro}

Notions of simulation and bisimulation are pervasive in the
specification and verification of reactive systems
(e.g.~\cite{milner89}). For instance, they appear in state space
reduction (e.g.~\cite{BlomOrzan05}), they are used to specify concrete
systems in terms of abstract systems (e.g.\ in connection with the
analysis of ePassport protocols~\cite{HorneMauw21}), and, classically,
they relate tightly to indistinguishability in modal
logic~\cite{HennessyMilner85}. Originally introduced for (labelled)
transition systems, notions of (bi)simulation have been extended to a
wide range of system types, e.g.\ probabilistic
systems~\cite{LarsenSkou91,DesharnaisEA02}, weighted
systems~\cite{Buchholz08}, or monotone neighbourhood
frames~\cite{Pauly99,HansenKupke04}. They have received a uniform
treatment in the framework of universal
coalgebra~\cite{Rutten00}. However, so far, notions of (bi)simulation
have typically been confined to settings where the two systems being
compared are of the same type in a strict sense, e.g.\ labelled
transition systems (LTS) over the same alphabet. In the present paper,
we introduce a principled approach to comparing behaviour across
different system types by means of \emph{heterogeneous
  (bi)simulations}.

To this end, we encapsulate system types as set functors in the
paradigm of universal coalgebra, and introduce \emph{(relational)
  connectors} between system types. The latter generalize \emph{lax
  extensions}, which induce notions of (bi)simulation on a single
system type~\cite{MartiVenema12,MartiVenema15}. A connector between
functors~$F$ and~$G$ induces a notion of (bi)simulation between
$F$-coalgebras and $G$-coalgebras, i.e.\ between the systems of the
types represented by~$F$ and~$G$, respectively, for instance between
nondeterministic and probabilistic systems. We give a range of
constructions of connectors, such as converse, composition, and
pulling back along natural transformations. Notably, we show that the
composition of relational connectors admits identities. Identity
relational connectors satisfy a minimality condition, and form
smallest lax extensions of functors; for weak-pullback-preserving
functors, they coincide with the \emph{Barr extension}~\cite{Barr93},
which instantiates, e.g., to the well-known Egli-Milner relation
lifting for the powerset functor. We use these constructions to cover
a number of application scenarios, e.g.\ transferring bisimilarity
among LTS over different alphabets; sharing of infinite traces among
LTS; nondeterministic abstractions of probabilistic LTS; and
input-output conformances (\emph{ioco})~\cite{BosJM19}.

We go on to give a construction of relational connectors based on
relating modalities, modelled as predicate liftings in the style of
coalgebraic logic~\cite{Pattinson04,Schroder08}. In reference to
constructions of behavioural metrics (on a single system type) from
modalities~\cite{BaldanEA18,WildSchroder22}, we call such relational
connectors \emph{Kantorovich}. Many of our running examples turn out
to be Kantorovich. We then prove a Hennessy-Milner-type result for
Kantorovich connectors, showing that on finitely branching systems,
the induced similarity coincides with theory inclusion in a generic
dual-purpose modal logic that can be interpreted over both of the
involved system types. The generic theorem instantiates to logical
characterizations of bisimulation between LTS with different
alphabets, trace sharing between LTS,  nondeterministic abstraction
of probabilistic LTS, and ioco compatibility.

Proofs are often omitted or only sketched; details can be found in the
full version~\cite{NoraEA24arxiv}.

\paragraph*{Related work} Relational connectors
generalize lax extensions~\cite{Sea05,SS08,MartiVenema12,MartiVenema15}, which
belong to an extended strand of work on extending set functors to act
on
relations~(e.g.~\cite{BackhouseEA91,ThijsThesis,HT00,Lev11}). The
Kantorovich construction similarly generalizes constructions of
functor liftings and lax extensions in both two-valued and
quantitative
settings~\cite{MartiVenema12,MartiVenema15,GorinSchroder13,BaldanEA18,WildSchroder22,GoncharovEA23}.
Our heterogeneous Hennessy-Milner theorem generalizes (the
monotone case of) coalgebraic Hennessy-Milner theorems for behavioural
equivalence~\cite{Pattinson04,Schroder08} and behavioural
preorders~\cite{KapulkinEA12,WildSchroder21}.
A different generalization of notions of bisimulation occurs via
functor liftings along fibrations~\cite{DBLP:journals/iandc/HermidaJ98,DBLP:journals/mscs/HasuoKC18},
which have also been connected to modal logics~\cite{DBLP:journals/lmcs/KupkeR21,KomoridaEA21}.
The Kantorovich construction is generalized
there by the so-called codensity lifting~\cite{DBLP:journals/logcom/SprungerKDH21}. Heterogeneous
notions of bisimulation have not been considered there.

\section{Preliminaries: Coalgebras and Lax Extensions}
\label{sec:prelims}
We assume basic familiarity with category theory
(e.g.~\cite{AdamekEA90}). We proceed to recall requisite background on
relations, coalgebras, and lax extensions.

\subsubsection*{Relations} A \emph{relation} from a set~$X$ to a
set~$Y$ is a subset $r\subseteq X\times Y$, denoted
$r\colon\frel{X}{Y}$; we write $x\mathrel{r}y$ for $(x,y)\in r$. Given
$r\colon\frel{X}{Y}$ and $s\colon\frel{Y}{Z}$, we write $s\cdot r$ for
the applicative-order relational composite of~$r$ and~$s$, i.e.
\begin{equation*}
  s\cdot r=\{(x,z)\mid\exists y\in Y.\,x\mathrel{r}y\mathrel{s}z\}.
\end{equation*}
The \emph{join} of a family of relations is just its union. Relational
composition is \emph{join continuous} in both arguments, i.e.\ we have
$(\bigvee_{i\in I}s_i)\cdot r=\bigvee_{i\in I}(s_i\cdot r)$ and
$s\cdot (\bigvee_{i\in I}r_i)=\bigvee_{i\in I}(s\cdot r_i)$.  We
define the \emph{relational converse} $\rev{r}\colon\frel{Y}{X}$ by
\begin{math}
  \rev{r}=\{(y,x)\mid (x,y)\in r\}.
\end{math}
We identify a function $f\colon X\to Y$ with its graph, i.e.\ the
relation $\{(x,f(x))\mid x\in X\}$. For clarity, we sometimes write
\begin{math}
  \Delta_X=\{(x,x)\mid x\in X\}
\end{math}
for the diagonal relation on~$X$, i.e.\ the graph of the identity
function on~$X$, which is neutral for relational
composition. Functions $f\colon X\to Y$ are characterized by the
inequalities
\begin{equation*}
  \Delta_X\subseteq\rev{f}\cdot f\quad\text{\emph{(totality)}} \qquad
  f\cdot\rev{f}\subseteq\Delta_Y \quad \text{\emph{(univalence)}.}
\end{equation*}
Given a subset $A\subseteq X$ and a relation $r\colon\frel{X}{Y}$, we
write $r[A]=\{y\in Y\mid\exists x\in A.\,x\mathrel{r}y\}$ for the
\emph{relational image} of~$A$ under~$r$. We say that~$r$ is
\emph{right total} if $r[X]=Y$, and \emph{left total} if
$\rev{r}[Y]=X$.

\subsubsection*{Universal coalgebra} State-based systems of a wide
range of transition types can be usefully abstracted as coalgebras for
a given functor encapsulating the system type~\cite{Rutten00}. We work
more specifically over the category of sets, and thus model a system
type as a functor $F\colon\Set\to\Set$. Then, an \emph{$F$-coalgebra}
is a pair $(C,\gamma)$ consisting of a set~$C$ of \emph{states} and a
\emph{transition map} $\gamma\colon C\to FC$. Following tradition in
algebra, we often just write~$C$ for the coalgebra $(C,\gamma)$. We
think of~$C$ as a set of \emph{states}, and of~$\gamma$ as assigning
to each state~$c\in C$ a collection~$\gamma(c)$ of successor states,
structured according to~$F$. For instance, if~$F=\pow$ is the usual
(covariant) powerset functor, then~$\gamma$ assigns to each state a
\emph{set} of successors, so a $\pow$-coalgebra is just a standard
relational transition system. More generally, given a set~$\A$ of
\emph{labels}, $F$-coalgebras for the functor $F=\pow(\A\times(-))$
are $\A$-labelled transition systems ($\A$-LTS). On the other hand, we
write $\Dist$ for the (discrete) \emph{distribution functor}, which
assigns to a set~$X$ the set of discrete probability distributions
on~$X$ (which may be represented as functions $\alpha\colon X\to[0,1]$
such that $\sum_{x\in X}\alpha(x)=1$, extended to subsets
$A\subseteq X$ by $\alpha(A)=\sum_{x\in A}\alpha(x)$) and acts on maps
by taking direct images. Then, $\Dist$-coalgebras are probabilistic
transition systems, or Markov chains, while
$\Dist(\A\times(-))$-coalgebras are probabilistic $\A$-labelled
transition systems (probabilistic $\A$-LTS). We assume w.l.o.g.\ that
functors preserve injective maps~\cite{Barr93}, and then in fact that
subset inclusions are preserved.

A \emph{morphism} $f\colon C\to D$ of $F$-coalgebras $(C,\gamma)$,
$(D,\delta)$ is a map $f\colon C\to D$ such that
$Ff\cdot\gamma=\delta\cdot f$. States $c\in C$, $d\in D$ in
$F$-coalgebras $C,D$ are \emph{behaviourally equivalent} if there
exist an $F$-coalgebra~$(E,\epsilon)$ and morphisms $f\colon C\to E$,
$g\colon D\to E$ such that $f(x)=g(y)$. For instance, morphisms of
$\pow(\A\times(-))$-coalgebras are bounded morphisms of $\A$-LTS in
the usual sense (i.e.\ functional bisimulations), and behavioural
equivalence instantiates to the usual notion of (strong) bisimilarity
on LTS.

\subsubsection*{Lax extensions} As indicated in the introduction,
relational connectors are largely intended as a generalization of lax
extensions, which extend a single functor to act also on relations,
to settings where relations need to connect elements of different
functors. A \emph{lax extension}~$L$ (references are in
\autoref{sec:intro}) of a set functor~$F$ assigns to each relation
$r\colon\frel{X}{Y}$ a relation $Lr\colon\frel{FX}{FY}$ such that
\begin{align*}
  \text{(L1)} &\quad r_1 \subseteq r_2 \rightarrow Lr_1 \subseteq Lr_2 &&\text{\emph{(monotonicity)}} \\
  \text{(L2)} &\quad  Ls\cdot Lr \subseteq L(s\cdot r) &&\text{\emph{(lax functoriality)}}\\
  \text{(L3)} &\quad  Ff\subseteq Lf \text{ and } \rev{(Ff)}\subseteq L(\rev{f})
\end{align*}
for all sets $X,Y,Z$, and $r,r_1,r_2\colon\frel{X}{Y}$,
$s\colon\frel{Y}{Z}$, $f\colon X\to Y$.  These conditions imply
\emph{naturality}~\cite{Sea05,MartiVenema15}:
\begin{equation*}
  L(\rev{g}\cdot r\cdot f)=\rev{(Fg)}\cdot Lr\cdot Ff
\end{equation*}
for $r\colon\frel{X}{Y}$ and maps $f\colon X'\to X$,
$g\colon Y'\to Y$.  We say that $L$ \emph{preserves diagonals} if
\begin{math}
  L\Delta_X \subseteq \Delta_{FX}\text{ for all~$X$},
\end{math}
equivalently, 
  $L f \subseteq F f$ for all maps~$f$. 
Moreover,~$L$ \emph{preserves converse} if $L(\rev{r})=\rev{(Lr)}$ for
all~$r$. (Indeed, this property is often included in the definition of
lax extension~\cite{MartiVenema15}.)

Lax extensions induce notions of \emph{(bi)simulation}, that is, of
relations that witness behavioural equivalence in the sense recalled
above. Given a lax extension~$L$ of a functor~$F$, a relation
$r\colon\frel{C}{D}$ between $F$-coalgebras~$(C,\gamma)$, $(D,\delta)$
is an \emph{$L$-simulation} if $\delta\cdot r\le Lr\cdot\gamma$; that
is, whenever $c\mathrel{r}d$, then
$\gamma(c)\mathrel{Lr}\delta(d)$. Two states $c\in C$, $d\in D$ are
\emph{$L$-similar} if there exists an $L$-simulation
$r\colon\frel{C}{D}$ such that $c\mathrel{r}d$. If~$L$ preserves
converse, then the converse $\rev{r}$ of an $L$-simulation~$r$ is also
an~$L$-simulation and, hence, $L$-similarity is symmetric; one thus
speaks more appropriately of \emph{$L$-bisimulations} and
\emph{$L$-bisimilarity}. Notably, if~$L$ preserves converse and
diagonals, then $L$-bisimilarity coincides with behavioural
equivalence~\cite{MartiVenema12,MartiVenema15}. Every lax extension
can be induced from a choice of
modalities~\cite{MartiVenema12,MartiVenema15}; we return to this point
in \autoref{sec:kantorovich}. We recall only the most basic example:
\begin{example}\label{expl:LTS-lax}
  Let $\A$ be a set of labels, and let $F=\pow(\A\times(-))$ be the
  functor modelling $\A$-LTS as recalled above. We have a converse-
  and diagonal-preserving lax extension~$L$ of~$F$ given by
  $S\mathrel{Lr}T$ iff (i) for all $(l,x)\in S$, there is $(l,y)\in T$
  such that $x\mathrel{r}y$ (`\emph{forth}'), and (ii) for all
  $(l,y)\in T$, there is $(l,x)\in S$ such that $x\mathrel{r}y$
  (`\emph{back}'). Indeed,~$L$ is even a strict extension, i.e.\
  condition~(L2) holds in the stronger form $Ls\cdot Lr=L(s\cdot r)$
  for composable $s,r$ (such strict extensions exist, and then are
  unique, iff the underlying functor preserves weak
  pullbacks~\cite{Barr70,Trnkova80}).  $L$-bisimulations in the sense
  recalled above are precisely (strong) bisimulations of LTS in the
  standard sense.
\end{example}
\begin{remark}[Barr extension] \label{rem:barr} The above-mentioned
  strict extension~$L$ of a weak-pullback-preserving functor~$F$,
  often called the \emph{Barr extension}, is described as
  follows~\cite{Barr70}: A relation~$r\colon\frel{X}{Y}$ itself forms
  a set (a subset of $X\times Y$), and as such comes with two
  projection maps $\pi_1\colon r\to X$, $\pi_2\colon r\to Y$. Then,
  $Lr=F\pi_2\cdot\rev{(F\pi_1)}$. A slightly simpler example than
  \autoref{expl:LTS-lax} is the Barr extension~$L$ of the powerset
  functor~$\pow$, which coincides with the well-known Egli-Milner
  extension: For $r\colon\frel{X}{Y}$ and $S\in\pow(X)$,
  $T\in\pow(Y)$, we have $S\mathrel{Lr}T$ iff for every $x\in S$ there
  is $y\in T$ such that $x\mathrel{r}y$ and symmetrically.
\end{remark}

\section{Relational Connectors}
\label{sec:rel-conn}
\noindent We proceed to introduce relational connectors and associated
constructions.

\subsection{Axiomatics}
The main idea is that while a lax extension of
a functor~$F$ (\autoref{sec:prelims}) lifts relations between sets~$X$
and~$Y$ to relations between~$FX$ and~$FY$, a relational connector
between functors~$F$ and~$G$ lifts relations between sets~$X$ and~$Y$
to relations between~$FX$ and~$GY$. The axiomatics of relational
connectors is inspired by that of lax extensions, but forcibly
deviates in some respects:

\begin{defn}[Relational connector]\label{def:relational-connector}
  Let $F,G$ be set functors. A \emph{relational connector} (or
  occasionally just a \emph{connector}) $L\colon F\to G$ assigns to
  each relation $r\colon\frel{X}{Y}$ a relation
  \begin{equation*}
    Lr\colon \frel{FX}{GY}
  \end{equation*}
  such that the following conditions hold:
  \begin{enumerate}
  \item Whenever $r_1\subseteq r_2$ for $r_1,r_2\colon\frel{X}{Y}$,
    then $Lr_1\subseteq Lr_2$ (\emph{monotonicity}).
  \item Whenever $f\colon X'\to X$, $g\colon Y'\to Y$, and
    $r\colon\frel{X}{Y}$, then
    \begin{equation*}
      L(\rev{g}\cdot r\cdot f)=\rev{(Gg)}\cdot Lr\cdot Ff\qquad\text{(\emph{naturality}).}
    \end{equation*}
  \end{enumerate}
  We define an ordering on connectors $F\to G$ by $L\le K$ iff
  $Lr\subseteq Kr$ for all~$r$.
\end{defn}
In pointful notation, naturality says that for data as above and
$a\in FX'$, $b\in GY'$, we have
\begin{equation}\label{eq:nat-pointwise}
  Ff(a)\mathrel{Lr}Gg(b)\quad\text{iff}\quad a\mathrel{L(\rev{g}\cdot r\cdot f)}b.
\end{equation}
\begin{example}\label{expl:lts-connector}
  Let $F=\pow(\A\times(-))$, $G=\pow(\B\times(-))$ be the functors
  determining $\A$-LTS and $\B$-LTS as their coalgebras, respectively
  (\autoref{sec:prelims}). For $R\subseteq \A\times\B$, we define a
  relational connector~$L_R\colon F\to G$ by
  \begin{align*}
    S\mathrel{L_Rr}T \iff \forall (l,m)\in R.
    & \;\forall (l,x)\in S.\,\exists (m,y)\in T.\,x\mathrel{r}y\;\land\\
    & \;\forall (m,y)\in T.\,\exists (l,x)\in S.\,x\mathrel{r}y
  \end{align*}
  for $r\colon \frel{X}{Y}$. We will later use instances of this type
  of relational connector to transfer bisimilarity between $\A$-LTS
  and $\B$-LTS.
\end{example}
\noindent Of course, every lax extension of~$F$ is a relational
connector $F\to F$. In the axiomatics of relational connectors,
notable omissions in comparison to lax extensions include~(L2)
and~(L3), both of which in general just fail to type for relational
connectors. We will later discuss these conditions and further ones as
properties that a relational connector may or may not have, if
applicable. Note that we do retain an important consequence of these
properties, viz., naturality.

\subsection{Constructions}
Our perspective on relational connectors is partly driven by
constructions enabled by the axiomatics; maybe the most central ones
among these are composition and identities, introduced next.

\begin{defn}[Composition of relational connectors]
Given relational connectors $K\colon F\to G$, $L\colon G\to H$, we
define the \emph{composite} $L\cdot K\colon F\to H$ by
\begin{equation}\label{eq:comp}
  (L\cdot K)r=\bigvee_{r=s\cdot t}Ls\cdot Kt\quad\text{for $r\colon \frel{X}{Z}$},
\end{equation}
where the join is over all $t\colon \frel{X}{Y}$, $s\colon\frel{Y}{Z}$
such that $s\cdot t=r$, with~$Y$ ranging over all sets (see however
\autoref{thm:comp-couniv} and~\autoref{lem:comp-subset}).
\end{defn}
\begin{lemma}\label{lem:comp}
  Given relational connectors $K\colon F\to G$, $L\colon G\to H$, the
  composite $L\cdot K\colon F\to H$ is a relational connector.
\end{lemma}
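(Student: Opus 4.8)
The plan is to verify the two defining conditions of \autoref{def:relational-connector} for $L\cdot K$, namely monotonicity and naturality. Throughout I will use that relational composition is join continuous in both arguments, together with two special instances of the naturality of $K$ and $L$ obtained by setting one of the two maps to an identity: $\rev{(Hg)}\cdot Ls=L(\rev g\cdot s)$ and $Kt\cdot Ff=K(t\cdot f)$. For \emph{naturality}, fix $f\colon X'\to X$, $g\colon Z'\to Z$ and $r\colon\frel{X}{Z}$. The inclusion ``$\supseteq$'' is routine: expanding $(L\cdot K)r$ and pushing the functions $\rev{(Hg)}$ and $Ff$ into the join by join continuity, I rewrite
\[
  \rev{(Hg)}\cdot (L\cdot K)r\cdot Ff=\bigvee_{r=s\cdot t}L(\rev g\cdot s)\cdot K(t\cdot f),
\]
and since $(\rev g\cdot s)\cdot(t\cdot f)=\rev g\cdot r\cdot f$ whenever $s\cdot t=r$, every term on the right is a summand of $(L\cdot K)(\rev g\cdot r\cdot f)$, giving ``$\supseteq$''.

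The reverse inclusion ``$\subseteq$'' is the crux. Working pointwise via \eqref{eq:nat-pointwise}, it amounts to this: given a decomposition $\rev g\cdot r\cdot f=s'\cdot t'$ through a set $W$ and a witness $e\in GW$ with $a'\mathrel{Kt'}e$ and $e\mathrel{Ls'}c'$, I must exhibit a decomposition $r=s\cdot t$ through some set $Y$ and a $d\in GY$ with $Ff(a')\mathrel{Kt}d$ and $d\mathrel{Ls}Hg(c')$. The obstacle is that the naive choice $t=t'\cdot\rev f$, $s=g\cdot s'$ only yields $s\cdot t=g\cdot\rev g\cdot r\cdot f\cdot\rev f\subseteq r$, possibly a proper subrelation. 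I resolve this by enlarging the intermediate set to the disjoint union $Y=W\dotcup r$, where $r$ is tabulated by its projections $\pi_1\colon r\to X$, $\pi_2\colon r\to Z$ (so $r=\pi_2\cdot\rev{\pi_1}$). On the $W$-summand I set the components of $t$ and $s$ to $t'\cdot\rev f$ and $g\cdot s'$, and on the $r$-summand to $\rev{\pi_1}$ and $\pi_2$. Since composition through a disjoint union splits as a union, $s\cdot t=(g\cdot\rev g\cdot r\cdot f\cdot\rev f)\cup r=r$ exactly. Taking $d=G\iota(e)$ for the inclusion $\iota\colon W\hookrightarrow Y$, one checks $t'\subseteq\rev\iota\cdot t\cdot f$ and $s'\subseteq\rev g\cdot s\cdot\iota$ by construction, so monotonicity and naturality of $K$ and $L$ convert $a'\mathrel{Kt'}e$ into $Ff(a')\mathrel{Kt}d$ and $e\mathrel{Ls'}c'$ into $d\mathrel{Ls}Hg(c')$.

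For \emph{monotonicity}, suppose $r_1\subseteq r_2\colon\frel{X}{Z}$ and fix a summand $Ls\cdot Kt$ of $(L\cdot K)r_1$, so $s\cdot t=r_1$ through some set $Y$. Using the same augmentation, I pass to $Y'=Y\dotcup r_2$ and extend $t,s$ by the canonical tabulation $\rev{\pi_1},\pi_2$ of $r_2$ on the new summand, obtaining $t'',s''$ with $s''\cdot t''=r_1\cup r_2=r_2$. Writing $\iota\colon Y\hookrightarrow Y'$ for the inclusion, naturality together with univalence of $G\iota$ gives $G\iota\cdot Kt\subseteq Kt''$ and $Ls\cdot\rev{(G\iota)}\subseteq Ls''$, whence, using totality of $G\iota$,
\[
  Ls\cdot Kt\subseteq Ls\cdot\rev{(G\iota)}\cdot G\iota\cdot Kt\subseteq Ls''\cdot Kt''\subseteq (L\cdot K)r_2.
\]
As this holds for every summand, $(L\cdot K)r_1\subseteq (L\cdot K)r_2$.

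I expect the decomposition-lifting step in the ``$\subseteq$'' direction of naturality to be the main obstacle. The key realization is that an arbitrary decomposition of $\rev g\cdot r\cdot f$ cannot be transported to an \emph{exact} decomposition of $r$ by pre- and post-composition with $f$ and $g$ alone, and that augmenting the intermediate set by the canonical tabulation $r=\pi_2\cdot\rev{\pi_1}$ is precisely what recomposes to $r$ on the nose while preserving the witness; the very same trick then handles monotonicity.
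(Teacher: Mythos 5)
Your proposal is correct, and it is worth noting where it coincides with and where it departs from the paper's argument. The monotonicity part is essentially the paper's proof: the paper also enlarges the intermediate set by a disjoint tabulation (it adjoins only $r'\setminus r$, with middle elements the pairs themselves, rather than all of $r_2$ as you do) and transports the witness along the inclusion $\iota\colon Y\into Y'$ using naturality of $K$ and $L$; like yours, that argument needs only naturality of $K,L$ plus totality/univalence of the injection, not their monotonicity. The ``$\supseteq$'' half of naturality is also the same computation in both proofs. The genuine divergence is in the ``$\subseteq$'' half of naturality. The paper does \emph{not} build an exact factorization there: it inserts $\rev{f}\cdot f$ and $\rev{g}\cdot g$ by totality and monotonicity of $K$ and $L$, moves $Ff$ and $Hg$ outward by naturality, observes that the resulting middle relation satisfies $g\cdot s\cdot t\cdot\rev{f}=g\cdot\rev{g}\cdot r\cdot f\cdot\rev{f}\subseteq r$ by univalence, and then closes the gap by invoking \emph{monotonicity of the composite $L\cdot K$}, which for this reason must be proved first. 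You instead re-deploy the tabulation-augmentation trick a second time, passing to $Y=W\dotcup r$ with components $t'\cdot\rev{f},\,\rev{\pi_1}$ and $g\cdot s',\,\pi_2$, so that the factorization of $r$ is exact on the nose and no appeal to properties of $L\cdot K$ itself is needed. Both routes are sound; yours buys logical independence and uniformity of the two verifications (one construction does all the work), while the paper's buys a shorter naturality step at the price of an ordering dependency between the two halves of the proof. Your claims $t'\subseteq\rev{\iota}\cdot t\cdot f$ and $s'\subseteq\rev{g}\cdot s\cdot\iota$ are correct (equality can fail when $f$ or $g$ is non-injective, but the inclusion plus monotonicity of $K$ and $L$ is exactly what is needed), and the identity $s\cdot t=(g\cdot\rev{g}\cdot r\cdot f\cdot\rev{f})\cup r=r$ follows from univalence just as you say.
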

\begin{proof}[sketch]
  \emph{Monotonicity:} Let $r\subseteq r'\colon \frel{X}{Z}$. If
  $a\;(L\cdot K)r\;c$ is witnessed by a factorization $r=s\cdot t$
  where $t\colon\frel{X}{Y}$, $s\colon\frel{Y}{Z}$, then
  $a\;(L\cdot K)r'\;c$ is witnessed by the factorization
  $r'=s'\cdot t'$ where $t'\colon \frel{X}{Y'}$,
  $s'\colon \frel{Y'}{Z}$ with $Y'=Y\cup (r'\setminus r)$ (w.l.o.g.\ a
  disjoint union) and
  \begin{equation*}
    t'=t\cup\{(x,(x,z))\mid (x,z)\in r'\setminus r\}\qquad
    s'=s\cup\{((x,z),z)\mid (x,z)\in r'\setminus r\}.
  \end{equation*}
  Remarkably, the further proof uses naturality (w.r.t.\ $Y\into Y'$)
  but not monotonicity of~$K$ and~$L$.

  \emph{Naturality:}
  $(L\cdot K)(\rev{g}\cdot r\cdot f)=\rev{(Hg)}\cdot(L\cdot K)r\cdot
  Ff$ is shown using naturality and monotonicity of~$K$ and $L$, monotonicity of
  $L\cdot K$, and totality and univalence of $f$ and $g$. \qed
\end{proof}
\noindent As an immediate consequence of monotonicity of composite
relational connectors, we have the following alternative description
of composition:
\begin{lemma}\label{lem:comp-subset}
  Given relational connectors $K\colon F\to G$, $L\colon G\to H$, we
  have
  \begin{equation*}\textstyle
    (L\cdot K)r=\bigvee_{r\supseteq s\cdot t}Ls\cdot Kt\quad\text{for $r\colon \frel{X}{Z}$}
  \end{equation*}
  where the join is over all $t\colon \frel{X}{Y}$,
  $s\colon\frel{Y}{Z}$ such that $r\supseteq s\cdot t$, with~$Y$
  ranging over all sets.
\end{lemma}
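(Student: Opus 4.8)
The plan is to prove the two inclusions between $(L\cdot K)r$ and the relaxed join $\bigvee_{r\supseteq s\cdot t}Ls\cdot Kt$ separately, relying only on the definition of composition together with the monotonicity of $L\cdot K$ already established in \autoref{lem:comp}.

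For the inclusion $(L\cdot K)r\subseteq\bigvee_{r\supseteq s\cdot t}Ls\cdot Kt$, I would observe that every exact factorization $r=s\cdot t$ is in particular a factorization with $r\supseteq s\cdot t$. Hence each summand $Ls\cdot Kt$ occurring in the defining join for $(L\cdot K)r$ also occurs in the relaxed join, and the inclusion is immediate since $\bigvee$ is just union.

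The converse inclusion is where the relaxation actually has to be absorbed. Here I would fix an arbitrary factorization $t\colon\frel{X}{Y}$, $s\colon\frel{Y}{Z}$ with $s\cdot t\subseteq r$ and set $r'=s\cdot t$. Then $r'=s\cdot t$ is an \emph{exact} factorization of $r'$, so the definition of composition gives $Ls\cdot Kt\subseteq(L\cdot K)r'$. Since $r'\subseteq r$ and $L\cdot K$ is monotone (\autoref{lem:comp}), we obtain $(L\cdot K)r'\subseteq(L\cdot K)r$, and therefore $Ls\cdot Kt\subseteq(L\cdot K)r$. As this holds for every factorization contributing to the relaxed join, taking the join yields $\bigvee_{r\supseteq s\cdot t}Ls\cdot Kt\subseteq(L\cdot K)r$, and combining the two inclusions gives the claimed equality.

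There is no genuine obstacle in this argument; the only point that needs care is that each relaxed factorization is reduced to an exact one by factoring through $r'=s\cdot t$ rather than through $r$ itself, so that the definition of composition applies verbatim while monotonicity of the composite connector does the remaining work.
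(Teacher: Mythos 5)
Your proof is correct and follows essentially the same route as the paper: the paper's one-line argument is exactly your key step, namely $Ls\cdot Kt\subseteq (L\cdot K)(s\cdot t)\subseteq (L\cdot K)r$, using the definition of composition on the exact factorization of $s\cdot t$ and then monotonicity of $L\cdot K$ from \autoref{lem:comp}. The only difference is presentational: you spell out the trivial inclusion (every exact factorization is a relaxed one) which the paper leaves implicit.
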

In order to compute composites of relational connectors, the following
observation is sometimes useful.
\begin{defn}\label{def:couniv-decomp}
  The \emph{couniversal factorization} $r=s\cdot t$ of a relation
  $r\colon\frel{X}{Z}$ is given by
  \begin{align*}
    Y & =\{(A,B)\in\pow(X)\times\pow(Z)\mid A\times B\subseteq r\}\\
    t & =\{(x,(A,B))\mid x\in A\}\colon\frel{X}{Y}\\
    s & =\{((A,B),z)\mid z\in B\}\colon\frel{Y}{Z}.
  \end{align*}
\end{defn}
\begin{lemma}\label{lem:couniv-decomp}
  Let $s\colon\frel{Y}{Z}$, $t\colon\frel{X}{Y}$ be the couniversal
  factorization of $r\colon\frel{X}{Z}$. Then indeed $r=s\cdot t$, and
  for every factorization $r=s'\cdot t'$ of~$r$ into
  $s'\colon\frel{Y'}{Z}$, $t'\colon\frel{X}{Y'}$, there is a map
  $f\colon Y'\to Y$ such that $s' = s \cdot f$ and
  $t' = \rev{f} \cdot t$.
\end{lemma}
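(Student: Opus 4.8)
The plan is to read off the mediating map $f$ directly from the data of the given factorization $r=s'\cdot t'$, exploiting that the elements of the couniversal carrier $Y$ are precisely the sub-rectangles $A\times B\subseteq r$. Concretely, I would send each $y'\in Y'$ to the pair consisting of its set of $t'$-predecessors and its set of $s'$-successors: for $y'\in Y'$ put
\[
  A_{y'}=\{x\in X\mid x\mathrel{t'}y'\}=\rev{t'}[\{y'\}],\qquad
  B_{y'}=\{z\in Z\mid y'\mathrel{s'}z\}=s'[\{y'\}],
\]
and define $f(y')=(A_{y'},B_{y'})$. This is manifestly a total, single-valued assignment, so $f$ is a genuine map as soon as it lands in $Y$.

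First I would check well-definedness, i.e.\ $f(y')\in Y$, which amounts to $A_{y'}\times B_{y'}\subseteq r$: if $x\in A_{y'}$ and $z\in B_{y'}$, then $x\mathrel{t'}y'\mathrel{s'}z$, whence $x\mathrel{(s'\cdot t')}z$, and since $s'\cdot t'=r$ this gives $x\mathrel{r}z$. (Only the inclusion $s'\cdot t'\subseteq r$ is needed here.)

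It then remains to verify the two equalities $s'=s\cdot f$ and $t'=\rev{f}\cdot t$, both of which unfold directly. Because $f$ is (the graph of) a function, $y'\mathrel{f}y$ holds iff $y=f(y')$; hence $s\cdot f=\{(y',z)\mid f(y')\mathrel{s}z\}$, and $f(y')=(A_{y'},B_{y'})\mathrel{s}z$ means exactly $z\in B_{y'}$, i.e.\ $y'\mathrel{s'}z$, so $s\cdot f=s'$. Symmetrically, $\rev{f}\cdot t=\{(x,y')\mid x\mathrel{t}f(y')\}$, and $x\mathrel{t}f(y')$ means $x\in A_{y'}$, i.e.\ $x\mathrel{t'}y'$, so $\rev{f}\cdot t=t'$.

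The computation is entirely routine; the only point that needs care is bookkeeping with the applicative-order composition convention and the converse, so that the $A$-component of $f(y')$ pairs with $t$ via $\rev{f}\cdot t$ and the $B$-component with $s$ via $s\cdot f$. There is thus no substantial obstacle; I would note in passing that neither equality uses the inclusion $r\subseteq s'\cdot t'$, which together with $s'\cdot t'\subseteq r$ only serves to justify speaking of \emph{the} factorization $r=s'\cdot t'$.
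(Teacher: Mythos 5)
Your proof is correct and is essentially the paper's own argument: you define $f(y')=(\rev{(t')}[\{y'\}],\,s'[\{y'\}])$, observe that $s'\cdot t'\subseteq r$ guarantees $f(y')\in Y$, and verify $s'=s\cdot f$ and $t'=\rev{f}\cdot t$ by direct unfolding of the definitions of $s$ and $t$. The additional remarks (that well-definedness needs only one inclusion, and that univalence of $f$ collapses the existential in the composites) are accurate refinements of the same argument, so nothing further is needed.
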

\begin{theorem}\label{thm:comp-couniv}
  Let $K\colon F\to G$, $L\colon G\to H$ be relational connectors, and
  let $r=s\cdot t$ be the couniversal factorization of
  $r\colon\frel{X}{Z}$. Then
  \begin{equation*}
    (L\cdot K)r=Ls\cdot Kt.
  \end{equation*}
\end{theorem}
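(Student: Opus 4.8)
The plan is to prove the two inclusions separately, the nontrivial one reducing every factorization appearing in the defining join to the couniversal one by means of the mediating map supplied by \autoref{lem:couniv-decomp}.

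First, the inclusion $Ls\cdot Kt\subseteq(L\cdot K)r$ is immediate: since $r=s\cdot t$ is itself one of the factorizations ranged over in the join defining $(L\cdot K)r$ in~\eqref{eq:comp}, the term $Ls\cdot Kt$ is one of the joinands. For the reverse inclusion $(L\cdot K)r\subseteq Ls\cdot Kt$, it suffices to show that each joinand is contained in $Ls\cdot Kt$; that is, for an arbitrary factorization $r=s'\cdot t'$ with $t'\colon\frel{X}{Y'}$ and $s'\colon\frel{Y'}{Z}$, I would show $Ls'\cdot Kt'\subseteq Ls\cdot Kt$. By \autoref{lem:couniv-decomp}, there is a map $f\colon Y'\to Y$ with $s'=s\cdot f$ and $t'=\rev{f}\cdot t$. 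I then use naturality of the two connectors to strip off the map~$f$: applying naturality of~$K$ to the relation~$t$ with the identity~$\id_X$ and the map~$f$ (recalling $F\id_X=\Delta_{FX}$) yields $Kt'=K(\rev{f}\cdot t)=\rev{(Gf)}\cdot Kt$, while applying naturality of~$L$ to the relation~$s$ with the map~$f$ and the identity~$\id_Z$ (recalling $H\id_Z=\Delta_{HZ}$) yields $Ls'=L(s\cdot f)=Ls\cdot Gf$.

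Combining these two identities and using associativity of relational composition gives
\begin{equation*}
  Ls'\cdot Kt'=Ls\cdot\bigl(Gf\cdot\rev{(Gf)}\bigr)\cdot Kt.
\end{equation*}
Here $Gf\colon GY'\to GY$ is a function, since functors preserve maps, so univalence of~$Gf$ gives $Gf\cdot\rev{(Gf)}\subseteq\Delta_{GY}$; by monotonicity of composition and neutrality of the diagonal it follows that $Ls'\cdot Kt'\subseteq Ls\cdot\Delta_{GY}\cdot Kt=Ls\cdot Kt$. Taking the join over all factorizations then yields $(L\cdot K)r\subseteq Ls\cdot Kt$, which together with the first inclusion proves the claim.

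I expect the only real subtlety to be bookkeeping: getting the variances right when instantiating the naturality square, namely which map sits on the source side and which on the target side, and which functor ($G$ or $H$) decorates the converse term. The conceptual crux is recognizing that couniversality is exactly what supplies a \emph{single} mediating map~$f$ for an otherwise arbitrary factorization, thereby making naturality applicable; once~$f$ is available, the collapse $Gf\cdot\rev{(Gf)}\subseteq\Delta_{GY}$ via univalence is what discards the auxiliary object~$Y'$ in favour of the couniversal~$Y$.
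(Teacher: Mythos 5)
Your proof is correct and follows essentially the same route as the paper's own argument: the easy inclusion comes directly from the defining join in~\eqref{eq:comp}, and the hard inclusion is obtained by using \autoref{lem:couniv-decomp} to produce the mediating map~$f$, then applying naturality of~$K$ and~$L$ to rewrite $Ls'\cdot Kt'$ as $Ls\cdot Gf\cdot\rev{(Gf)}\cdot Kt$ and collapsing $Gf\cdot\rev{(Gf)}\subseteq\Delta_{GY}$ by univalence. The only difference is presentational: you spell out the two naturality instantiations explicitly, whereas the paper compresses them into a single chain of equalities.
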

\noindent
We proceed to establish that the composition operation defined above
equips relational connectors with the structure of a quasicategory
(i.e.\ overlarge category). We first check associativity:
\begin{lemma}\label{lem:assoc}
  Let $K\colon F\to G$, $L\colon G\to H$, and $M\colon H\to V$ be
  relational connectors. Then $(M\cdot K)\cdot L=M\cdot (K\cdot L)$.
\end{lemma}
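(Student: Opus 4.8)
The plan is to show that, evaluated at any relation $r\colon\frel{X}{W}$, both associations of the threefold composite $M\cdot L\cdot K$ reduce to one and the same expression, namely the join
\[
\bigvee_{r = s\cdot q\cdot t} Ms\cdot Lq\cdot Kt,
\]
taken over all factorizations of $r$ into composable legs $t\colon\frel{X}{Z}$, $q\colon\frel{Z}{Y}$, $s\colon\frel{Y}{W}$, with the intermediate sets $Y,Z$ ranging over all sets. (Note that the type-correct statement is $(M\cdot L)\cdot K=M\cdot(L\cdot K)$, matching the given typings $K\colon F\to G$, $L\colon G\to H$, $M\colon H\to V$.) The only tool needed beyond the definition \eqref{eq:comp} is join continuity of relational composition in both arguments, recalled in \autoref{sec:prelims}; I will use it in the form that holds for arbitrary, possibly proper-class-indexed, joins, which is immediate since joins of relations are unions and both $s\cdot(-)$ and $(-)\cdot t$ commute with arbitrary unions.

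For the right-associated composite, unfolding the outer composition gives $\big(M\cdot(L\cdot K)\big)r=\bigvee_{r=s\cdot p}Ms\cdot (L\cdot K)p$, and unfolding the inner one gives $(L\cdot K)p=\bigvee_{p=q\cdot t}Lq\cdot Kt$. Substituting and pushing the fixed factor $Ms$ through the inner join by join continuity yields $Ms\cdot(L\cdot K)p=\bigvee_{p=q\cdot t}Ms\cdot Lq\cdot Kt$. Since a factorization $r=s\cdot p$ together with a factorization $p=q\cdot t$ is exactly the same datum as a threefold factorization $r=s\cdot q\cdot t$ (with $p$ recovered as $q\cdot t$), the two nested joins collapse to the single join displayed above; here I also use associativity of relational composition, so that $Ms\cdot Lq\cdot Kt$ is unambiguous.

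The left-associated composite is treated symmetrically: unfolding $\big((M\cdot L)\cdot K\big)r=\bigvee_{r=v\cdot t}(M\cdot L)v\cdot Kt$ and $(M\cdot L)v=\bigvee_{v=s\cdot q}Ms\cdot Lq$, then pushing the fixed factor $Kt$ through the inner join (again by join continuity, now in the other argument) gives $(M\cdot L)v\cdot Kt=\bigvee_{v=s\cdot q}Ms\cdot Lq\cdot Kt$, and the pair of factorizations $r=v\cdot t$, $v=s\cdot q$ is again the same datum as the threefold factorization $r=s\cdot q\cdot t$. Hence both composites equal the common join, which proves associativity.

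I do not expect a genuine obstacle here; the one point requiring care is the bookkeeping around the class-sized joins. Concretely, one should observe that each $Ms\cdot Lq\cdot Kt$ is a subset of the fixed set $FX\times VW$, so that the overlarge join is a legitimate subset of $FX\times VW$ by separation, and that the correspondence between nested and threefold factorizations is exact in both directions, so no terms are gained or lost when the joins are merged. As already noted for \autoref{lem:comp}, monotonicity of the connectors is not needed for this argument. An alternative, more computational route would invoke \autoref{thm:comp-couniv} to replace each binary composite by a single term evaluated at a couniversal factorization, but the join-continuity argument above is cleaner for the threefold case.
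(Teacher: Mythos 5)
Your proof is correct and takes essentially the same approach as the paper's: both unfold the two nested joins, push the fixed outer factor through the inner join by join continuity, and identify nested factorizations with threefold factorizations, so that both associations collapse to the common join $\bigvee_{r=s\cdot q\cdot t}Ms\cdot Lq\cdot Kt$. Your side remarks—that the statement's conclusion should read $(M\cdot L)\cdot K=M\cdot(L\cdot K)$ to match the given typings, and that the class-indexed joins are legitimate since every term lives in the fixed set $FX\times VW$—are accurate but do not alter the route.
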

\noindent The straightforward proof uses join continuity of relational
composition. We next construct identities:
\begin{defn}[Identity relational connectors]
  The \emph{identity relational connector} $\idcon{F}\colon F\to F$
  on a set functor~$F$ is defined as follows. For $r\colon\frel{X}{Y}$,
  $b\in FX$, and $c\in FY$, we put $b\mathrel{\idcon{F}r} c$ iff for
  all set functors~$G$, all relational connectors~$L\colon G\to F$,
  all $s\colon\frel{Z}{X}$, and all $a\in GZ$,
  \begin{equation*}
    a\mathrel{Ls}b\quad\text{implies}\quad a\mathrel{L(r \cdot s)}c.
  \end{equation*}
\end{defn}
(This definition is highly impredicative, but we will later give a
characterization of~$\idcon{F}$ that eliminates quantification over
relational connectors.) We will show that~$\idcon{F}$ is neutral
w.r.t.\ composition of relational connectors. We first note that, as
an immediate consequence of the definition,
\begin{equation}
  \label{eq:idcon-diag}
  \Delta_{FX}\subseteq\idcon{F}\Delta_X\qquad\text{for all~$X$.}
\end{equation}
\begin{lemma}\label{lem:id}
  For each functor~$F$, $\idcon{F}$ is a relational connector.
\end{lemma}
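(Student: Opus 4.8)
The plan is to verify the two defining axioms of a relational connector for $\idcon{F}$, namely monotonicity and naturality, working directly from the impredicative definition. The crucial observation is that although the definition quantifies over all connectors $L\colon G\to F$, each proof obligation can be discharged by manipulating a single such arbitrary $L$, exploiting only the properties that $L$ itself enjoys as a connector (its own monotonicity and naturality) together with the totality and univalence inequalities for the functions involved. As a preliminary step I would record the pointwise reformulation of the target naturality equation: unfolding relational composition, the identity $\idcon{F}(\rev{g}\cdot r\cdot f)=\rev{(Fg)}\cdot\idcon{F}r\cdot Ff$ is equivalent to the biconditional
\begin{equation*}
  b'\mathrel{\idcon{F}(\rev{g}\cdot r\cdot f)}c'\quad\Longleftrightarrow\quad Ff(b')\mathrel{\idcon{F}r}Fg(c')
\end{equation*}
for all $b'\in FX'$, $c'\in FY'$, which is precisely \eqref{eq:nat-pointwise} instantiated at $\idcon{F}$; so it suffices to prove this equivalence.

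For monotonicity, suppose $r_1\subseteq r_2\colon\frel{X}{Y}$ and $b\mathrel{\idcon{F}r_1}c$. Given arbitrary $G$, $L\colon G\to F$, $s\colon\frel{Z}{X}$ and $a\in GZ$ with $a\mathrel{Ls}b$, the definition yields $a\mathrel{L(r_1\cdot s)}c$; since $r_1\cdot s\subseteq r_2\cdot s$, monotonicity of $L$ gives $a\mathrel{L(r_2\cdot s)}c$, so $b\mathrel{\idcon{F}r_2}c$. This part is routine.

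The naturality equivalence is the real work, and I would prove each implication by transferring $Ls$-statements between the sets $X$ and $X'$ along $f$. For the direction from right to left, assume $Ff(b')\mathrel{\idcon{F}r}Fg(c')$ and take $L,s\colon\frel{Z}{X'},a$ with $a\mathrel{Ls}b'$. Pointwise naturality of $L$ together with $s\subseteq\rev{f}\cdot f\cdot s$ (totality of $f$) and monotonicity of $L$ yields $a\mathrel{L(f\cdot s)}Ff(b')$; feeding this into the hypothesis with the relation $f\cdot s\colon\frel{Z}{X}$ gives $a\mathrel{L(r\cdot f\cdot s)}Fg(c')$, and post-composing with $\rev{(Fg)}$ via naturality of $L$ (the instance $L(\rev{g}\cdot r_0)=\rev{(Fg)}\cdot Lr_0$) delivers $a\mathrel{L((\rev{g}\cdot r\cdot f)\cdot s)}c'$, as required. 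The reverse direction is dual: from $a\mathrel{Ls}Ff(b')$ with $s\colon\frel{Z}{X}$, pointwise naturality gives $a\mathrel{L(\rev{f}\cdot s)}b'$; the hypothesis $b'\mathrel{\idcon{F}(\rev{g}\cdot r\cdot f)}c'$ then yields $a\mathrel{L((\rev{g}\cdot r\cdot f)\cdot\rev{f}\cdot s)}c'$; and univalence of $f$ (so $f\cdot\rev{f}\subseteq\Delta_X$) together with monotonicity and naturality of $L$ collapses this to $a\mathrel{L(r\cdot s)}Fg(c')$.

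The step I expect to be the main obstacle is exactly this transfer of membership statements across $f$: a connector's naturality axiom only lets one push a function through on the source side and the converse of a function through on the target side, so there is no direct way to replace $b'$ by $Ff(b')$ inside an $Ls$-statement. The right move is to route everything through totality ($\Delta_{X'}\subseteq\rev{f}\cdot f$) and univalence ($f\cdot\rev{f}\subseteq\Delta_X$) and to absorb the resulting inclusions using monotonicity of $L$, which is why both halves of naturality hinge on these two inequalities rather than on any special feature of $\idcon{F}$ itself. I would also emphasise that, despite appearances, the impredicative quantifier over connectors is never genuinely unwound: every instance is discharged by re-using the very connector $L$ supplied by the proof obligation, so the argument stays entirely finitary.
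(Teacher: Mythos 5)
Your proof is correct and follows essentially the same route as the paper's: monotonicity is dispatched exactly as you do, and naturality is proved via the pointwise reformulation \eqref{eq:nat-pointwise}, with each direction combining naturality of the arbitrary connector $L$, the defining property of $\idcon{F}$, and totality/univalence of $f$ absorbed through monotonicity of $L$. The only differences are cosmetic: you phrase naturality as a biconditional rather than as two inclusions, and treat the two implications in the opposite order.
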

The proof of naturality relies in particular on monotonicity of
relational connectors in combination with totality and univalence of
maps. We show next that identity connectors do actually act as
identities under composition:

\begin{lemma}\label{lem:id-comp}
  For each $L\colon G\to F$, we have
  $L=\idcon{F}\cdot L=L\cdot\idcon{G}$.
\end{lemma}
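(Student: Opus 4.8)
The plan is to prove the two identity laws by first showing that $\idcon{F}$ is a \emph{left} identity for composition, and then deducing, via an auxiliary \emph{co-identity}, that $\idcon{G}$ is also a \emph{right} identity. The two inclusions $L\le\idcon{F}\cdot L$ and $L\le L\cdot\idcon{G}$ are immediate from \eqref{eq:idcon-diag}: using the trivial factorizations $r=\Delta_Z\cdot r$ and $r=r\cdot\Delta_X$, the summands $\idcon{F}\Delta_Z\cdot Lr\supseteq\Delta_{FZ}\cdot Lr=Lr$ and $Lr\cdot\idcon{G}\Delta_X\supseteq Lr$ occur in the respective defining joins. For the reverse inclusion $\idcon{F}\cdot L\le L$ I would unfold the composite and reduce to showing $\idcon{F}s\cdot Lt\subseteq L(s\cdot t)$ for every factorization $r=s\cdot t$ with $t\colon\frel{X}{Y}$, $s\colon\frel{Y}{Z}$. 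This is precisely the defining property of $\idcon{F}s$ instantiated at the connector $L\colon G\to F$, the relation $t$, and $a\in GX$: from $a\mathrel{Lt}b$ and $b\mathrel{\idcon{F}s}c$ we read off $a\mathrel{L(s\cdot t)}c$. Joining over all factorizations and noting $L(s\cdot t)=Lr$ yields $\idcon{F}\cdot L=L$; the identical argument with $F$ replaced by $G$ shows $\idcon{G}\cdot K=K$ for every $K\colon H\to G$, i.e.\ $\idcon{G}$ is a left identity.

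The remaining law $L\cdot\idcon{G}=L$ is the crux, and I expect it to be the main obstacle: the definition of $\idcon{G}$ quantifies over connectors \emph{into} $G$, so it bounds $\idcon{G}s\cdot Kt$ but not $Ls\cdot\idcon{G}t$, where $L\colon G\to F$ runs \emph{out} of $G$. To sidestep this I would introduce the dual co-identity $\calJ\colon G\to G$, setting $p\mathrel{\calJ t}q$ (for $t\colon\frel{X}{Y}$, $p\in GX$, $q\in GY$) iff for all functors $H$, all connectors $M\colon G\to H$, all $v\colon\frel{Y}{W}$ and all $d\in HW$, $q\mathrel{Mv}d$ implies $p\mathrel{M(v\cdot t)}d$. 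Mirroring \autoref{lem:id}, $\calJ$ is a connector, and mirroring the previous paragraph it is a \emph{right} identity: $\Delta_{GX}\subseteq\calJ\Delta_X$ gives $M\le M\cdot\calJ$, while the defining property of $\calJ t$ instantiated at $M$, at $v:=s$, and at $d$ yields $Ms\cdot\calJ t\subseteq M(s\cdot t)$, whence $M\cdot\calJ=M$ for every $M\colon G\to H$.

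Finally I would tie the two together by a standard identity argument. Since $\idcon{G}$ and $\calJ$ are both endo-connectors of $G$, they compose; applying the left-identity law to $K:=\calJ$ gives $\idcon{G}\cdot\calJ=\calJ$, and applying the right-identity law to $M:=\idcon{G}$ gives $\idcon{G}\cdot\calJ=\idcon{G}$, so $\idcon{G}=\calJ$. Hence $\idcon{G}$ inherits the right-identity property, giving $L\cdot\idcon{G}=L$ and, together with $\idcon{F}\cdot L=L$, the claim. The genuinely delicate point is the verification that $\calJ$ is \emph{natural}---the dual of the corresponding step in \autoref{lem:id}, relying on monotonicity of connectors together with totality and univalence of maps; the rest is a direct unfolding of the definitions plus the observation that a left identity coinciding with a right identity is two-sided.
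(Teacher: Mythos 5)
Your proposal is correct and follows essentially the same route as the paper's proof: establish that $\idcon{F}$ is a left identity via its defining property plus~\eqref{eq:idcon-diag}, obtain right identities by the symmetric (dual) construction, and conclude that left and right identities coincide via $\idcon{G}=\idcon{G}\cdot\calJ=\calJ$. The only difference is presentational---you spell out the dual co-identity $\calJ$ explicitly, whereas the paper compresses this into the phrase ``by a symmetric argument''---and you correctly flag the key subtlety (the definition of $\idcon{G}$ only yields the left-identity law directly) that makes this detour necessary.
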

\begin{proof}[sketch]
  One shows, using~\eqref{eq:idcon-diag} inter alia, that~$\idcon{F}$
  is a left identity ($L=\idcon{F}\cdot L$). By a symmetric argument,
  composition of relational connectors also has right identities, and then the left
  and right identities are necessarily equal. \qed
\end{proof}
Relational connectors admit a natural notion of converse:
\begin{defn}[Converse, meet and product of relational connectors]
  The \emph{converse} $\rev{L}\colon G\to F$ of a relational connector
  $L\colon F\to G$ is given by
  \begin{equation*}
    \rev{L}r=\rev{(L\rev r)}\colon\frel{GX}{FY}
  \end{equation*}
  for $r\colon\frel{X}{Y}$. The \emph{meet} $L\cap K$ of relational
  connectors $L,K\colon F\to G$ is their componentwise intersection
  ($(L\cap K)r=Lr\cap Kr$).  For relational connectors
  $L_1 \colon F_1 \to G_1$ and $L_2 \colon F_2 \to G_2$, their
  \emph{product}
  $L_1 \times L_2 \colon F_1 \times F_2 \rightarrow G_1 \times G_2$ is
  given by
  \begin{equation*}
	  (a,b) \mathrel{(L_1 \times L_2) r} (c,d) \iff a \mathrel{L_1 r} c \text{ and } b \mathrel{L_2 r} d.
	\end{equation*}
\end{defn}
\begin{lemma}\label{lem:converse-meet-product}
	The converse, meet and product of relational connectors are again relational connectors.
\end{lemma}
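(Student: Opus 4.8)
The plan is to verify the two relational-connector axioms of \autoref{def:relational-connector}, monotonicity and naturality, for each of the three constructions. Monotonicity will be immediate in all three cases, since each construction is built from the given connectors by order-preserving operations: relational converse, binary intersection, and the conjunction of two membership conditions are all monotone. The real content is naturality, for which I expect to rely on three facts: the involutive, composition-reversing behaviour of converse ($\rev{(\rev r)}=r$ and $\rev{(s\cdot r)}=\rev r\cdot\rev s$); the observation that pre- and post-composition with the graph of a function admits the pointwise description $a\mathrel{(\rev h\cdot S\cdot f)}b$ iff $f(a)\mathrel{S}h(b)$ for functions $f,h$, which in particular makes such composition distribute over intersection; and the pointwise reformulation \eqref{eq:nat-pointwise} of naturality.

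For the converse $\rev L\colon G\to F$, monotonicity follows because $r_1\subseteq r_2$ gives $\rev{r_1}\subseteq\rev{r_2}$, hence $L\rev{r_1}\subseteq L\rev{r_2}$ by monotonicity of $L$, and taking converses preserves this inclusion. For naturality I would compute $\rev L(\rev g\cdot r\cdot f)=\rev{\big(L\,\rev{(\rev g\cdot r\cdot f)}\big)}=\rev{\big(L(\rev f\cdot\rev r\cdot g)\big)}$, then apply naturality of $L$ to the relation $\rev r\colon\frel{Y}{X}$ sitting in the middle (with $g$ on the right and $\rev f$ on the left), obtaining $L(\rev f\cdot\rev r\cdot g)=\rev{(Gf)}\cdot L\rev r\cdot Fg$; taking the converse of this expression and using the reversal law yields $\rev{(Fg)}\cdot\rev{(L\rev r)}\cdot Gf=\rev{(Fg)}\cdot\rev L r\cdot Gf$, which is exactly the naturality identity for a connector $G\to F$. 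The only thing to watch is that, since $\rev L$ goes $G\to F$, the functors $F$ and $G$ swap roles in the target identity.

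For the meet $L\cap K$, monotonicity is immediate from monotonicity of $L$ and $K$. For naturality I would start from $(L\cap K)(\rev g\cdot r\cdot f)=L(\rev g\cdot r\cdot f)\cap K(\rev g\cdot r\cdot f)$, rewrite each conjunct by naturality of $L$ and $K$ as $\rev{(Gg)}\cdot Lr\cdot Ff$ and $\rev{(Gg)}\cdot Kr\cdot Ff$, and then pull the common functions $Ff$ and $\rev{(Gg)}$ outside the meet, giving $\rev{(Gg)}\cdot(Lr\cap Kr)\cdot Ff=\rev{(Gg)}\cdot(L\cap K)r\cdot Ff$. This last step is exactly where the pointwise description $a\mathrel{(\rev h\cdot S\cdot f)}b\iff f(a)\mathrel{S}h(b)$ is used, applied to the functions $Ff$ and $Gg$, and it is the only point in the whole argument that is not completely mechanical.

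For the product $L_1\times L_2$, I would argue entirely pointwise, using that the product functor acts as $(F_1\times F_2)f=F_1f\times F_2f$ (and likewise for $G_1\times G_2$). Given $(a,b)\in (F_1\times F_2)X'$ and $(c,d)\in(G_1\times G_2)Y'$, the definition unfolds $(a,b)\mathrel{(L_1\times L_2)(\rev g\cdot r\cdot f)}(c,d)$ to $a\mathrel{L_1(\rev g\cdot r\cdot f)}c$ and $b\mathrel{L_2(\rev g\cdot r\cdot f)}d$; by the pointwise naturality \eqref{eq:nat-pointwise} for $L_1$ and $L_2$ this is equivalent to $F_1f(a)\mathrel{L_1 r}G_1g(c)$ and $F_2f(b)\mathrel{L_2 r}G_2g(d)$, i.e. $(F_1f(a),F_2f(b))\mathrel{(L_1\times L_2)r}(G_1g(c),G_2g(d))$, which by the pointwise description of composition with the functions $(F_1\times F_2)f$ and $(G_1\times G_2)g$ is precisely $(a,b)\mathrel{\rev{((G_1\times G_2)g)}\cdot(L_1\times L_2)r\cdot(F_1\times F_2)f}(c,d)$. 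I do not anticipate a genuine obstacle here; throughout, the only care required is the bookkeeping of the direction of converses and of which functor, $F$ or $G$, each function is transported along.
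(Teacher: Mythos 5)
Your proof is correct. For the converse and the meet it is essentially the paper's own argument: the same computation for the converse (rewrite $\rev{L}(\rev{g}\cdot r\cdot f)$ via the reversal laws, apply naturality of $L$ to $\rev{r}$ with the roles of $f$ and $g$ swapped, then take converses), and the same key observation for the meet, namely that sandwiching between graphs of \emph{functions} distributes over intersection, $\rev{(Gg)}\cdot(Lr\cap Kr)\cdot Ff=(\rev{(Gg)}\cdot Lr\cdot Ff)\cap(\rev{(Gg)}\cdot Kr\cdot Ff)$ --- a fact that fails for general relations and which the paper likewise flags as the one non-mechanical step. Where you genuinely diverge is the product: the paper does not verify it directly, but reformulates $(L_1\times L_2)r$ as $(\rev{(\pi_{1,G})}\cdot L_1r\cdot\pi_{1,F})\cap(\rev{(\pi_{2,G})}\cdot L_2r\cdot\pi_{2,F})$, where $\pi_{i,F},\pi_{i,G}$ are the projection natural transformations, and then invokes closure under meets (just established) together with \autoref{lem:transform} on composing connectors with natural transformations. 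Your route is instead a direct pointwise verification of naturality via \eqref{eq:nat-pointwise}, using that the product functor acts componentwise. Both are sound: the paper's version is more modular, exhibiting the product as a derived construction, at the cost of a forward reference to \autoref{lem:transform} (which the paper itself acknowledges and offers to eliminate by exactly the kind of direct check you perform); yours is self-contained and more elementary. One small caveat: your final step for the product also uses the graph-sandwich equivalence $a\mathrel{(\rev{h}\cdot S\cdot f)}b\iff f(a)\mathrel{S}h(b)$, so the meet is not quite ``the only point'' where that ingredient enters --- but it is the right tool in both places, and nothing in your argument breaks.
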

\noindent We record some expected properties of converse:
\begin{lemma}\label{lem:rev-comp}
  Converse is involutive ($\rev{(\rev L)}=L$) and monotone. Moreover,
  for relational connectors $K\colon F\to G$ and $L\colon G\to H$, we
  have
    \begin{equation*}
      \rev{(L\cdot K)}=\rev K\cdot\rev L.
    \end{equation*}
\end{lemma}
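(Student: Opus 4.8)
The plan is to reduce all three claims to two elementary facts about relational converse on plain relations, namely involutivity ($\rev{\rev a}=a$) together with the fact that converse distributes over joins and reverses composition, i.e.\ $\rev{(\bigvee_i a_i)}=\bigvee_i\rev{a_i}$ and $\rev{(b\cdot a)}=\rev a\cdot\rev b$. The one piece of bookkeeping worth isolating first is the identity $\rev{(Lt)}=\rev L(\rev t)$ for any relation~$t$ and connector~$L$: this is immediate from the definition of connector converse and involutivity of relational converse, since $\rev L(\rev t)=\rev{(L\rev{\rev t})}=\rev{(Lt)}$.

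With this in hand the first two claims are routine. For involutivity, I would compute for $r\colon\frel{X}{Y}$ that $\rev{(\rev L)}r=\rev{((\rev L)(\rev r))}=\rev{(\rev{(L\rev{\rev r})})}=\rev{(\rev{(Lr)})}=Lr$, using $\rev{\rev r}=r$ twice. For monotonicity, if $L\le K$ then $L(\rev r)\subseteq K(\rev r)$ for every~$r$, and applying the monotone relational converse yields $\rev L r=\rev{(L\rev r)}\subseteq\rev{(K\rev r)}=\rev K r$, i.e.\ $\rev L\le\rev K$.

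The substantive claim is $\rev{(L\cdot K)}=\rev K\cdot\rev L$; the types match, as both sides are connectors $H\to F$ sending $r\colon\frel{X}{Z}$ to a relation $\frel{HX}{FZ}$. I would expand the left-hand side directly: for $r\colon\frel{X}{Z}$,
\begin{align*}
  \rev{(L\cdot K)}r
  &=\rev{\bigl((L\cdot K)(\rev r)\bigr)}
   =\rev{\Bigl(\bigvee_{\rev r=s\cdot t}Ls\cdot Kt\Bigr)}
   =\bigvee_{\rev r=s\cdot t}\rev{(Ls\cdot Kt)}\\
  &=\bigvee_{\rev r=s\cdot t}\rev{(Kt)}\cdot\rev{(Ls)}
   =\bigvee_{\rev r=s\cdot t}\rev K(\rev t)\cdot\rev L(\rev s),
\end{align*}
where the join ranges over factorizations with $t\colon\frel{Z}{Y}$, $s\colon\frel{Y}{X}$, and where I have used distributivity of converse over joins, the contravariance $\rev{(b\cdot a)}=\rev a\cdot\rev b$, and finally $\rev{(Lt)}=\rev L(\rev t)$.

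It remains to identify this join with the definition of $\rev K\cdot\rev L$. Reparametrizing by $s'=\rev s\colon\frel{X}{Y}$ and $t'=\rev t\colon\frel{Y}{Z}$, contravariance of relational converse gives that $\rev r=s\cdot t$ holds iff $r=\rev t\cdot\rev s=t'\cdot s'$; since converse is a bijection on relations, the map $(s,t)\mapsto(\rev s,\rev t)$ is a bijection between factorizations of $\rev r$ through an intermediate set~$Y$ and factorizations of~$r$ through that same~$Y$. Hence the displayed join equals $\bigvee_{r=t'\cdot s'}\rev K(t')\cdot\rev L(s')$, which is exactly $(\rev K\cdot\rev L)r$ by the definition of composition (with $s'$ as the first and $t'$ as the second factor, fed respectively to $\rev L$ and $\rev K$). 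I expect no genuine obstacle beyond faithfully tracking the four converses and the reversal of factor order; the argument is purely relation-algebraic and never invokes the connector axioms themselves.
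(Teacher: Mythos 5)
Your proposal is correct and follows essentially the same route as the paper's proof: expand $\rev{(L\cdot K)}r$ via the definition of composition applied to $\rev r$, use distributivity of relational converse over joins and its contravariance with respect to composition, the identity $\rev{(Kt)}=\rev K(\rev t)$, and the bijective reparametrization of factorizations of $\rev r$ into factorizations of $r$. The only (immaterial) difference is the order of steps — you push the converse inside the join before reparametrizing, while the paper reparametrizes first — and you additionally write out the involutivity and monotonicity claims that the paper dismisses as straightforward.
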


\begin{remark}
  In view of the above properties, one may ask whether relational
  connectors form an overlarge allegory~\cite{FreydScedrov90}. We
  leave this question open for the moment; specifically, it is not
  clear that relational connectors satisfy the \emph{modular law}
  $(L\cdot K)\cap M \le L\cdot (K\cap (\rev{L}\cdot M))$.
\end{remark}

\begin{example}[Constructions of relational
  connectors]\label{expl:lts-comp}
  We can decompose the connector
  $L_R\colon\pow(\A\times(-))\to\pow(\B\times(-))$ from
  \autoref{expl:lts-connector} as follows. Define a further relational
  connector~$K_R\colon\pow(\A\times(-))\to\pow(\B\times(-))$ similarly
  as~$L_R$ but omit one of the directions, putting $S\mathrel{K_Rr}T$
  (for $S\in\pow(\A\times X)$, $T\in\pow(\B\times Y)$, and
  $r\colon\frel{X}{Y}$) iff for all $(l,m)\in R$ and $(l,x)\in S$,
  there is $(m,y)\in T$ such that $x\mathrel{r}y$. While $L_R$ has the
  feel of inducing a notion of heterogeneous bisimilarity (this will
  be made formal in \autoref{sec:het-sim}),~$K_R$ has a flavour of
  similarity, including as it does only a `forth'-type condition.
  Clearly, we have
  \begin{equation*}
    L_R=K_R\cap \rev{K_{\rev R}}.
  \end{equation*}
  Given a further set~$\C$ of labels and a relation
  $Q\subseteq\B\times\C$, we have
  \begin{equation*}
    K_Q\cdot K_R= K_{Q\cdot R}\quad\text{and}\quad L_Q\cdot L_R\le L_{Q\cdot R}.
  \end{equation*}
  It is a fairly typical phenomenon in describing composites of
  relational connectors that upper bounds such as the above are often
  straightforward, while the converse inequalities are more elusive or
  fail to hold. When showing
  $ K_{Q\cdot R}\,r\subseteq (K_Q\cdot K_R)\,r$ for
  $r\colon\frel{X}{Z}$, one gets away with using the trivial
  factorization $r=s\cdot t$ given by $s=r$, $t=\Delta_X$, while for a
  full description of $L_Q\cdot L_R$, we need to use
  \autoref{thm:comp-couniv}. Specifically, for $S\in\pow(\A\times X)$,
  $U\in\pow(\B\times X)$, we have $S\mathrel{(L_Q\cdot L_R)r} U$
  iff~$S$ and~$U$ satisfy conditions \emph{forth} and \emph{back},
  where \emph{forth} is given as follows and \emph{back} is given
  symmetrically: Whenever $(l,m)\in R$ and $(l,x)\in S$, then there
  are $A\in\pow(X)$, $B\in\pow(Z)$ such that $A\times B\subseteq r$
  and $x\in A$, and moreover (i) for all $(l',m)\in R$, there is
  $x'\in A$ such that $(l',x')\in S$, and (ii) for all $(m,p)\in Q$,
  there is $z\in B$ such that $(p,z)\in U$.
\end{example}

\noindent A further straightforward way to obtain relational
connectors is to pull them back along natural transformations:
\begin{lemdefn}\label{lem:transform}
  Let $L\colon F\to G$ be a relational connector, and let
  $\alpha\colon F'\Rightarrow F$, $\beta\colon G'\Rightarrow G$ be natural
  transformations. Then we have relational connectors
  $L\natcomp\alpha\colon F'\to G$,
  $\rev{\beta}\natcomp L\colon F\to G'$ defined on
  $r\colon\frel{X}{Y}$ by $(L\natcomp\alpha)r=Lr\cdot\alpha_X$ and
  $(\rev{\beta}\natcomp L)r= \rev{(\beta_Y)}\cdot Lr$, respectively.
\end{lemdefn}
\noindent In particular, from $\alpha\colon F\to G$, we always obtain
a relational connector $\alpha\natcomp\idcon{G}\colon F\to G$, which
plays a distinguished role:
\begin{defn}
  A relational connector $L\colon F\to G$ \emph{extends} a natural
  transformation $\alpha\colon F\to G$ if $\alpha_X\le L\Delta_X$ for
  all~$X$.
\end{defn}
\noindent (In particular, $L\colon F\to F$ extends~$F$ iff $L$
extends~$\id_F$.)
\begin{theorem}\label{thm:least-trans}
  Let $\alpha\colon F\to G$ be a natural transformation. The
  relational connector $\idcon{G}\natcomp\alpha$ is the least
  relational connector that extends~$\alpha$. In particular,
  $\idcon{G}$ is the least relational connector that extends~$G$.
\end{theorem}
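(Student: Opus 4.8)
The plan is to establish two separate facts and combine them: first, that the connector $\idcon{G}\natcomp\alpha$ is itself an extension of $\alpha$ in the sense defined above; and second, that it lies below every relational connector $L\colon F\to G$ extending $\alpha$, with respect to the ordering $\le$ from \autoref{def:relational-connector}. Together these say precisely that $\idcon{G}\natcomp\alpha$ is the \emph{least} extension of $\alpha$. Throughout I rely on \autoref{lem:transform}, which guarantees that $\idcon{G}\natcomp\alpha$ is a relational connector at all and unfolds its action to $(\idcon{G}\natcomp\alpha)r=\idcon{G}r\cdot\alpha_X$ for $r\colon\frel{X}{Y}$.

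For the first fact I evaluate at the diagonal. Since $(\idcon{G}\natcomp\alpha)\Delta_X=\idcon{G}\Delta_X\cdot\alpha_X$ and \eqref{eq:idcon-diag} gives $\Delta_{GX}\subseteq\idcon{G}\Delta_X$, monotonicity of relational composition yields $\alpha_X=\Delta_{GX}\cdot\alpha_X\subseteq\idcon{G}\Delta_X\cdot\alpha_X=(\idcon{G}\natcomp\alpha)\Delta_X$, which is exactly the condition that $\idcon{G}\natcomp\alpha$ extends $\alpha$.

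The minimality claim is the crux, and it is where the impredicative definition of $\idcon{G}$ pays off. I fix a relational connector $L\colon F\to G$ with $\alpha_X\subseteq L\Delta_X$ for all $X$, fix $r\colon\frel{X}{Y}$, and show $(\idcon{G}\natcomp\alpha)r=\idcon{G}r\cdot\alpha_X\subseteq Lr$. So I take $a\in FX$, $c\in GY$ with $a\mathrel{(\idcon{G}r\cdot\alpha_X)}c$; since $\alpha_X$ is a function this means $b:=\alpha_X(a)$ satisfies $b\mathrel{\idcon{G}r}c$. The key move is to discharge the universal quantifier in the definition of $b\mathrel{\idcon{G}r}c$ by plugging in the very connector against which minimality is being tested: I take the functor to be $F$, the connector to be $L\colon F\to G$, the relation $s$ to be $\Delta_X\colon\frel{X}{X}$, and the element to be $a\in FX$. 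The premise of the resulting implication is $a\mathrel{L\Delta_X}b$, which holds because $a\mathrel{\alpha_X}b$ and $\alpha_X\subseteq L\Delta_X$; hence its conclusion $a\mathrel{L(r\cdot\Delta_X)}c$ holds, and $r\cdot\Delta_X=r$ gives $a\mathrel{Lr}c$. Thus $(\idcon{G}\natcomp\alpha)r\subseteq Lr$ for every $r$, i.e.\ $\idcon{G}\natcomp\alpha\le L$.

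Finally, the ``in particular'' is immediate: taking $\alpha=\id_F$ (so $F=G$) one has $(\id_G)_X=\Delta_{GX}$, whence $\idcon{G}\natcomp\id_G=\idcon{G}$, and ``extends $G$'' means ``extends $\id_G$''; the general statement specializes to $\idcon{G}$ being the least connector extending $G$. I do not anticipate a genuine obstacle, since the argument is short once one sees that the quantification over all connectors in the definition of $\idcon{G}$ should be instantiated at $L$ with $s=\Delta_X$. The only points demanding care are treating $\alpha_X$ as a total, univalent function when decomposing $\idcon{G}r\cdot\alpha_X$, and having \eqref{eq:idcon-diag} available for the extension direction.
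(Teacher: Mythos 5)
Your proof is correct, and both halves accomplish what the theorem needs; the difference from the paper lies in how minimality is established. The paper argues by a short calculation at the level of connectors: for $L$ extending $\alpha$ it writes $Lr=(\idcon{G}\cdot L)r\ge\idcon{G}r\cdot L\Delta_X\ge\idcon{G}r\cdot\alpha_X=(\idcon{G}\natcomp\alpha)r$, where the first equality is the left-identity law (\autoref{lem:id-comp}), the middle inequality picks the factorization $r=r\cdot\Delta_X$ inside the join defining composition, and the last step uses $\alpha_X\subseteq L\Delta_X$ together with monotonicity of relational composition. You instead unfold the impredicative definition of $\idcon{G}$ directly, instantiating the universally quantified connector at $L$ itself with $s=\Delta_X$: the premise $a\mathrel{L\Delta_X}\alpha_X(a)$ is supplied by the extension hypothesis, and the conclusion yields $a\mathrel{L(r\cdot\Delta_X)}c$, i.e.\ $a\mathrel{Lr}c$. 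This is precisely the mechanism hidden inside the paper's proof of \autoref{lem:id-comp} (its ``$\subseteq$'' direction performs the same instantiation of the definition), so your argument is essentially an inlined, self-contained version of the paper's: you trade the citation of the reusable identity law for a direct elementwise argument, which is marginally more elementary and makes visible exactly where the quantification over all connectors is discharged, at the cost of not factoring the work through a lemma the paper uses elsewhere. The first half (that $\idcon{G}\natcomp\alpha$ extends $\alpha$ via \eqref{eq:idcon-diag}) and the ``in particular'' specialization to $\alpha=\id_G$ are handled the same way in both proofs.
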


\begin{example}\label{expl:trace-con}
  We have a variant~$\forthcon$ of the Barr extension of the functor
  $F=\pow(\A\times(-))$ modelling $\A$-LTS (\autoref{expl:LTS-lax})
  given by including only the \emph{forth} condition: For
  $r\colon\frel{X}{Y}$, $S\in FX$, $T\in FY$, we put
  $S\mathrel{\forthcon r} T$ iff for all $(l,x)\in S$, there is
  $(l,y)\in T$ such that $x\mathrel{r} y$. Now let
  $\iota\colon\A\times(-)\Rightarrow F$ be the inclusion natural
  transformation. Then we have a relational connector
  $\tracecon=\forthcon\natcomp\iota\colon\A\times(-)\to F$;
  explicitly, for $r\colon\frel{X}{Y}$, $(l,x)\in\A\times X$, and
  $T\in FY$, we have $(l,x)\mathrel{\tracecon r}T$ iff there exists
  $(l,y)\in T$ such that $x\mathrel{r}y$. By itself,~$\tracecon$ is
  not yet very interesting, but we can build further relational
  connectors using the constructions introduced above; for instance,
  we have a relational connector
  $\tracecon\cdot\rev{{\tracecon}}\colon F\to F$, described by
  $S\mathrel{(\tracecon\cdot\rev{{\tracecon}})r} T$ iff there exist
  $(a,x)\in S$, $(a,y)\in T$ such that $x\mathrel{r}y$; this connector
  is symmetric and extends~$F$ but fails to be transitive, hence is
  not a lax extension. We will later employ
  $\tracecon\cdot\rev{{\tracecon}}$ to relate LTS that share an infinite
  trace (\autoref{expl:trace-bisim}).
\end{example}

\begin{example}\label{expl:different-labels-transformation}
  Consider again the functors $F=\pow(\A\times(-))$ and
  $G=\pow(\B\times(-))$ together with a fixed relation on labels
  $R \subseteq \A \times \B$.  Note that, for every set $X$, the
  elements of $F X$ and $G X$ can be interpreted as relations
  $\frel{\A}{X}$ and $\frel{\B}{X}$, respectively.  Define the natural
  transformation $\alpha \colon F \Rightarrow G$ by
  $\alpha_X(S) = S \cdot \rev{R}$.
  Let $\forthcon^G \colon G \rightarrow G$ be the `forth' relational
  connector from \autoref{expl:trace-con} instantiated to~$G$, and
  consider the relational connector $\forthcon^G \natcomp \alpha$. For
  $S \in FX$, $T \in GY$ and $r \colon \frel{X}{Y}$, we have
  $S \mathrel{(\forthcon^G \natcomp \alpha) r} T$ iff
  $(S \cdot \rev{R}) \mathrel{\forthcon^G r} T$. Explicitly, the
  latter means that if $(l,m) \in R$ and $(l, x) \in S$, then there is
  $y \in Y$ such that $(m,y) \in T$ and $x\mathrel{r} y$. This
  coincides with the relational connector $K_R$ from
  \autoref{expl:lts-comp}, which is hence induced by a natural
  transformation and a lax extension. (It does not seem to be the case
  that $L_R$ as per
  \autoref{expl:lts-connector}/\autoref{expl:lts-comp} is induced in
  this way.)

We can instead compose with a natural transformation on the other
side. Let $\beta \colon G \Rightarrow F$ be given by
$\beta_X(T) = T \cdot R$, and let $\forthcon^F \colon G \rightarrow G$
be the connector $\forthcon^F$ from \autoref{expl:trace-con},
instantiated to~$F$.  The connector
$\rev{\beta} \natcomp \forthcon^F \colon F \rightarrow G$ is given,
for $S \in FX$, $T \in GY$ and $r \colon \frel{X}{Y}$, by
$S \mathrel{(\rev{\beta} \natcomp \forthcon^F) r} T$ iff
$S \mathrel{\forthcon^F r} (T \cdot R)$. Hence,
\begin{equation*}
  S \mathrel{(\rev{\beta} \natcomp \forthcon^F) r} T \iff \forall (l,x) \in S. \; \exists (l,m)\in R. \; (m,y) \in T\text{ and } x\mathrel{r}y,
\end{equation*}		
which differs from $K_R$ in that here, the quantification over $R$
is existential.
\end{example}

\begin{remark}\label{rem:lifting}
  Analogously to the fact that lax extensions of a functor~$F$ are
  equivalent to certain liftings of $F$ to the category of preordered
  sets~\cite{GoncharovEA23}, relational connectors $F\to G$ can be
  identified with certain liftings of
  $F \times G \colon \Set^2 \to \Set^2$ to the category of binary
  relations and relation-preserving pairs of functions. Indeed, this
  category is a fibration over $\Set^2$, and the relational connectors
  are precisely the liftings that preserve cartesian morphisms; a
  condition that has featured in situations where liftings of a
  functor~$F$ are used to derive notions of ``behavioural
  conformance'' for $F$-coalgebras (e.g. \cite{BaldanEA18,DBLP:journals/mscs/HasuoKC18,FGH+23,TBK+23}).
\end{remark}

\subsection{Lax Extensions as Relational Connectors}\label{sec:lax}

\noindent For context, we briefly discuss how the additional
properties of lax extensions are phrased in terms of the constructions
from \autoref{sec:rel-conn}, and in particular how lax extensions
relate to identity relational connectors.
\begin{defn}
  A relational connector $L\colon F\to F$ is \emph{transitive} if
  $L\cdot L\le L$, and \emph{symmetric} if $\rev{L}\le L$. Moreover,
  $L$ \emph{extends~$F$} if $\Delta_{FX}\subseteq L\Delta_X$ for all~$X$.
\end{defn}
\noindent The following observations are straightforward.
\begin{lemma}\label{lem:symm}
  Let $L\colon F\to F$ be a relational connector. Then~$L$ is
  symmetric iff $\rev{L}= L$ iff $L\le\rev{L}$.
\end{lemma}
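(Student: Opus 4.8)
The plan is to derive all three equivalences from a single fact already recorded in \autoref{lem:rev-comp}: converse is a \emph{monotone involution} on relational connectors. Abbreviate the three candidate conditions as $(a)$ $\rev{L}\le L$ (symmetry), $(b)$ $\rev{L}=L$, and $(c)$ $L\le\rev{L}$. Since an equality subsumes the corresponding inequality in either direction, $(b)$ trivially implies both $(a)$ and $(c)$; the work lies in the reverse directions, and I would in fact show that each of $(a)$ and $(c)$ already forces the full equality $(b)$.

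The key move is to apply converse to a given inequality and exploit that it reverses nothing (monotonicity) while squaring to the identity (involutivity). Assuming $(a)$, i.e.\ $\rev{L}\le L$, monotonicity of converse yields $\rev{(\rev{L})}\le\rev{L}$, and by involutivity $\rev{(\rev{L})}=L$, so this reads exactly as $L\le\rev{L}$, that is $(c)$. Then $(a)$ and $(c)$ together give $\rev{L}=L$, namely $(b)$. The argument from $(c)$ is symmetric: applying monotone converse to $L\le\rev{L}$ gives $\rev{L}\le\rev{(\rev{L})}=L$, which is $(a)$, and again $(a)$ with $(c)$ gives $(b)$. This closes the cycle $(a)\Rightarrow(c)\Rightarrow(a)$, with both directions passing through $(b)$.

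I expect no real obstacle here, as everything reduces to formal manipulation valid for any monotone involution on a poset. The only point deserving a moment's attention is the \emph{direction} of monotonicity of converse: from the pointwise definition $\rev{L}r=\rev{(L\rev{r})}$ one must confirm that $L\le K$ implies $\rev{L}\le\rev{K}$ (order-preserving), not the reverse. This is immediate because relational converse preserves inclusions, so $L\rev{r}\subseteq K\rev{r}$ yields $\rev{(L\rev{r})}\subseteq\rev{(K\rev{r})}$. With that fixed, the three conditions are interderivable exactly as claimed.
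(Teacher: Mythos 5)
Your proof is correct and matches the paper's approach: the paper's own proof simply says the lemma is immediate from involutivity of converse and the definition of symmetry, which is exactly the monotone-involution argument you spell out (with monotonicity of converse also recorded in \autoref{lem:rev-comp}). Your explicit check of the direction of monotonicity of converse is a sound, if brief, addition to what the paper leaves implicit.
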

\begin{lemma}\label{lem:lax-relconn}
  Let~$L\colon F\to F$ be a relational connector. Then the following
  hold.
  \begin{enumerate}
  \item\label{item:L2} $L$ satisfies condition~(L2) in the definition of lax
    extension iff~$L$ is transitive.
  \item\label{item:L3} $L$ satisfies condition~(L3) in the definition of lax
    extension iff~$L$ extends~$F$.
  \item \label{item:converse} $L$ preserves converse iff~$L$ is
    symmetric.
  \item\label{item:lax} $L$ is a lax extension of~$F$ iff~$L$ is
    transitive and extends~$F$.
  \item\label{item:L3-comp} If~$L$ extends~$F$, then
    $L\subseteq L\cdot L$.
  \item\label{item:lax-idempotent} If~$L$ is a lax extension, then~$L$
    is idempotent, i.e.\ $L\cdot L=L$.
  \end{enumerate}
\end{lemma}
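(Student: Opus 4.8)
The plan is to treat items~\ref{item:L2} and~\ref{item:L3} as the core, since item~\ref{item:lax} is their conjunction and items~\ref{item:L3-comp}--\ref{item:lax-idempotent} follow by combining item~\ref{item:L3} with the composition formula; item~\ref{item:converse} is independent. For item~\ref{item:L2}, I would unfold the definition of composition. If $L$ satisfies (L2), then for any factorization $r = s \cdot t$ we have $Ls \cdot Lt \subseteq L(s \cdot t) = Lr$, and taking the join over all factorizations gives $(L \cdot L)r \subseteq Lr$, i.e.\ transitivity. Conversely, if $L$ is transitive, then for composable $s,t$ the composite $Ls \cdot Lt$ is one of the joinands defining $(L \cdot L)(s \cdot t)$, so $Ls \cdot Lt \subseteq (L \cdot L)(s \cdot t) \subseteq L(s \cdot t)$, which is (L2). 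For item~\ref{item:converse}, I would observe that ``$L$ preserves converse'', i.e.\ $L\rev{r} = \rev{(Lr)}$ for all $r$, is, after taking converses and using involutivity, exactly the statement $\rev{L} = L$ (recall $\rev{L}r = \rev{(L\rev{r})}$); by \autoref{lem:symm} this is equivalent to symmetry.

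The main work is item~\ref{item:L3}. The forward direction is immediate: instantiating (L3) at $f = \id_X$ yields $\Delta_{FX} = F\id_X \subseteq L\id_X = L\Delta_X$, so $L$ extends~$F$. The converse direction is where naturality carries the weight. Given a map $f \colon X \to Y$, I would apply naturality to the trivial decomposition $f = \rev{(\id_Y)} \cdot \Delta_Y \cdot f$ (taking $g = \id_Y$ and $r = \Delta_Y$ in the naturality equation), obtaining $Lf = \rev{(F\id_Y)} \cdot L\Delta_Y \cdot Ff = L\Delta_Y \cdot Ff$. Since $L$ extends~$F$, i.e.\ $\Delta_{FY} \subseteq L\Delta_Y$, this gives $Ff = \Delta_{FY} \cdot Ff \subseteq L\Delta_Y \cdot Ff = Lf$. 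For the second half of (L3), the symmetric decomposition $\rev{f} = \rev{f} \cdot \Delta_Y \cdot \id_Y$ (now with $g = f$, $r = \Delta_Y$, and the map $\id_Y$ in the third slot) yields $L\rev{f} = \rev{(Ff)} \cdot L\Delta_Y$, whence $\rev{(Ff)} \subseteq L\rev{f}$ by the same extension inequality.

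Items~\ref{item:lax}--\ref{item:lax-idempotent} are then consequences. Item~\ref{item:lax} is the conjunction of items~\ref{item:L2} and~\ref{item:L3}, using that a relational connector already satisfies (L1) by monotonicity, so that a lax extension is precisely a relational connector $F \to F$ satisfying (L2) and (L3). For item~\ref{item:L3-comp}, I would use the trivial factorization $r = r \cdot \Delta_X$: since $\Delta_{FX} \subseteq L\Delta_X$, we get $Lr = Lr \cdot \Delta_{FX} \subseteq Lr \cdot L\Delta_X \subseteq (L \cdot L)r$. Finally, item~\ref{item:lax-idempotent} combines the two inclusions: a lax extension is transitive, so $L \cdot L \le L$, and extends~$F$, so by item~\ref{item:L3-comp} $L \le L \cdot L$, giving $L \cdot L = L$.

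The only genuinely non-mechanical step is the converse direction of item~\ref{item:L3}: one must choose the trivial factorizations so that naturality produces exactly the factor $L\Delta_Y$ that the extension inequality $\Delta_{FY} \subseteq L\Delta_Y$ can then absorb. Everything else is routine unfolding of the definitions of composition, converse, and the ``extends'' property.
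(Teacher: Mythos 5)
Your proof is correct and follows essentially the same route as the paper's: unfolding the definition of $L\cdot L$ for items~\ref{item:L2} and~\ref{item:L3-comp}, using naturality with $r=\Delta_Y$ to absorb $\Delta_{FY}\subseteq L\Delta_Y$ for the converse direction of item~\ref{item:L3}, reducing item~\ref{item:converse} to $\rev{L}=L$ via \autoref{lem:symm}, and assembling items~\ref{item:lax} and~\ref{item:lax-idempotent} from the others. The only (harmless) differences are that you spell out both halves of (L3), where the paper handles one with an ``e.g.'', and you use the factorization $r=r\cdot\Delta_X$ rather than $\Delta_Y\cdot r$ in item~\ref{item:L3-comp}.
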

\noindent Since lax extensions satisfy naturality, this implies
\begin{theorem}
  The lax extensions of a set functor~$F$ are precisely the transitive
  relational connectors that extend~$F$.
\end{theorem}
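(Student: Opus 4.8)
The plan is to prove the claimed equality of two classes of relation liftings by establishing the two inclusions separately, reducing in both cases to \autoref{lem:lax-relconn}, and in particular to the characterization in item~(\ref{item:lax}) of lax extensions \emph{among relational connectors} as exactly the transitive ones that extend~$F$. The only genuine gap between that lemma and the statement at hand is that \autoref{lem:lax-relconn} takes as a standing hypothesis that $L$ is a relational connector, whereas the present theorem phrases one of the two classes purely through the lax-extension axioms~(L1)--(L3). Bridging this gap amounts to recalling that every lax extension already \emph{is} a relational connector $F\to F$.

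For the inclusion of transitive connectors extending~$F$ into lax extensions, I would simply invoke \autoref{lem:lax-relconn}, item~(\ref{item:lax}): if $L\colon F\to F$ is a relational connector that is transitive and extends~$F$, then that item directly yields that $L$ is a lax extension. This direction is immediate and requires nothing beyond the lemma.

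For the converse inclusion, let $L$ be a lax extension of~$F$. First I would observe that $L$ is a relational connector: condition~(L1) is literally monotonicity, and, as recorded in \autoref{sec:prelims}, the axioms~(L1)--(L3) entail naturality, $L(\rev{g}\cdot r\cdot f)=\rev{(Fg)}\cdot Lr\cdot Ff$, which is the remaining defining clause of a relational connector. With $L$ thereby recognized as a relational connector, the ``only if'' reading of \autoref{lem:lax-relconn}, item~(\ref{item:lax}), applies and shows that $L$ is transitive and extends~$F$, as required. Together the two inclusions give the theorem.

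I do not expect a serious obstacle here: the substance has been front-loaded into \autoref{lem:lax-relconn}, whose items~(\ref{item:L2}) and~(\ref{item:L3}) perform the real translation of~(L2) into transitivity ($L\cdot L\le L$) and of~(L3) into the extension property ($\Delta_{FX}\le L\Delta_X$) via the composition and identity constructions of \autoref{sec:rel-conn}. The single point that warrants care is the implication from~(L1)--(L3) to naturality, which guarantees that lax extensions do not fall outside the relational-connector framework; since this is standard and already used in the preliminaries, I would cite it rather than reprove it, and would at most spell out that~(L3) together with lax functoriality~(L2) yields the stated naturality equation.
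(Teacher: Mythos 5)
Your proposal is correct and matches the paper's own argument: the paper derives the theorem in one line from \autoref{lem:lax-relconn}\ref{item:lax}, bridged by exactly the observation you make, namely that lax extensions satisfy naturality (a consequence of (L1)--(L3) recorded in \autoref{sec:prelims}) and hence are relational connectors, so the lemma applies in both directions.
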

As indicated above, a special role is played by identity relational
connectors:
\begin{theorem}\label{thm:id-least}
  Let~$F$ be a set functor. Then,~$\idcon{F}$ is a symmetric lax
  extension of~$F$. Moreover,~$F$ has a
  diagonal-preserving lax extension iff~$\idcon{F}$ preserves
  diagonals.
\end{theorem}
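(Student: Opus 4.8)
The plan is to verify the three claims in turn, leaning heavily on the compositional structure and the minimality property of $\idcon{F}$ established earlier. To see that $\idcon{F}$ is a lax extension, I would invoke \itemref{lem:lax-relconn}{item:lax}, which reduces the task to showing that $\idcon{F}$ is transitive and extends~$F$. That $\idcon{F}$ extends~$F$ is exactly~\eqref{eq:idcon-diag}. For transitivity, I would specialize the identity law \autoref{lem:id-comp} (taking $L=\idcon{F}$ and $G=F$) to obtain $\idcon{F}\cdot\idcon{F}=\idcon{F}$; idempotence immediately gives $\idcon{F}\cdot\idcon{F}\le\idcon{F}$, as required.

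For symmetry, by \autoref{lem:symm} it suffices to establish $\idcon{F}\le\rev{\idcon{F}}$. The key observation is that $\rev{\idcon{F}}$ is itself a relational connector (\autoref{lem:converse-meet-product}) that extends~$F$: since the diagonal is self-converse and converse is monotone, applying converse to~\eqref{eq:idcon-diag} yields $\Delta_{FX}=\rev{\Delta_{FX}}\subseteq\rev{(\idcon{F}\Delta_X)}=\rev{\idcon{F}}\Delta_X$. Because $\idcon{F}$ is the \emph{least} relational connector extending~$F$ (\autoref{thm:least-trans}), we conclude $\idcon{F}\le\rev{\idcon{F}}$, hence symmetry.

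Finally, for the equivalence, the backward direction is immediate: if $\idcon{F}$ preserves diagonals, then by the first claim it is itself the desired diagonal-preserving lax extension. For the forward direction, suppose $L$ is any diagonal-preserving lax extension of~$F$. Since $L$ is in particular a relational connector extending~$F$, minimality (\autoref{thm:least-trans}) gives $\idcon{F}\le L$, so $\idcon{F}\Delta_X\subseteq L\Delta_X\subseteq\Delta_{FX}$ for every~$X$; that is, $\idcon{F}$ preserves diagonals.

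The main obstacle is the symmetry argument. The first and third claims are essentially bookkeeping on top of the identity and composition laws and the least-extension property, whereas symmetry genuinely requires combining three ingredients: that converse preserves the connector axioms, that the diagonal is fixed by converse, and the somewhat subtle minimality characterization of $\idcon{F}$ from \autoref{thm:least-trans}. Once these are in place, everything follows from the slogan that $\idcon{F}$ is the least lax extension of~$F$.
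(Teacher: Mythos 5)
Your proof is correct, and its overall skeleton (reduce the lax-extension claim to transitivity plus extending~$F$ via \itemref{lem:lax-relconn}{item:lax}, then symmetry, then the diagonal equivalence) matches the paper's; but your symmetry argument takes a genuinely different route. The paper works purely inside the composition structure: using \autoref{lem:rev-comp} it shows that $\rev{(\idcon{F})}$ is a \emph{right identity} for composition, $L\cdot\rev{(\idcon{F})}=\rev{(\idcon{F}\cdot\rev{L})}=\rev{(\rev{L})}=L$, and then concludes $\rev{(\idcon{F})}=\idcon{F}$ because a left identity and a right identity must coincide (cf.\ \autoref{lem:id-comp}). You instead argue order-theoretically: $\rev{(\idcon{F})}$ is a relational connector (\autoref{lem:converse-meet-product}) that extends~$F$ (apply converse to~\eqref{eq:idcon-diag}, using that diagonals are self-converse), so minimality (\autoref{thm:least-trans}) yields $\idcon{F}\le\rev{(\idcon{F})}$, which is symmetry by \autoref{lem:symm}. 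Both arguments are sound and non-circular: the proof of \autoref{thm:least-trans} relies only on \autoref{lem:id-comp}, the definition of composition, and~\eqref{eq:idcon-diag}, not on the theorem being proved. What each buys: the paper's argument delivers the equality $\rev{(\idcon{F})}=\idcon{F}$ directly from the quasicategorical structure and needs no appeal to the ordering of connectors; yours isolates the reusable principle that the least element of a class of connectors closed under converse is automatically symmetric, which meshes nicely with the remark following the theorem that $\idcon{F}$ is the smallest (symmetric) lax extension. A minor further divergence: for transitivity the paper takes $\idcon{F}\cdot\idcon{F}\le\idcon{F}$ as immediate from the impredicative definition of~$\idcon{F}$, whereas you derive idempotence $\idcon{F}\cdot\idcon{F}=\idcon{F}$ by specializing \autoref{lem:id-comp} to $L=\idcon{F}$; this is equally valid and arguably slicker. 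The diagonal-preservation equivalence is handled identically in both proofs, with your version merely making the appeal to minimality and the trivial backward direction explicit.
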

\begin{proof}[sketch]
  Most subclaims are obvious by \autoref{lem:lax-relconn}
  and~\eqref{eq:idcon-diag}. To see that~$\idcon{F}$ is symmetric,
  show that~$\rev{(\idcon{F})}$ is a right identity: For
  $L\colon F\to G$, we have
  $L\cdot\rev{(\idcon{F})}=\rev{(\idcon{F}\cdot
    \rev{L})}=\rev{(\rev{L})}=L$ (using \autoref{lem:rev-comp}). \qed
\end{proof}
In connection with \autoref{thm:least-trans}, we obtain moreover:
\begin{corollary}
  The identity relational connector~$\idcon{F}$ is both the smallest
  lax extension and the smallest symmetric lax extension of a set
  functor~$F$.
\end{corollary}

\begin{example}\label{expl:Barr}
  If~$F$ preserves weak pullbacks, then~$\idcon{F}$ is the Barr
  extension of~$F$ (cf.\ \autoref{rem:barr}); this is immediate from
  \autoref{thm:id-least}, as one shows easily that the Barr extension
  is below every converse-preserving lax extension. For instance, the
  standard Egli-Milner lifting is an identity relational connector.
\end{example}

\section{Heterogeneous (Bi)simulations}\label{sec:het-sim}

\noindent We proceed to introduce a notion of heterogeneous
(bi)simulations relating systems of different type; we induce such
notions from relational connectors.

\begin{defn}
  Let $L\colon F\to G$ be a relational connector. A
  relation~$r\colon\frel{C}{D}$ is an \emph{$L$-simulation} between an
  $F$-coalgebra $(C,\gamma)$ and a $G$-coalgebra $(D,\delta)$ if
  
  \begin{equation*}
    \text{whenever $x\mathrel{r}y$, then }\gamma(x)\mathrel{Lr}\delta(y);
  \end{equation*}
  in pointfree notation, this means that
  $r\subseteq \rev{\delta}\cdot Lr\cdot \gamma$, equivalently
  $\delta\cdot r\subseteq Lr\cdot\gamma$.  States $x\in C$, $y\in D$
  are \emph{$L$-similar} if there exists an $L$-simulation~$r$ such
  that $x\mathrel{r}y$, in which case we write
  $x\simul_Ly$. Occasionally, we will designate the ambient coalgebras
  $C,D$ explicitly by writing $x\simul_L^{C,D}y$; thus,
  $\simul_L^{C,D}$ is a relation $\frel{C}{D}$.

  In case $F=G$,~$r$ is an \emph{$L$-bisimulation} if~$r$
  and~$\rev{r}$ are $L$-simulations. Correspondingly, states $x\in C$,
  $y\in D$ are \emph{$L$-bisimilar} if there exists an
  $L$-bisimulation~$r$ such that $x\mathrel{r}y$, in which case we
  write $x\bisim_Ly$ or, more explicitly, $x\bisim_L^{C,D}y$.
\end{defn}
We note that in case~$L$ is a lax extension, these definitions match
existing terminology (e.g.~\cite{MartiVenema15}). Monotonicity of
relational connectors ensures that by the Knaster-Tarski
theorem,~$\simul_L$ is the greatest fixpoint of the map taking~$r$ to
$\rev\delta\cdot Lr\cdot\gamma$, and in particular is itself an
$L$-simulation, correspondingly for~$\bisim_L$. We note that
$L$-similarity is invariant under coalgebra morphisms
(\autoref{sec:prelims}), a key fact that hinges on monotonicity and
naturality of relational connectors, lending further support to our
choice of axiomatics:
\begin{lemma}\label{lem:sim-mor}
  Let $L\colon F\to G$ be a  connector, let
  $r\colon\frel{C}{D}$ be an $L$-simulation between an $F$-coalgebra
  $(C,\gamma)$ and a $G$-coalgebra $(D,\delta)$, and let
  $f\colon (C',\gamma')\to (C,\gamma)$,
  $g\colon(C,\gamma)\to(C'',\gamma'')$ be $F$-coalgebra
  morphisms. Then $r\cdot f$ and $r\cdot\rev{g}$ are
  $L$-simulations. Symmetric properties hold for $G$-coalgebra
  morphisms. Thus, $L$-similarity is closed under behavioural
  equivalence (\autoref{sec:prelims}) on both sides.
\end{lemma}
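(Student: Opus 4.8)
The plan is to verify everything through the pointfree characterization of $L$-simulations, namely that $r\colon\frel{C}{D}$ is an $L$-simulation iff $r\subseteq\rev\delta\cdot Lr\cdot\gamma$. For each of the four composites ($r\cdot f$ and $r\cdot\rev g$ on the $F$-side, and the two symmetric composites $\rev h\cdot r$, $k\cdot r$ on the $G$-side), I would start from the assumption $r\subseteq\rev\delta\cdot Lr\cdot\gamma$, pre- or post-compose with the (converse of the) morphism, rewrite the transition map using the coalgebra-morphism equation, and finally relate the resulting expression $Lr\cdot(\text{functor image})$ to $L(\text{composite})$ via naturality of $L$. The only tools needed are naturality and monotonicity of connectors together with totality ($\Delta\subseteq\rev f\cdot f$) and univalence ($f\cdot\rev f\subseteq\Delta$) of maps.

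The two ``pulling-back'' cases are immediate from naturality. For an $F$-coalgebra morphism $f\colon(C',\gamma')\to(C,\gamma)$, instantiating naturality with $f$ on the right and $\id_D$ on the left yields the equality $L(r\cdot f)=Lr\cdot Ff$; combined with the morphism equation $\gamma\cdot f=Ff\cdot\gamma'$, post-composing the simulation inequality with $f$ gives $r\cdot f\subseteq\rev\delta\cdot Lr\cdot Ff\cdot\gamma'=\rev\delta\cdot L(r\cdot f)\cdot\gamma'$. The case $\rev h\cdot r$ for a $G$-coalgebra morphism $h\colon(D',\delta')\to(D,\delta)$ is entirely dual: naturality (with $h$ on the left, $\id_C$ on the right) gives $L(\rev h\cdot r)=\rev{(Gh)}\cdot Lr$, and one rewrites $\rev h\cdot\rev\delta=\rev{\delta'}\cdot\rev{(Gh)}$ using the morphism equation.

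The two ``pushing-forward'' cases are the main obstacle, precisely because the functor image of the morphism appears in \emph{converse} form on the side where naturality cannot be applied directly. For $g\colon(C,\gamma)\to(C'',\gamma'')$, after post-composing with $\rev g$ and using the morphism equation together with univalence of $g$ (to absorb $g\cdot\rev g\subseteq\Delta_{C''}$) and totality of $Fg$, one reduces to $r\cdot\rev g\subseteq\rev\delta\cdot Lr\cdot\rev{(Fg)}\cdot\gamma''$, so it suffices to show $Lr\cdot\rev{(Fg)}\subseteq L(r\cdot\rev g)$. Here I would use totality of $g$ to get $r\subseteq(r\cdot\rev g)\cdot g$, apply monotonicity of $L$ and then naturality in the already-established clean form to obtain $Lr\subseteq L(r\cdot\rev g)\cdot Fg$, and finally post-compose with $\rev{(Fg)}$, cancelling $Fg\cdot\rev{(Fg)}\subseteq\Delta_{FC''}$ by univalence of $Fg$. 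The symmetric case $k\cdot r$ for a $G$-morphism $k\colon(D,\delta)\to(D'',\delta'')$ is handled dually, yielding $Gk\cdot Lr\subseteq L(k\cdot r)$.

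Finally, closure under behavioural equivalence follows by composing these four operations along a cospan. If $x\in C$ and $x'\in C'$ are behaviourally equivalent $F$-coalgebra states, witnessed by morphisms $f\colon C\to E$, $f'\colon C'\to E$ with $f(x)=f'(x')$, and $r$ is an $L$-simulation with $x\mathrel{r}y$, then $r\cdot\rev f$ is an $L$-simulation by the pushing-forward case and $(r\cdot\rev f)\cdot f'$ is one by the pulling-back case; it relates $x'$ to $y$, since $x'\mathrel{f'}f(x)$ and $f(x)\mathrel{(r\cdot\rev f)}y$. Hence $x'\simul_L y$. The argument on the $G$-side is symmetric, using $\rev h\cdot r$ and $k\cdot r$ instead.
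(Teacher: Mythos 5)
Your proof is correct and follows essentially the same route as the paper's: in each case the coalgebra-morphism equation rewrites the transition map, naturality of $L$ converts between $L(r\cdot f)$ (resp.\ $L(r\cdot\rev{g}\cdot g)$) and $Lr$ composed with functor images, and the harder case $r\cdot\rev{g}$ is reduced via monotonicity and totality to the original simulation condition, with closure under behavioural equivalence then obtained by composing along the cospan. The only difference is presentational: the paper argues pointwise on elements, which lets it skip the explicit univalence cancellations ($g\cdot\rev{g}\subseteq\Delta_{C''}$ and $Fg\cdot\rev{(Fg)}\subseteq\Delta_{FC''}$) that your pointfree rendering needs.
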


\noindent Notions of (bi)simulation interact well with composition and
converse of relational connectors:

\begin{lemma}[Composites of simulations]
  \label{lem:functorial-simulation}
  Let $K\colon F\to G$ and $L\colon G\to H$ be relational connectors,
  and let $(C,\gamma)$ be an $F$-coalgebra, $(D,\delta)$ a
  $G$-coalgebra, and $(E,\varepsilon)$ an $H$-coalgebra. Then the
  composite $s\cdot r\colon\frel{C}{E}$ of a $K$-simulation
  $r\colon\frel{C}{D}$ and an~$L$-simulation $s\colon\frel{D}{E}$ is
  an $L\cdot K$-simulation. Thus,
  \begin{equation*}
    \simul^{D,E}_L\cdot\simul^{C,D}_K\;\,\subseteq\;\,\simul^{C,E}_{L\cdot K}
    \quad\text{and (if $F=G$)}\quad
    \bisim^{D,E}_L\cdot\bisim^{C,D}_K\;\,\subseteq\;\,\bisim^{C,E}_{L\cdot K}.
  \end{equation*}
\end{lemma}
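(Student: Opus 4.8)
The plan is to prove the statement about the relation $s\cdot r$ directly and pointwise, and then read off the two displayed inclusions. The crucial observation is that the definition \eqref{eq:comp} of $L\cdot K$ takes a join over \emph{all} factorizations of its argument, so in particular the tautological factorization $s\cdot r=s\cdot r$ into $r\colon\frel{C}{D}$ and $s\colon\frel{D}{E}$ is always available; this is all we will need, and in contrast with the computation of $L_Q\cdot L_R$ in \autoref{expl:lts-comp} we will not have to appeal to the couniversal factorization of \autoref{thm:comp-couniv}.

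Concretely, suppose $x\mathrel{(s\cdot r)}z$, witnessed by some $y\in D$ with $x\mathrel{r}y$ and $y\mathrel{s}z$. Since $r$ is a $K$-simulation we obtain $\gamma(x)\mathrel{Kr}\delta(y)$, and since $s$ is an $L$-simulation we obtain $\delta(y)\mathrel{Ls}\varepsilon(z)$; composing these yields $\gamma(x)\mathrel{(Ls\cdot Kr)}\varepsilon(z)$. As $(r,s)$ is a factorization of $s\cdot r$, \eqref{eq:comp} gives $Ls\cdot Kr\subseteq(L\cdot K)(s\cdot r)$, whence $\gamma(x)\mathrel{(L\cdot K)(s\cdot r)}\varepsilon(z)$. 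This is exactly the $(L\cdot K)$-simulation condition $s\cdot r\subseteq\rev\varepsilon\cdot(L\cdot K)(s\cdot r)\cdot\gamma$. The similarity inclusion $\simul^{D,E}_L\cdot\simul^{C,D}_K\subseteq\simul^{C,E}_{L\cdot K}$ is then immediate: given $x\simul_K y$ and $y\simul_L z$, pick witnessing simulations $r$ with $x\mathrel{r}y$ and $s$ with $y\mathrel{s}z$; their composite $s\cdot r$ is an $(L\cdot K)$-simulation containing $(x,z)$, so $x\simul_{L\cdot K}z$.

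For the bisimilarity inclusion, where $F=G=H$ so that all three bisimilarities are defined, $r$ and $s$ are now bisimulations, so that $\rev r$ is also a $K$-simulation and $\rev s$ also an $L$-simulation. To see that $s\cdot r$ is an $(L\cdot K)$-\emph{bi}simulation one must additionally verify that its converse $\rev{(s\cdot r)}=\rev r\cdot\rev s$ is an $(L\cdot K)$-simulation, and here \autoref{lem:rev-comp} ($\rev{(L\cdot K)}=\rev K\cdot\rev L$), together with the fact that converses of bisimulations are again bisimulations, is the natural tool. I expect this converse direction to be the main obstacle: the simulation-composition step above lines up the order of relational composition with that of connector composition, but in the converse direction these orders are reversed, so that a naive composition of $\rev s$ and $\rev r$ lands in $(K\cdot L)$ rather than $(L\cdot K)$. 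Some care is therefore required—via symmetry of the connectors, or via the symmetrised connectors $K\cap\rev K$ and $L\cap\rev L$ through which bisimulations factor—to obtain an $(L\cdot K)$-bisimulation containing $(x,z)$ and thereby conclude $\bisim^{D,E}_L\cdot\bisim^{C,D}_K\subseteq\bisim^{C,E}_{L\cdot K}$.
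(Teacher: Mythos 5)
Your treatment of the core claim --- that $s\cdot r$ is an $(L\cdot K)$-simulation --- and of the similarity inclusion is correct and coincides with the paper's own proof, which is exactly the same pointwise computation using only the tautological factorization $s\cdot r=s\cdot r$ in \eqref{eq:comp} (the paper's appendix proof even concludes ``$\gamma(x)\mathrel{(K\cdot L)(s\cdot r)}\varepsilon(z)$'', where $(L\cdot K)$ is evidently meant). The paper proves nothing beyond this and treats both displayed inclusions as immediate.

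Your unease about the bisimilarity inclusion is therefore not a gap in your understanding but a genuine defect of the statement, and none of the repairs you gesture at can close it, because the inclusion is false in general. Precisely as you say, $\rev{(s\cdot r)}=\rev{r}\cdot\rev{s}$ is (by the simulation part, applied to $\rev{s}$ and $\rev{r}$) a $(K\cdot L)$-simulation, whereas an $(L\cdot K)$-bisimulation requires it to be an $(L\cdot K)$-simulation; and $K\cdot L\le L\cdot K$ fails in general. Symmetry does not help: by \autoref{lem:rev-comp}, $\rev{(L\cdot K)}=\rev{K}\cdot\rev{L}$, which even for symmetric $K,L$ is $K\cdot L$ rather than $L\cdot K$; and passing to the symmetrized connectors only yields that $s\cdot r$ is a $\bigl((L\cap\rev{L})\cdot(K\cap\rev{K})\bigr)$-simulation, with $(L\cap\rev{L})\cdot(K\cap\rev{K})\le(L\cdot K)\cap(\rev{L}\cdot\rev{K})$, again with the two converses in the wrong order. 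What \emph{does} suffice is the additional hypothesis $K\cdot L\le L\cdot K$: then $\rev{r}\cdot\rev{s}$ is an $(L\cdot K)$-simulation by \autoref{lem:sim-monot}, and $s\cdot r$ is an $(L\cdot K)$-bisimulation.

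To see that the inclusion genuinely fails, take $\A=\{a,b\}$, $F=G=H=\pow(\A\times(-))$, and the paper's own one-directional connectors $K=K_{\{(a,b)\}}$ and $L=K_{\{(b,a)\}}$ from \autoref{expl:lts-comp}, so that $L\cdot K\le K_{\{(a,a)\}}$ by the inequality $K_Q\cdot K_R\le K_{Q\cdot R}$ established there. Consider coalgebras $C=\{x,x_1\}$, $D=\{y,y_1\}$, $E=\{z,z_1,z_2\}$ with $\gamma(x)=\{(a,x_1)\}$, $\delta(y)=\{(b,y_1)\}$, $\varepsilon(z)=\{(a,z_1)\}$, $\varepsilon(z_1)=\{(a,z_2)\}$, and all other states having no successors. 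Then $r=\{(x,y),(x_1,y_1)\}$ is a $K$-bisimulation: the condition on $r$ matches the $a$-transition of $x$ to the $b$-transition of $y$, and the condition on $\rev{r}$ is vacuous since $y,y_1$ have no $a$-transitions. Symmetrically, $s=\{(y,z),(y_1,z_1)\}$ is an $L$-bisimulation, the condition on $\rev{s}$ being vacuous since $z,z_1$ have no $b$-transitions. Hence $x\bisim_K y\bisim_L z$. But any $(L\cdot K)$-bisimulation $t$ containing $(x,z)$ would, by \autoref{lem:sim-monot}, be a $K_{\{(a,a)\}}$-bisimulation; the simulation condition at $(x,z)$ forces $(x_1,z_1)\in t$, and then the condition on $\rev{t}$ at $(z_1,x_1)$ fails, since $z_1$ has an $a$-successor while $x_1$ has none. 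So $x\not\bisim_{L\cdot K}z$, and $\bisim^{D,E}_L\cdot\bisim^{C,D}_K\not\subseteq\bisim^{C,E}_{L\cdot K}$. In short: your first two paragraphs match the paper; your third paragraph has correctly identified a flaw that the paper's proof silently skips over.
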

\begin{lemma}[Converses of simulations]\label{lem:sim-conv}
  \label{item:sim-conv} Let~$L\colon F\to G$ be a relational
  connector, let~$(C,\gamma)$ be an $F$-coalgebra, and let
  $(D,\delta)$ be a $G$-coalgebra.  If~$r\colon \frel{C}{D}$ is an
  $L$-simulation, then~$\rev{r}\colon \frel{D}{C}$ is an
  $\rev{L}$-simulation. Thus,
\begin{equation*}
  \simul^{C,D}_{\rev{L}}\;\,=\;\,\rev{(\simul^{D,C}_L)}
  \quad\text{and (if $F=G$)}\quad
  \bisim^{C,D}_{\rev{L}}\;\,=\;\,\rev{(\bisim^{D,C}_L)}
\end{equation*}
\end{lemma}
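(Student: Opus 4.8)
The plan is to prove the pointwise statement first by a short pointfree calculation, and then lift it to the two equalities of greatest (bi)simulations by taking unions. The pointwise part is the heart of the argument; everything else is bookkeeping.

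For the simulation claim, I would start from the pointfree $L$-simulation condition $r\subseteq\rev{\delta}\cdot Lr\cdot\gamma$ and apply relational converse to both sides. Relational converse is monotone with respect to inclusion and reverses composites, $\rev{(s\cdot t)}=\rev{t}\cdot\rev{s}$, so together with involutivity $\rev{(\rev{\delta})}=\delta$ the right-hand side becomes $\rev{\gamma}\cdot\rev{(Lr)}\cdot\delta$, giving $\rev{r}\subseteq\rev{\gamma}\cdot\rev{(Lr)}\cdot\delta$. Now by the definition of the connector converse, $\rev{L}(\rev{r})=\rev{(L(\rev{\rev{r}}))}=\rev{(Lr)}$, so the inequality reads $\rev{r}\subseteq\rev{\gamma}\cdot\rev{L}(\rev{r})\cdot\delta$. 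This is exactly the condition for $\rev{r}\colon\frel{D}{C}$ to be an $\rev{L}$-simulation between the $G$-coalgebra $(D,\delta)$ and the $F$-coalgebra $(C,\gamma)$, and the types line up since $\rev{L}\colon G\to F$ and $\rev{L}(\rev{r})\colon\frel{GD}{FC}$.

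For the equality $\simul^{C,D}_{\rev{L}}=\rev{(\simul^{D,C}_{L})}$, I would note that the above, combined with involutivity of connector converse ($\rev{(\rev{L})}=L$, \autoref{lem:rev-comp}), shows that $r\mapsto\rev{r}$ is a bijection between $L$-simulations $\frel{A}{B}$ and $\rev{L}$-simulations $\frel{B}{A}$, with inverse again given by converse. Since, as recorded before \autoref{lem:sim-mor}, $\simul_{L}$ is the union of all $L$-simulations (equivalently the greatest fixpoint by Knaster--Tarski and monotonicity of the connector), and since relational converse commutes with unions, taking the union over this bijection yields the claim, reading $C$ as the $G$-coalgebra and $D$ as the $F$-coalgebra so that all types agree.

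For the bisimulation case with $F=G$, recall that $r$ is an $L$-bisimulation iff both $r$ and $\rev{r}$ are $L$-simulations. Using the equivalence ``$s$ is an $\rev{L}$-simulation iff $\rev{s}$ is an $L$-simulation'' (the first part applied to $L$ and to $\rev{L}$), this condition shows directly that $r$ is an $\rev{L}$-bisimulation iff $r$ is an $L$-bisimulation, and it is manifestly symmetric in $r$ and $\rev{r}$, so $L$-bisimulations are closed under converse. Hence $\bisim^{C,D}_{\rev{L}}=\bisim^{C,D}_{L}=\rev{(\bisim^{D,C}_{L})}$ follows by the same union argument. I do not expect a genuine obstacle here; the only care needed is in the bookkeeping of the direction reversals, namely which coalgebra plays the $F$-role and which the $G$-role, and in applying the connector-converse definition to $\rev{r}$ rather than to $r$.
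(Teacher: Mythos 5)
Your argument is correct and takes essentially the same route as the paper's own proof, whose entire content is the one-line calculation $\rev{r}\subseteq\rev{\gamma}\cdot\rev{(Lr)}\cdot\delta=\rev{\gamma}\cdot\rev{L}(\rev r)\cdot\delta$, i.e.\ exactly your step of taking converses of the simulation inequality and applying the definition of the connector converse. The bijection/union bookkeeping you supply for the stated equalities of (bi)similarity is left implicit in the paper but is handled correctly in your write-up.
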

\noindent It follows that notions of (bi)similarity inherit
properties expressed in terms of converse and composition from the
inducing lax extensions; for instance:
\begin{lemma}\label{lem:sim-props}
  Let~$L\colon F\to F$ be a relational connector. Then the following
  hold.
  \begin{enumerate}
  \item If~$L$ is transitive, then $\simul_L$ and $\bisim_L$ are
    transitive.
  \item If~$L$ is symmetric, then $\bisim_L$ is symmetric. Moreover,
    every $L$-simulation is an $L$-bisimulation, so
    $\simul_L\;\,=\;\,\bisim_L$.
  \item If~$L$ extends~$F$, then $\simul_L$ and $\bisim_L$ are
    reflexive. 
  \end{enumerate}
\end{lemma}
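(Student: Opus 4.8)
The plan is to derive all three parts from the structural results already established for simulations, namely \autoref{lem:functorial-simulation} (composites) and \autoref{lem:sim-conv} (converses), together with the definitional characterizations of transitivity, symmetry, and the $F$-extension property. The only auxiliary fact I need beyond these is a straightforward monotonicity observation: if $K\le K'$ are relational connectors $F\to G$, then every $K$-simulation is a $K'$-simulation, since $Kr\subseteq K'r$ immediately upgrades the defining inclusion $r\subseteq\rev\delta\cdot Kr\cdot\gamma$ to $r\subseteq\rev\delta\cdot K'r\cdot\gamma$; hence $\simul_K\subseteq\simul_{K'}$, and likewise $\bisim_K\subseteq\bisim_{K'}$ when $F=G$.

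For part~1, I fix $F$-coalgebras $C,D,E$ and instantiate \autoref{lem:functorial-simulation} with both connectors equal to $L$ (so $F=G=H$), obtaining $\simul^{D,E}_L\cdot\simul^{C,D}_L\subseteq\simul^{C,E}_{L\cdot L}$. Transitivity of $L$ gives $L\cdot L\le L$, so by the monotonicity observation $\simul^{C,E}_{L\cdot L}\subseteq\simul^{C,E}_L$, and composing the two inclusions yields transitivity of $\simul_L$. The same argument, using the bisimilarity version of \autoref{lem:functorial-simulation}, gives transitivity of $\bisim_L$.

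For part~2, I first use \autoref{lem:sim-conv} in the form $\bisim^{C,D}_{\rev L}=\rev{(\bisim^{D,C}_L)}$; since $L$ is symmetric we have $\rev L=L$ by \autoref{lem:symm}, so $\bisim^{C,D}_L=\rev{(\bisim^{D,C}_L)}$, which is exactly symmetry of $\bisim_L$. For the coincidence $\simul_L=\bisim_L$, let $r$ be any $L$-simulation; by \autoref{lem:sim-conv} its converse $\rev r$ is an $\rev L$-simulation, hence an $L$-simulation as $\rev L=L$, so $r$ is already an $L$-bisimulation. Since $L$-bisimulations are $L$-simulations by definition, the two classes coincide, whence $\simul_L=\bisim_L$.

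For part~3, it suffices to show that the diagonal $\Delta_C$ is an $L$-simulation on any $F$-coalgebra $(C,\gamma)$, since then $\Delta_C\subseteq\simul^{C,C}_L$ gives reflexivity. Here the $F$-extension property $\Delta_{FC}\le L\Delta_C$ is the key: for $x\mathrel{\Delta_C}x$ we have $\gamma(x)\mathrel{\Delta_{FC}}\gamma(x)$ and hence $\gamma(x)\mathrel{L\Delta_C}\gamma(x)$, which is the required simulation condition. As $\rev{\Delta_C}=\Delta_C$ is likewise an $L$-simulation, $\Delta_C$ is in fact an $L$-bisimulation, so $\bisim_L$ is reflexive as well. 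None of these steps presents a genuine obstacle: the substantive work is encapsulated in the earlier composition and converse lemmas, and the only point requiring slight care is correctly tracking the direction of the inequalities $L\cdot L\le L$ and $\rev L=L$ through the monotonicity of similarity in its inducing connector.
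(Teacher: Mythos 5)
Your proof is correct and follows essentially the same route as the paper: your auxiliary monotonicity observation is precisely the paper's helper lemma (that $K\le K'$ makes every $K$-simulation a $K'$-simulation), and each part is then derived, as in the paper, from \autoref{lem:functorial-simulation}, \autoref{lem:sim-conv} (with \autoref{lem:symm}), and the diagonal argument $\Delta_{FC}\subseteq L\Delta_C$ respectively.
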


\noindent As a further immediate consequence of
\autoref{lem:functorial-simulation} and \autoref{lem:sim-conv}, we
have the following criterion for preservation of (bi)similarity under
relational connectors:
\begin{theorem}[Transfer of bisimilarity]\label{cor:bisim-pres}
  Let $K\colon F\to F$, $L\colon F\to G$, $H\colon G\to G$ be
  relational connectors such that $L\cdot K\cdot\rev{L}\le H$. Then
  $\simul_L\cdot \simul_K\cdot\rev{\simul_L}\;\,\subseteq\;\,
  \simul_H$ and
  $\bisim_L\cdot \bisim_K\cdot\rev{\bisim_L}\;\,\subseteq\;\,
  \bisim_H$.
\end{theorem}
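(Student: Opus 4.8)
The plan is to derive the statement directly from \autoref{lem:functorial-simulation} (composites of simulations) and \autoref{lem:sim-conv} (converses of simulations), together with the elementary fact that similarity is monotone in the connector. Fix a $G$-coalgebra $(D,\delta)$, $F$-coalgebras $C,C'$, and a $G$-coalgebra $(D',\delta')$, so that the three factors of $\simul_L\cdot\simul_K\cdot\rev{\simul_L}$ read, from right to left, as $\rev{\simul_L^{C,D}}\colon\frel{D}{C}$, then $\simul_K^{C,C'}\colon\frel{C}{C'}$, then $\simul_L^{C',D'}\colon\frel{C'}{D'}$; the composite is thus a relation $\frel{D}{D'}$, matching the type of $\simul_H$. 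Relational composition is associative, so the parenthesization is irrelevant.

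First I would use \autoref{lem:sim-conv} to turn the converse factor into a similarity for the converse connector: $\rev{\simul_L^{C,D}}=\simul_{\rev L}^{D,C}$, where $\rev L\colon G\to F$. Now all three factors are genuine similarities, and I apply \autoref{lem:functorial-simulation} twice. Composing the $\rev L$-simulation $\simul_{\rev L}^{D,C}$ with the $K$-simulation $\simul_K^{C,C'}$ yields $\simul_K^{C,C'}\cdot\simul_{\rev L}^{D,C}\subseteq\simul_{K\cdot\rev L}^{D,C'}$ (the connector composite being $K\cdot\rev L\colon G\to F$). Composing the result with the $L$-simulation $\simul_L^{C',D'}$ then gives $\simul_L^{C',D'}\cdot\simul_K^{C,C'}\cdot\simul_{\rev L}^{D,C}\subseteq\simul_{L\cdot K\cdot\rev L}^{D,D'}$, now with connector composite $L\cdot K\cdot\rev L\colon G\to G$. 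The one thing to watch here is that the types line up: $\rev L\colon G\to F$, $K\colon F\to F$ and $L\colon F\to G$ compose in exactly this order, which is precisely what makes the double application of \autoref{lem:functorial-simulation} legitimate.

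Finally I would invoke monotonicity of similarity in the connector. Since $L\cdot K\cdot\rev L\le H$ by hypothesis, monotonicity of relational connectors (\autoref{def:relational-connector}, condition~1) shows that any $(L\cdot K\cdot\rev L)$-simulation $r\colon\frel{D}{D'}$ between $G$-coalgebras $(D,\delta),(D',\delta')$ satisfies $r\subseteq\rev{\delta'}\cdot(L\cdot K\cdot\rev L)r\cdot\delta\subseteq\rev{\delta'}\cdot Hr\cdot\delta$, hence is an $H$-simulation; therefore $\simul_{L\cdot K\cdot\rev L}\subseteq\simul_H$. Chaining the inclusions gives $\simul_L\cdot\simul_K\cdot\rev{\simul_L}\subseteq\simul_H$, as required.

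For the bisimilarity statement one argues identically, using instead the bisimulation clauses of \autoref{lem:functorial-simulation} and \autoref{lem:sim-conv}; note that this presupposes $F=G$, since $\bisim_L$ is only defined in that case (so that all three connectors then live on the same functor). I do not expect a genuine obstacle here: the mathematical content is entirely contained in the two cited lemmas, and the only real care needed is the bookkeeping of coalgebra types and of the composition order of the connectors, plus the small observation that $\simul_{(-)}$ is monotone in the connector, which is immediate from monotonicity of connectors.
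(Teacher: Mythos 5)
Your proof is correct and follows essentially the same route as the paper, whose own proof simply cites \autoref{lem:functorial-simulation}, \autoref{lem:sim-conv}, and the monotonicity of $(\text{bi})$similarity in the connector (stated in the appendix as a small lemma, which you re-derive inline); the only cosmetic slip is attributing that last step to condition~1 of \autoref{def:relational-connector}, when what is actually used is just the definition of the ordering $L\le K$ on connectors ($Lr\subseteq Kr$ for all $r$).
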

\begin{example}[Transfer of bisimilarity between LTS of different
  type]\label{expl:bisim-transfer}
  Recall the relational connector
  $L_R\colon\pow(\A\times(-))\to\pow(\B\times(-))$ induced from a
  relation $R\colon\frel{\A}{\B}$ as per \autoref{expl:lts-connector}.
  We note that
  \begin{math}
    L_{\rev{R}}=\rev{(L_R)}.
  \end{math}
  This implies that for every $L$-simulation~$r$, $\rev{r}$ is an
  $L_{\rev{R}}$-simulation, so we suggestively write~$\bisim_R$ for
  $\simul_{L_R}$ and speak of \emph{$L_R$-bisimilarity}.

  Recall that the usual notion of bisimilarity on LTS is captured by
  the identity relational connectors on~$F$ and~$G$, respectively
  (\autoref{expl:LTS-lax}, \autoref{expl:Barr}). It is straightforward
  to check that if~$R$ is right total, then
  \begin{equation*}
    L_R\cdot\id_F\cdot \rev{L_R}= L_R\cdot \rev{L_R}\le\id_G,
  \end{equation*}
  so that by \autoref{cor:bisim-pres}, $\bisim_R$ transfers
  bisimilarity from $F$-coalgebras to~$G$-coalgebras. In elementwise
  notation, this is phrased as follows: Let $c,c'$ be states in an
  $F$-coalgebra~$C$, and let $d,d'$ be states in a $G$-coalgebra~$D$
  such that $c'\bisim_R d'$, $c\bisim_Rd$, and $c\bisim_F c'$.
  Then~$d\bisim_G d'$. Similarly, if~$R$ is left total, then
  $\bisim_R$ transfers bisimilarity from $G$-coalgebras to
  $F$-coalgebras, so of course if~$R$ is left and right total, then it
  transfers bisimilarity in both directions. A similar principle is
  under the hood of the proof of the operational equivalence of the
  standard $\lambda$-calculus and a variable-free variant called the
  algebraic $\lambda$-calculus in recent work on higher-order
  mathematical operational
  semantics~\cite{GoncharovEA24b}.
\end{example}

\begin{example}[Shared traces]\label{expl:trace-bisim}
  Recall the symmetric relational connector
  $\tracecon\cdot\rev{{\tracecon}}\colon F\to F$ from
  \autoref{expl:trace-con}, where $F=\pow(\A\times(-))$ is the functor
  modelling $\A$-LTS. States $x,y$ in $\A$-LTS are
  $\tracecon\cdot\rev{{\tracecon}}$-bisimilar iff~$x$ and~$y$ have a
  common infinite trace. We may view~$x$ as specifying a set of bad
  infinite traces; then~$x$ and~$y$ are \emph{not}
  $\tracecon\cdot\rev{{\tracecon}}$-bisimilar iff~$y$ does \emph{not} have
  a bad infinite trace.
\end{example}

\begin{example}[Weak simulation]
  Let $\A$ be a set of labels, with $\tau \in \A$ a distinguished
  label for ``internal'' steps.  Let $\A^*$ be the set of words over
  $\A$, with the empty word denoted by $\varepsilon$,
  $F = \pow(\A \times (-))$ and $G = \pow(\A^* \times (-))$.  We
  define a relational connector $L \colon F \rightarrow G$ by
  instantiating (the second half of)
  \autoref{expl:different-labels-transformation} to
  $R \subseteq \A \times \A^*$ given by
  $R = \{(l, \tau^i l \tau^j) \mid l \in \A, i,j \geq 0\} \cup
  \{(\tau, \varepsilon)\}$.  In the particular case where the
  transitions in the $G$-coalgebra $(D,\delta)$ at hand arise by
  composing transitions from an $F$-coalgebra $(D,\delta_0)$,
  $L$-simulations from an $F$-coalgebra $(C,\gamma)$ to $(D,\delta)$
  are precisely \emph{weak simulations} between the $\A$-LTS
  $(C,\gamma)$ and $(D,\delta_0)$.
\end{example}

\begin{example}[Conformance testing]\label{expl:ioco}
  In model-based testing, a \emph{specification} is compared to an
  \emph{implementation}.  Typically, both specifications and
  implementations are modelled as transition systems, and a given
  notion of \emph{conformance} stipulates when an implementation is
  correct w.r.t.\ a specification.  In the case of the \emph{ioco}
  (input/output conformance) relation~\cite{Tretmans08}, the
  specification is an LTS over a set of input and output labels. The
  implementation is an LTS as well, but is required to be
  \emph{input-enabled}, meaning that for every state and every input
  label there is an outgoing transition with that label. We focus on
  the deterministic case, which enables a coinductive formulation of
  ioco conformance~\cite{BosJM19}.  This example has been cast in a
  general coalgebraic framework~\cite{RotW23}, in which however the
  distinction between the type of specification and implementation
  cannot be made (and in fact, they are assumed to have the same state
  space).

  We write $X\to Y$ and $X \partialto Y$ for the sets of total and
  partial functions from~$X$ to~$Y$, respectively. We denote the
  domain of $f\colon X \partialto Y$ by $\dom(f) \subseteq X$, and put
  $X \partialtone Y=\{f\colon X \partialto Y\mid
  \dom(f)\neq\emptyset\}$. Now let $I,O$ be input and output alphabets,
  respectively.  Define the functor $F$ by
  $F(X) = (I \partialto X) \times (O \partialtone X)$, and the functor
  $G$ by $G(X) = (I \to X) \times (O \partialtone X)$. An
  $F$-coalgebra is a \emph{suspension automaton}, which is
  \emph{non-blocking} (there is always at least one output-labelled
  transition from every state). A $G$-coalgebra is an
  \emph{input-enabled} suspension automaton.

  Define $L \colon F \rightarrow G$ on  $r \colon \frel{X}{Y}$ by
  \begin{equation*}
    (\delta_I, \delta_O) \mathrel{L r} (\tau_I, \tau_O) \iff
    \begin{array}{l}
      \forall i \in \dom(\delta_I). \; \delta_I(i)\mathrel{r} \tau_I(i), \quad \text{and} \\
      \forall o \in \dom(\tau_O). \; o \in \dom(\delta_O) \text{ and } \delta_O(o)\mathrel{r} \tau_O(o). \\
    \end{array}
  \end{equation*}
  This is a relational connector, and $L$-simulations capture precisely the
  ioco-relation on suspension automata, in the coinductive formulation given
  in~\cite{BosJM19}. 
  
  The composite relational connector $\rev{L} \cdot L \colon F \to F$
  is described as follows:
  \begin{equation*}
  (\delta_I, \delta_O) \mathrel{(\rev{L} \cdot L) r} (\delta_I', \delta_O') \iff
  \begin{array}{l}
    \forall i \in \dom(\delta_I) \cap \dom(\delta_I'). \; \delta_I(i)\mathrel{r} \delta_I'(i), \quad \text{and} \\
    \exists o \in \dom(\delta_I) \cap \dom(\delta_I'). \; \delta_O(o)\mathrel{r} \delta_O'(o). \\
  \end{array}
  \end{equation*}
  The existential quantification on outputs arises in this factorization
  due to the non-emptyness of the domain of partial functions
  $O \partialtone X$. 
  Simulations for this composite relational connector are precisely
  the \emph{ioco compatibility} relations between specifications~\cite{BosJM19},
  generalized to a coalgebraic setting in~\cite{RotW23}.
\end{example}

\section{Kantorovich Relational Connectors}
\label{sec:kantorovich}
We next present a construction of relational connectors from relations
between modalities for the given functors; in honour of the formal
analogy with the classical Kantorovich metric and its coalgebraic
generalizations~\cite{BaldanEA18,WildSchroder22,DBLP:journals/logcom/SprungerKDH21}, 
we refer to the arising connectors as \emph{Kantorovich relational connectors}.

In this context, modalities are understood as induced by predicate
liftings in the style of coalgebraic
logic~\cite{Pattinson04,Schroder08}, and indeed we use the terms
\emph{modality} and \emph{predicate lifting} interchangeably. Recall
that an $n$-ary \emph{predicate lifting} for a functor~$F$ is a
natural transformation~$\lambda$ with components
\begin{equation*}
  \lambda_X\colon (2^X)^n\to 2^{FX}
\end{equation*}
(or just~$\lambda$) where $2^{(-)}$ denotes the \emph{contravariant
  powerset functor}; that is,~$2^X$ is the powerset of a set~$X$, and
$2^f\colon 2^Y\to 2^X$ takes preimages under a map $f\colon X\to
Y$. The naturality condition thus says explicitly that, for $a\in FX$
$f\colon X\to Y$, and $A_1,\dots,A_n\in 2^Y$, we have
$Ff(a)\in\lambda_Y(A_1,\dots,A_n)$ iff
$a\in\lambda_X(f^{-1}[A_1],\dots,f^{-1}[A_n])$. We say that~$\lambda$
is \emph{monotone} if
$\lambda(A_1,\dots,A_n)\subseteq \lambda(B_1,\dots,B_n)$ whenever
$A_i\subseteq B_i$ for $i=1,\dots,n$.  The \emph{dual} $\dual\lambda$
of $\lambda$ is the predicate lifting defined by
$\dual\lambda_X(A_1,\dots,A_n)=FX\setminus\lambda_X(X\setminus
A_1,\dots,X\setminus A_n)$.

In logical syntax, we abuse~$\lambda$ as an $n$-ary modality:
If~$\phi_1,\dots,\phi_n$ are formulae in some modal logic equipped
with a satisfaction relation~$\models$ between states in
$F$-coalgebras and formulae, with extensions
$\Sem{\phi_i}=\{x\in C\mid x\models\phi_i\}\in 2^C$ in a given
$F$-coalgebra $(C,\gamma)$, then the semantics of the modalized
formula $\lambda(\phi_1,\dots,\phi_n)$ is given by
$x\models\lambda(\phi_1,\dots,\phi_n)$ iff
$\gamma(x)\in\lambda_C(\Sem{\phi_1},\dots,\Sem{\phi_n})$. For
instance, the unary predicate lifting~$\Diamond$ for the powerset
functor~$\pow$ given by
$\Diamond_X(A)=\{S\in\pow(X)\mid S\cap A\neq\emptyset\}$ captures
precisely the usual diamond modality on Kripke frames (`there exists
some successor such that').

A set~$\Lambda$ of monotone predicate liftings for~$F$ induces a lax
extension $L_\Lambda$ of~$F$ defined for $r\colon\frel{X}{Y}$,
$a\in FX$, and $b\in FY$ by $a\mathrel{L_\Lambda r}b$ iff whenever
$a\in\lambda_X(A_1,\dots,A_n)$ for $n$-ary $\lambda\in\Lambda$ and
$A_1,\dots,A_n$, then $b\in\lambda_Y(r[A_1],\dots,r[A_n])$
(cf.~\cite{MartiVenema12,MartiVenema15,GorinSchroder13}). We show that
more generally, one can induce relational connectors from
\emph{relations} between predicate liftings:

\begin{defn}[Kantorovich connectors]\label{def:kantorovich}
  For a functor~$F$, we write $\predlift{F}$ for the set of monotone
  predicate liftings for~$F$.  Now let~$F$, $G$ be functors, and
  let~$\Lambda$ be a relation
  $\Lambda\colon\frel{\predlift{F}}{\predlift{G}}$ that
  \emph{preserves arity}; that is, if $(\lambda,\mu)\in\Lambda$,
  then~$\lambda$ and~$\mu$ have the same arity, which we then view as
  the \emph{arity} of $(\lambda,\mu)$. We define a relational
  connector $L_\Lambda\colon F\to G$ for $r\colon\frel{X}{Y}$,
  $a\in FX$, and $b\in GY$ by $a\mathrel{L_\Lambda r}b$ iff whenever
  $(\lambda,\mu)\in\Lambda$ is $n$-ary and $A_1,\dots,A_n\in 2^X$, then
  \begin{equation*}
    a\in\lambda_X(A_1,\dots,A_n)\quad\text{implies}
    \quad b\in\mu_Y(r[A_1],\dots,r[A_n]).
  \end{equation*}
  We briefly refer to $L_\Lambda$-similarity as
  \emph{$\Lambda$-similarity}, and write~$\simul_\Lambda$
  for~$\simul_{L_\Lambda}$. A relational connector~$L$ is
  \emph{Kantorovich} if it has the form $L=L_\Lambda$ for a
  suitable~$\Lambda$ as above. We write
  $\dual\Lambda=\{(\dual\lambda,\dual\mu)\mid(\lambda,\mu)\in\Lambda\}$.
\end{defn}

\begin{theorem}\label{thm:kantorovich}
  Under \autoref{def:kantorovich}, $L_\Lambda$ is indeed a relational
  connector.
\end{theorem}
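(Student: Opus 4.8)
The plan is to verify the two defining axioms of a relational connector (\autoref{def:relational-connector}) for $L_\Lambda$, namely monotonicity and naturality. Throughout I abbreviate $L=L_\Lambda$ and freely use that every $\mu\in\predlift{G}$ and every $\lambda\in\predlift{F}$ is both monotone and natural.

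\emph{Monotonicity} is immediate and I would dispatch it first. Suppose $r_1\subseteq r_2\colon\frel{X}{Y}$ and $a\mathrel{Lr_1}b$. Given an $n$-ary pair $(\lambda,\mu)\in\Lambda$ and $A_1,\dots,A_n\in 2^X$ with $a\in\lambda_X(A_1,\dots,A_n)$, the hypothesis yields $b\in\mu_Y(r_1[A_1],\dots,r_1[A_n])$. Since $r_1\subseteq r_2$ gives $r_1[A_i]\subseteq r_2[A_i]$ for each $i$, monotonicity of $\mu$ upgrades this to $b\in\mu_Y(r_2[A_1],\dots,r_2[A_n])$. As $(\lambda,\mu)$ and the $A_i$ were arbitrary, $a\mathrel{Lr_2}b$.

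\emph{Naturality} is the substantial part, which I would establish in the pointwise form \eqref{eq:nat-pointwise}: for $f\colon X'\to X$, $g\colon Y'\to Y$, $r\colon\frel{X}{Y}$, $a'\in FX'$ and $b'\in GY'$, I want $a'\mathrel{L(\rev{g}\cdot r\cdot f)}b'$ iff $Ff(a')\mathrel{Lr}Gg(b')$. Writing $s=\rev{g}\cdot r\cdot f$, one first records the basic identity that $x'\mathrel{s}y'$ iff $f(x')\mathrel{r}g(y')$, from which the two image facts
\begin{equation*}
  g^{-1}[r[f[A']]]=s[A']\qquad\text{and}\qquad s[f^{-1}[A]]\subseteq g^{-1}[r[A]]
\end{equation*}
follow for $A'\subseteq X'$, $A\subseteq X$; note that only an inclusion holds in the second case, since elements of $A$ outside the image of $f$ are lost. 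For the implication from right to left, given an $n$-ary $(\lambda,\mu)\in\Lambda$ and $A_1',\dots,A_n'\in 2^{X'}$ with $a'\in\lambda_{X'}(A_1',\dots,A_n')$, I set $A_i=f[A_i']$; monotonicity of $\lambda$ (using $A_i'\subseteq f^{-1}[A_i]$) together with naturality of $\lambda$ gives $Ff(a')\in\lambda_X(A_1,\dots,A_n)$, the hypothesis yields $Gg(b')\in\mu_Y(r[A_1],\dots,r[A_n])$, and naturality of $\mu$ followed by the first image identity transports this to $b'\in\mu_{Y'}(s[A_1'],\dots,s[A_n'])$. For the converse implication, given $A_1,\dots,A_n\in 2^X$ with $Ff(a')\in\lambda_X(A_1,\dots,A_n)$, I set $A_i'=f^{-1}[A_i]$; naturality of $\lambda$ gives $a'\in\lambda_{X'}(A_1',\dots,A_n')$, the hypothesis yields $b'\in\mu_{Y'}(s[A_1'],\dots,s[A_n'])$, and the second image inclusion combined with monotonicity and then naturality of $\mu$ deliver $Gg(b')\in\mu_Y(r[A_1],\dots,r[A_n])$.

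The main subtlety I would flag is exactly this asymmetry in the two image facts: one direction hinges on an \emph{exact} identity $g^{-1}[r[f[A']]]=s[A']$, while the other has only the inclusion $s[f^{-1}[A]]\subseteq g^{-1}[r[A]]$ and must compensate through monotonicity of the predicate liftings. This is precisely the point at which the monotonicity hypothesis on the liftings in $\Lambda$ becomes indispensable, and it explains why \autoref{def:kantorovich} restricts to monotone liftings rather than arbitrary ones.
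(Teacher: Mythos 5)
Your proof is correct and follows essentially the same route as the paper's: both directions of naturality are handled by instantiating the hypothesis at the direct image $f[A']$ (resp.\ the preimage $f^{-1}[A]$), then combining naturality of the predicate liftings with monotonicity of $\lambda$ plus totality of $f$ in one direction, and monotonicity of $\mu$ plus univalence of $f$ (your inclusion $f[f^{-1}[A]]\subseteq A$) in the other. The only differences are presentational — you isolate the two image facts upfront, carry the $n$-ary case explicitly where the paper writes unary liftings, and spell out the monotonicity check that the paper calls immediate.
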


\begin{example}\label{expl:kantorovich}%
  \begin{enumerate}[wide]
  \item\label{item:lts} For every~$l\in\A$, we have a predicate
    lifting $\Diamond_l$ for $\pow(\A\times(-))$ given by
    $\Diamond_l(A)=\{S\in\pow(\A\times X)\mid \exists x\in
    A.\,(l,x)\in S\}$. The arising modality is the usual diamond
    modality of Hennessy-Milner logic, and the dual of~$\Diamond_l$ is
    the usual box modality~$\Box_l$. The connectors
    $K_R,L_R\colon\pow(\A\times(-))\to\pow(\B\times(-))$ from
    \autoref{expl:lts-comp} are Kantorovich: We have $K_R=L_\Lambda$
    and $L_R=L_{\Lambda\cup\dual\Lambda}$ for
    $\Lambda=\{(\Diamond_l,\Diamond_m)\mid (l,m)\in R\}$.
  \item\label{item:trace-kant} We can restrict the predicate
    lifting~$\Diamond_l$ from the previous item to a predicate
    lifting~$\Diamond_l$ for $\A\times(-)$ (so
    $\Diamond_l(A)=\{(l,x)\mid x\in A\}$). The relational connector
    $\tracecon=\forthcon\natcomp\iota\colon\A\times(-)\to\pow(\A\times(-))$
    from \autoref{expl:trace-con} is Kantorovich for
    $\Lambda=\{(\Diamond_l,\Diamond_l)\mid l\in\A\}$. We will later
    give a Kantorovich description of the composite connector
    $\tracecon\cdot\rev{{\tracecon}}$ (\autoref{expl:kant-comp}).
  \item\label{item:ioco-kant} Given a label $l \in \A$,
    define the predicate lifting $\Diamond_l$ for $\A\partialto(-)$ by
    $\Diamond_l(A) = \{\delta\colon \A\partialto X \mid l \in
    \dom(\delta) \text{ and } \delta(l) \in A\}$ for $A\in 2^X$.  Its
    dual is given by
    $\Box_l(A) = \{\delta \mid l \in \dom(\delta) \text{ implies }
    \delta(l) \in A\}$.  Further, we define a $0$-ary modality
    ${\downarrow_l} = \{\delta \mid l \not \in \dom(\delta)\}$.
  These modalities allow us to capture the \emph{ioco} connector $L \colon F \rightarrow G$
  from \autoref{expl:ioco}. First, assuming that $I$ and $O$ are disjoint, the 
  modalities $\Diamond_l, \Box_l, \downarrow_l$ for $i \in I \cup O$ can be extended 
  to $F$ and $G$ in the obvious way by projection (and they are extended to total functions and partial functions 
  with a non-empty domain without change). Now, $L$ is Kantorovich for
  $\Lambda = \{(\Diamond_i, \Diamond_i) \mid i \in I\} \cup \{(\Box_o, \Box_o) \mid o \in O\} \cup \{({\downarrow_o}, {\downarrow_o}) \mid o \in O\}$.
    
\item \label{item:prob-kant} Given $\epsilon\in[0,1]$, we have
  predicate liftings~$L_{\epsilon,l}$ (for $l\in\A$) for the
  functor~$\Dist(\A\times(-))$ modelling probabilistic LTS, given by
  $L_{\epsilon,l}(A)=\{\alpha\in\Dist(\A\times X)\mid
  \alpha(\{l\}\times A)\ge\epsilon\}$ for $A\in 2^X$. Putting
  $\Lambda=\{(\Diamond_l,L_{\epsilon,l})\mid l\in\A\}$, we obtain
  relational connectors
  $L_\Lambda,L_{\dual\Lambda},L_{\Lambda\cup\dual\Lambda}\colon\pow(A\times(-))\to\Dist(\A\times(-))$.
  Explicitly, for $r\colon\frel{X}{Y}$, $S\in\pow(\A\times X)$, and
  $\alpha\in\Dist(\A\times Y)$, we have (i)
  $S\mathrel{L_\Lambda r}\alpha$ iff whenever $(l,x)\in S$, then
  $\alpha(\{l\}\times r[\{x\}])\ge\epsilon$ ; (ii)
  $S\mathrel{L_{\dual\Lambda}\,r}\alpha$ iff whenever
  $\alpha(\{l\}\times B)\ge\epsilon$, then there are $(l,x)\in S$ and
  $y\in B$ such that $x\mathrel{r}y$; and~(iii)
  $S\mathrel{L_{\Lambda\cup\dual\Lambda}\,r}\alpha$ iff both~(i)
  and~(ii) hold. Roughly speaking, similarity w.r.t.\ these connectors
  between an $\A$-LTS~$C$ and probabilistic $\A$-LTS~$D$ specifies
  what may happen in~$D$ with non-negligible probability,
  where~$\epsilon$ specifies the negligibility threshold. For
  instance, an $L_\Lambda$-simulation $r\colon C\to D$ witnesses that
  behaviour embodied in~$C$ is enabled with non-negligible probability
  in~$D$, while an $L_{\dual\Lambda}$-simulation $r\colon C\to D$
  witnesses that things that can happen with non-negligible
  probability in~$D$ are foreseen in~$C$.
  \end{enumerate}
\end{example}
\noindent We record basic facts on the interaction of the Kantorovich
construction with composition and converse of relational connectors:

\begin{theorem}\label{thm:kant-functorial}
  Let $\Lambda\colon\frel{\predlift{F}}{\predlift{G}}$ and
  $\Theta\colon\frel{\predlift{G}}{\predlift{H}}$. Then
  \begin{enumerate}
  \item
    \label{item:kant-comp} $L_\Theta\cdot L_\Lambda\le
    L_{\Theta\cdot\Lambda}$
  \item $\rev{(L_\Lambda)}=L_{\rev{(\dual\Lambda)}}$.
\end{enumerate}
\end{theorem}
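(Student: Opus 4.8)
The plan is to prove the two statements separately, the first being a direct unfolding of the definitions and the second requiring careful interaction between duals of predicate liftings, relational converse, and monotonicity.

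For the first claim, $L_\Theta\cdot L_\Lambda\le L_{\Theta\cdot\Lambda}$, I would first reduce to a single factorization: by the definition \eqref{eq:comp} of composition together with monotonicity of $L_{\Theta\cdot\Lambda}$, it suffices to show $L_\Theta s\cdot L_\Lambda t\le L_{\Theta\cdot\Lambda}(s\cdot t)$ for every $t\colon\frel{X}{Y}$, $s\colon\frel{Y}{Z}$. So suppose $a\mathrel{L_\Lambda t}b$ and $b\mathrel{L_\Theta s}c$, and take an $n$-ary $(\lambda,\nu)\in\Theta\cdot\Lambda$ with a witness $\mu$ (so $(\lambda,\mu)\in\Lambda$ and $(\mu,\nu)\in\Theta$) and $A_1,\dots,A_n\in 2^X$ with $a\in\lambda_X(A_1,\dots,A_n)$. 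The definition of $L_\Lambda$ then gives $b\in\mu_Y(t[A_1],\dots,t[A_n])$, and the definition of $L_\Theta$ gives $c\in\nu_Z(s[t[A_1]],\dots,s[t[A_n]])$. Since relational image is functorial under composition, $s[t[A_i]]=(s\cdot t)[A_i]$, so this is exactly the condition witnessing $a\mathrel{L_{\Theta\cdot\Lambda}(s\cdot t)}c$.

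For the second claim, $\rev{(L_\Lambda)}=L_{\rev{(\dual\Lambda)}}$, I fix $r\colon\frel{X}{Y}$ and compare negations. Unwinding converse, $b\mathrel{\rev{(L_\Lambda)}r}a$ iff $a\mathrel{L_\Lambda\rev r}b$, whose negation yields some $(\lambda,\mu)\in\Lambda$ and $A_1,\dots,A_n\in 2^Y$ with $a\in\lambda_Y(A_1,\dots,A_n)$ but $b\notin\mu_X(\rev r[A_1],\dots,\rev r[A_n])$. On the other side, $b\mathrel{L_{\rev{(\dual\Lambda)}}r}a$ ranges over pairs $(\dual\mu,\dual\lambda)$ with $(\lambda,\mu)\in\Lambda$; using $\dual\mu_X(B_1,\dots,B_n)=GX\setminus\mu_X(X\setminus B_1,\dots,X\setminus B_n)$ and $\dual\lambda_Y(r[B_1],\dots)=FY\setminus\lambda_Y(Y\setminus r[B_1],\dots)$, its negation yields some $(\lambda,\mu)\in\Lambda$ and $B_1,\dots,B_n\in 2^X$ with $b\notin\mu_X(X\setminus B_1,\dots)$ but $a\in\lambda_Y(Y\setminus r[B_1],\dots)$. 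The bridge is the set identity $\rev r[A]\cap B=\emptyset\iff A\cap r[B]=\emptyset$ (both assert the absence of $x\in B$, $y\in A$ with $x\mathrel r y$). Given a witness $(A_i)$ of the first negation, I set $B_i=X\setminus\rev r[A_i]$, so that $X\setminus B_i=\rev r[A_i]$ handles the $\mu$-condition verbatim, while $A_i\subseteq Y\setminus r[B_i]$ follows from the identity and monotonicity of $\lambda$ upgrades $a\in\lambda_Y(A_i)$ to $a\in\lambda_Y(Y\setminus r[B_i])$; conversely, given $(B_i)$, I set $A_i=Y\setminus r[B_i]$, giving $a\in\lambda_Y(A_i)$ directly, while $\rev r[A_i]\subseteq X\setminus B_i$ plus monotonicity of $\mu$ yields $\mu_X(\rev r[A_i])\subseteq\mu_X(X\setminus B_i)$ and hence $b\notin\mu_X(\rev r[A_i])$.

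The main obstacle is the bookkeeping in the second claim: keeping the complements on the correct side and invoking monotonicity of the liftings in the right direction (once to \emph{enlarge} the argument of $\lambda$, once to \emph{shrink} the argument of $\mu$). This is precisely where the standing assumption that all predicate liftings in $\predlift F$ and $\predlift G$ are monotone is indispensable.
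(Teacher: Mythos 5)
Your proof is correct and follows essentially the same route as the paper: part~1 is the same pointwise unfolding using $(s\cdot t)[A]=s[t[A]]$, and part~2 rests on the same complement-swapping inclusion (your bridge identity $\rev r[A]\cap B=\emptyset\iff A\cap r[B]=\emptyset$ is exactly what the paper uses as $\rev{r}[Y\setminus r[A]]\subseteq X\setminus A$) together with monotonicity of the liftings. The only organizational difference is that you verify both inequalities of part~2 directly by symmetric witness transformations, whereas the paper proves only $\rev{(L_\Lambda)}\le L_{\rev{(\dual\Lambda)}}$ by hand and obtains the converse for free by instantiating that inequality at $\rev{(\dual\Lambda)}$ and using $\rev{\big(\dual{\rev{(\dual\Lambda)}}\big)}=\Lambda$.
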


\noindent (Recall that $\overline\Lambda$ dualizes all modalities.)
Specializing to relational connectors $F\to F$, we thus recover the
standard way of inducing lax extensions from predicate
liftings~\cite{MartiVenema15,GoncharovEA24} as described above:
\begin{corollary}\label{cor:lax-kant}
  Let~$F$ be a functor, and let
  $\Lambda\colon\frel{\predlift{F}}{\predlift{F}}$.
  \begin{enumerate}
  \item\label{item:kant-trans} If
    $\Lambda\cdot\Lambda\subseteq\Lambda$, then $L_\Lambda$ is
    transitive.
  \item\label{item:kant-symm} If~$\Lambda$ is closed under duals,
    i.e.\ $\dual\Lambda\subseteq\Lambda$ (equivalently
    $\dual\Lambda=\Lambda$), and symmetric, then~$L_\Lambda$ is
    symmetric.
  \item\label{item:kant-lax} If $\Lambda\subseteq\id$,
    then~$L_\Lambda$ is a lax extension of~$F$.
  \item\label{item:kant-lax-normal} If $\Lambda\subseteq\id$ and the
    set $\{\lambda\mid(\lambda,\lambda)\in\Lambda\}$ of predicate
    liftings is separating, then~$L_\Lambda$ is a normal lax extension
    of~$F$.
\end{enumerate}
\end{corollary}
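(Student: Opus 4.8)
The plan is to reduce each clause to a property of the relation $\Lambda$ on predicate liftings, using the functoriality of the Kantorovich construction (\autoref{thm:kant-functorial}) and the characterisation of lax extensions among relational connectors (\autoref{lem:lax-relconn}). Throughout I abbreviate a tuple $(A_1,\dots,A_n)$ by $\vec A$ and write $r[\vec A]=(r[A_1],\dots,r[A_n])$.

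\emph{Transitivity} (\ref{item:kant-trans}). The natural starting point is \autoref{thm:kant-functorial}(\ref{item:kant-comp}) with $\Theta=\Lambda$, giving $L_\Lambda\cdot L_\Lambda\le L_{\Lambda\cdot\Lambda}$; it then remains to see $L_{\Lambda\cdot\Lambda}\le L_\Lambda$. I would argue this directly from \autoref{lem:comp-subset}: suppose $s\cdot t\subseteq r$ and $a\mathrel{L_\Lambda t}b\mathrel{L_\Lambda s}c$, and let $(\lambda,\nu)\in\Lambda$ with $a\in\lambda_X(\vec A)$; we must produce $c\in\nu_Z(r[\vec A])$. The idea is to route through a common intermediate modality: pick $\mu$ with $(\lambda,\mu)\in\Lambda$ and $(\mu,\nu)\in\Lambda$, so that $a\mathrel{L_\Lambda t}b$ yields $b\in\mu_Y(t[\vec A])$ and then $b\mathrel{L_\Lambda s}c$ yields $c\in\nu_Z(s[t[\vec A]])$; since $s[t[\vec A]]=(s\cdot t)[\vec A]\subseteq r[\vec A]$, monotonicity of $\nu$ gives $c\in\nu_Z(r[\vec A])$. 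When $\Lambda\subseteq\id$ (the case needed in \ref{item:kant-lax}), one has $\Lambda\cdot\Lambda=\Lambda$ and the intermediate modality is simply $\mu=\lambda$, so the argument is unconditional there.

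\emph{Symmetry} (\ref{item:kant-symm}) follows formally. By the second item of \autoref{thm:kant-functorial}, $\rev{(L_\Lambda)}=L_{\rev{(\dual\Lambda)}}$. Since the dual of a lifting is involutive, the hypothesis $\dual\Lambda\subseteq\Lambda$ gives $\dual\Lambda=\Lambda$, and together with symmetry $\rev\Lambda=\Lambda$ this yields $\rev{(\dual\Lambda)}=\Lambda$, hence $\rev{(L_\Lambda)}=L_\Lambda$; \autoref{lem:symm} then gives symmetry. \emph{Lax extension} (\ref{item:kant-lax}): by \autoref{lem:lax-relconn}(\ref{item:lax}) it suffices that $L_\Lambda$ is transitive and extends $F$. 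Transitivity is \ref{item:kant-trans} applied to $\Lambda\subseteq\id$. For extension, i.e.\ $\Delta_{FX}\subseteq L_\Lambda\Delta_X$, take $a\in FX$ and $(\lambda,\lambda)\in\Lambda$ with $a\in\lambda_X(\vec A)$; as $\Delta_X[\vec A]=\vec A$, we get $a\in\lambda_X(\Delta_X[\vec A])$, whence $a\mathrel{L_\Lambda\Delta_X}a$.

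\emph{Normal lax extension} (\ref{item:kant-lax-normal}): granted \ref{item:kant-lax}, $L_\Lambda$ is already a lax extension, and it remains to verify the normality condition. Here the plan is to exploit that the diagonal modalities $\{\lambda\mid(\lambda,\lambda)\in\Lambda\}$ separate the points of each $FX$: separation means that an element $a\in FX$ is determined by the family of pairs $(\lambda,\vec A)$ with $a\in\lambda_X(\vec A)$, which is exactly the data read off by $L_\Lambda\Delta_X$, and feeding this into the description of $L_\Lambda$ on diagonals yields normality by the homogeneous argument (cf.\ \cite{GoncharovEA24}). \textbf{Main obstacle.} The delicate point is the use of \autoref{thm:kant-functorial}(\ref{item:kant-comp}) in \ref{item:kant-trans}: the construction $\Lambda\mapsto L_\Lambda$ reverses inclusions, so the inequality from \autoref{thm:kant-functorial} and the hypothesis $\Lambda\cdot\Lambda\subseteq\Lambda$ do not compose mechanically, and the substance lies in producing, for each pair of $\Lambda$, the common intermediate modality needed to chain the two $L_\Lambda$-steps; for $\Lambda\subseteq\id$ this is automatic, but the general case must be handled by hand.
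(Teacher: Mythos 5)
Your treatment of items \ref{item:kant-symm} and \ref{item:kant-lax} is correct and coincides with the paper's own argument: symmetry follows from $\rev{(L_\Lambda)}=L_{\rev{(\dual\Lambda)}}$ (\autoref{thm:kant-functorial}) together with the observation that the hypotheses force $\rev{(\dual\Lambda)}=\Lambda$, and item \ref{item:kant-lax} combines transitivity with the trivial check that $L_\Lambda$ extends~$F$, using \autoref{lem:lax-relconn}.

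The substantive point is item \ref{item:kant-trans}, and there your proposal has a genuine gap---one that you flag honestly but do not close. Your direct argument requires, for each $(\lambda,\nu)\in\Lambda$, an intermediate $\mu$ with $(\lambda,\mu)\in\Lambda$ and $(\mu,\nu)\in\Lambda$; the existence of such a $\mu$ for every pair is precisely the inclusion $\Lambda\subseteq\Lambda\cdot\Lambda$, which is neither the stated hypothesis $\Lambda\cdot\Lambda\subseteq\Lambda$ nor a consequence of it, and announcing that ``the general case must be handled by hand'' does not handle it. Moreover, the gap cannot be closed: with the stated hypothesis the claim is false. Take $F=\pow$ and $\Lambda=\{(\Diamond,\Box)\}$, where $\Diamond_X(A)=\{S\mid S\cap A\neq\emptyset\}$ and $\Box_X(A)=\{S\mid S\subseteq A\}$ are both monotone. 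No pair in $\Lambda$ has first component $\Box$, so $\Lambda\cdot\Lambda=\emptyset\subseteq\Lambda$ and the hypothesis holds. But on $X=\{1,2\}$, with $a=c=\{1,2\}$ and $b=\emptyset$, one has $a\mathrel{L_\Lambda\Delta_X}b$ (each required conclusion $\emptyset\subseteq A$ is trivially true) and $b\mathrel{L_\Lambda\Delta_X}c$ (each premise $\emptyset\cap A\neq\emptyset$ is false), while $a\mathrel{L_\Lambda\Delta_X}c$ fails at $A=\{1\}$; since $\Delta_X=\Delta_X\cdot\Delta_X$, this shows $L_\Lambda\cdot L_\Lambda\not\le L_\Lambda$. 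What \autoref{thm:kant-functorial} plus the antitonicity of $\Lambda\mapsto L_\Lambda$ (a larger relation between liftings imposes more conditions, hence yields a smaller connector) actually proves is the variant of item \ref{item:kant-trans} with hypothesis $\Lambda\subseteq\Lambda\cdot\Lambda$: then $L_\Lambda\cdot L_\Lambda\le L_{\Lambda\cdot\Lambda}\le L_\Lambda$. So your diagnosis of the direction problem is exactly right, and it in fact exposes a defect in the statement itself; the paper's own proof (``immediate from \autoref{thm:kant-functorial}'') silently uses the reverse inclusion. The rest of the corollary survives, because for $\Lambda\subseteq\id$ one has $\Lambda\cdot\Lambda=\Lambda$, so both readings of the hypothesis coincide---which is also why your argument becomes unconditional in that case.

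On item \ref{item:kant-lax-normal}, your text is a plan rather than a proof, but the paper offers no more (``similarly immediate''), so you are at parity there. Do note, however, that $a\mathrel{L_\Lambda\Delta_X}b$ only gives \emph{one-sided} theory inclusion over the diagonal liftings, so passing to ``$a$ is determined by its theory'' needs the precise form of the separation condition (which the paper does not spell out) rather than bare injectivity of the theory map.
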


\begin{remark}[Composing Kantororovich connectors]\label{rem:comp-kant}
  The upper bound $L_\Theta\cdot L_\Lambda\le L_{\Theta\cdot\Lambda}$
  on composites of Kantorovich connectors $L_\Theta, L_\Lambda$ given
  in \autoref{thm:kant-functorial} is not always tight. In the simple
  case of the connectors
  $K_R\colon\pow(\A\times(-))\to\pow(\B\times(-))$ induced by
  relations~$R\colon\frel{\A}{\B}$ (Examples~\ref{expl:lts-comp}
  and~\ref{expl:kantorovich}), we do indeed have exact equality
  (\autoref{expl:lts-comp}). For the general case, one can improve the
  upper bound (by including \emph{more} pairs of modalities) in at
  least two ways. First, in the composite $\Theta\cdot\Lambda$ of the
  relations $\Lambda\colon\frel{\predlift{F}}{\predlift{G}}$ and
  $\Theta\colon\frel{\predlift{G}}{\predlift{H}}$, one can include
  weakening in the middle step. Formally, we write~$\le$ for the
  pointwise inclusion order on predicate liftings, and put
  $\Theta\logcomp\Lambda=\{(\lambda,\pi)\mid\exists(\lambda,\mu)\in\Lambda,(\mu',\pi)\in\Theta\mid\mu\le\mu'\}$. Then
  $L_\Theta\cdot L_\Lambda\le L_{\Theta\logcomp\Lambda}$. Moreover,
  monotone predicate liftings are closed under taking positive Boolean
  combinations both above and below; e.g.\ if $\lambda$ and $\mu$ are
  unary monotone predicate liftings, then the transformation~$\pi$
  taking predicates $A,B$ to $\lambda(A\lor B)\land\mu(A\land B)$ is a
  binary monotone predicate lifting. We write $\Lambda^\posbool$ for
  the closure of~$\Lambda$ under componentwise positive Boolean
  combinations in this sense; e.g.\ if
  $(\lambda_1,\lambda_2),(\mu_1,\mu_2)\in\Lambda$, then
  $(\pi_1,\pi_2)\in\Lambda^\posbool$ where
  $\pi_i(A,B)=\lambda_i(A\lor B)\land\mu_i(A\land B)$. One checks
  easily that $L_\Lambda=L_{\Lambda^\posbool}$, so overall we have
  \begin{equation}\label{eq:comp-bound}
    L_\Theta\cdot L_\Lambda\le L_{\Theta^\posbool\,\logcomp\Lambda^\posbool}.
  \end{equation}
  We next give an example where one actually has equality; we leave it
  as an open problem whether equality holds in general.
\end{remark}
\begin{example}\label{expl:kant-comp}
  Recall from \autoref{expl:kantorovich}.\ref{item:trace-kant} that
  the connector $\tracecon\colon\A\times(-)\to\pow(\A\times(-))$
  equals $L_\Lambda$ where
  $\Lambda=\{(\Diamond_l,\Diamond_l)\mid l\in\A\}$; thus,
  $\rev{\tracecon}=L_{\rev{(\dual\Lambda)}}$ by
  \autoref{thm:kant-functorial}. Assume for simplicity that~$\A$ is
  finite. Note that we have
  \begin{equation*}\textstyle
    (\Land_{l\in\A}\Box_l(-)_l,\Lor_{l\in\A}\Diamond_l(-)_l)\in\Lambda^{\posbool}\logcomp\rev{(\dual\Lambda^{\,\posbool})},
  \end{equation*}
  where $\Land_{l\in\A}\Box_l(-)_l$ takes an $\A$-indexed family of
  predicates $A_l$ to $\bigcap_{l\in\A}\Box_lA_l$, correspondingly for
  $\Lor_{l\in\A}\Diamond_l(-)_l$, since this pair represents a valid
  implication over $\A\times(-)$. From this observation, one easily
  concludes that
  $\tracecon\cdot\rev{{\tracecon}}=L_\Lambda\cdot
  L_{\rev{(\dual\Lambda)}}=L_{\Lambda^{\posbool}\logcomp\rev{(\dual\Lambda^{\,\posbool})}}$,
  i.e.\ we have equality in the applicable instance
  of~\eqref{eq:comp-bound}. We will use this fact to obtain a logical
  characterization of $\tracecon\cdot\rev{{\tracecon}}$-bisimilarity
  (i.e.\ of sharing an infinite trace) in
  \autoref{expl:hm-expls}.
\end{example}

\begin{remark}
  Every lax extension of a finitary functor is induced by monotone
  predicate liftings as described
  above~\cite{KurzLeal12,MartiVenema12,MartiVenema15}. We leave it as an open
  problem whether every relational connector among finitary functors
  is Kantorovich.
\end{remark}
\section{Expressiveness}\label{sec:expressiveness}
\noindent We now go on to establish an expressiveness theorem in the
style of the classical Hennessy-Milner theorem, which states that two
states in finitely branching LTS are bisimilar iff they satisfy the
same formulae of Hennessy-Milner logic. Our present version subsumes
the classical theorem and coalgebraic generalizations, but also
variants for asymmetric comparisons such as simulation, and hence
instead works with forward preservation of formula satisfaction in a
logic with only positive Boolean combinations, introduced next:
\begin{defn}
  Let $\Lambda\colon\frel{\predlift{F}}{\predlift{G}}$. Then the
  set $\FLang(\Lambda)$ of \emph{$\Lambda$-formulae} $\phi,\psi$ is
  given by the grammar
  \begin{equation*}
    \FLang(\Lambda)\owns\phi,\psi::=\bot\mid\top\mid\phi\land\psi\mid\phi\lor\psi
    \mid\doublemod{\lambda,\mu}\phi\qquad((\lambda,\mu)\in\Lambda).
  \end{equation*}
  We interpret $\Lambda$-formulae over both $F$-coalgebras and
  $G$-coalgebras. For a state~$x$ in an $F$-coalgebra $(C,\gamma)$ and
  a $\Lambda$-formula~$\phi$, we write $x\models_F\phi$, or just
  $x\models\phi$, to indicate that $x$ \emph{satisfies}~$\phi$;
  similarly, we write $y\models_G\phi$ or just $y\models\phi$ to
  indicate that a state~$y$ in a $G$-coalgebra $(D,\delta)$
  \emph{satisfies}~$\phi$. We denote the \emph{extension} of~$\phi$
  in~$C$ by $\Sem{\phi}_C=\{x\in C\mid x\models C\}$, similarly
  for~$D$. The satisfaction relations are then defined by the usual
  clauses for the Boolean operators, and by
  \begin{align*}
    x\models_F\doublemod{\lambda,\mu}\phi \text{ iff } \gamma(x)\in\lambda(\Sem{\phi}_C),\qquad
    y\models_G\doublemod{\lambda,\mu}\phi  \text{ iff } \delta(y)\in\mu(\Sem{\phi}_D).
  \end{align*}
  We refer to the modal logic thus defined as $\Lang(\Lambda)$.
\end{defn}
One shows easily that the logic $\Lang(\Lambda)$ is preserved under
$L_\Lambda$-similarity:
\begin{proposition}\label{prop:adequacy}
  Let $\Lambda\colon\frel{\predlift{F}}{\predlift{G}}$, and
  let~$\phi$ be a $\Lambda$-formula. Whenever $x\simul_\Lambda y$ and
  $x\models_F\phi$, then $y\models_G\phi$.
\end{proposition}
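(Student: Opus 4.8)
The plan is to proceed by structural induction on the formula~$\phi$. It suffices to show, for an arbitrary $L_\Lambda$-simulation $r\colon\frel{C}{D}$ between an $F$-coalgebra $(C,\gamma)$ and a $G$-coalgebra $(D,\delta)$, that $r[\Sem{\phi}_C]\subseteq\Sem{\phi}_D$ for every~$\phi$; equivalently, that $x\mathrel{r}y$ and $x\models_F\phi$ imply $y\models_G\phi$. Since $x\simul_\Lambda y$ means precisely that $x\mathrel{r}y$ for some $L_\Lambda$-simulation~$r$ (and indeed $\simul_\Lambda$ is itself such a simulation, being the greatest fixpoint of $r\mapsto\rev\delta\cdot L_\Lambda r\cdot\gamma$ by Knaster-Tarski), the proposition follows immediately.

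The base cases $\bot$ and $\top$ are immediate, as $\Sem{\bot}_C=\emptyset$ and $\Sem{\top}_D=D$, and the Boolean cases $\phi\land\psi$ and $\phi\lor\psi$ reduce directly to the induction hypotheses for the subformulae via the standard semantic clauses. The only substantive case is the modal one. Suppose $\phi=\doublemod{\lambda,\mu}\psi$ with $(\lambda,\mu)\in\Lambda$, and assume $x\mathrel{r}y$ together with $x\models_F\doublemod{\lambda,\mu}\psi$, i.e.\ $\gamma(x)\in\lambda(\Sem{\psi}_C)$. Since~$r$ is an $L_\Lambda$-simulation we have $\gamma(x)\mathrel{L_\Lambda r}\delta(y)$; unfolding the definition of~$L_\Lambda$ (\autoref{def:kantorovich}) at the pair $(\lambda,\mu)\in\Lambda$ and the predicate $\Sem{\psi}_C$, the assumption $\gamma(x)\in\lambda(\Sem{\psi}_C)$ then yields $\delta(y)\in\mu(r[\Sem{\psi}_C])$.

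To close the induction I would invoke the hypothesis $r[\Sem{\psi}_C]\subseteq\Sem{\psi}_D$ together with the fact that~$\mu$ is monotone (recall that $\predlift{G}$ consists only of \emph{monotone} predicate liftings), obtaining $\mu(r[\Sem{\psi}_C])\subseteq\mu(\Sem{\psi}_D)$ and hence $\delta(y)\in\mu(\Sem{\psi}_D)$, i.e.\ $y\models_G\doublemod{\lambda,\mu}\psi$. The crux of the argument, and the only delicate point, is exactly this appeal to monotonicity of~$\mu$: the definition of~$L_\Lambda$ only delivers membership of $\delta(y)$ in~$\mu$ applied to the relational image $r[\Sem{\psi}_C]$, which is in general strictly contained in the extension $\Sem{\psi}_D$ we actually need, so monotonicity is what bridges the gap. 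This is also precisely why $\Lang(\Lambda)$ is confined to positive Boolean combinations: a negation clause would require a symmetric \emph{back}-transfer that an $L_\Lambda$-simulation (as opposed to a bisimulation) does not supply, and the monotone-preservation step would fail.
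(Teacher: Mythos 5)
Your proof is correct and follows essentially the same route as the paper: induction on~$\phi$, with trivial Boolean cases, and in the modal case unfolding the simulation condition $\gamma(x)\mathrel{L_\Lambda r}\delta(y)$ at the pair $(\lambda,\mu)$ to get $\delta(y)\in\mu(r[\Sem{\psi}_C])$, then combining the induction hypothesis $r[\Sem{\psi}_C]\subseteq\Sem{\psi}_D$ with monotonicity of~$\mu$. Your added remarks on why positivity of the logic matters are accurate but not needed for the argument itself.
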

\noindent The converse is less straightforward, and (like the
classical Hennessy-Milner theorem) depends on finite
branching. 
For brevity, we say that an $F$-coalgebra $(C,\gamma)$ is
\emph{finitely branching} if for every~$x\in C$, there exists a finite
subset $C'\subseteq C$ such that $\gamma(x)\in FC'\subseteq FC$ (cf.\
assumptions made in \autoref{sec:prelims}).
\begin{theorem}[Expressiveness]\label{thm:expr}
  Let $\Lambda\colon\frel{\predlift{F}}{\predlift{G}}$. Then
  $\Lambda$-similarity coincides with theory inclusion in
  $\Lang(\Lambda)$ on finitely branching coalgebras; that is, for
  states $x\in C$, $y\in D$ in finitely branching coalgebras
  $(C,\gamma\colon C\to FC)$ and $(D,\delta\colon D\to GD)$, we have
  $x\simul_\Lambda y$ iff for every $\Lambda$-formula~$\phi$, whenever
  $x\models_F\phi$, then $y\models_G\phi$.
\end{theorem}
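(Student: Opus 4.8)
The plan is to prove the two inclusions separately. The left-to-right direction---that $x \simul_\Lambda y$ implies theory inclusion---is exactly \autoref{prop:adequacy}, so nothing remains to be done there. For the converse, I would let $r \colon \frel{C}{D}$ be the relation of theory inclusion, i.e.\ $x \mathrel{r} y$ iff $x \models_F \phi$ implies $y \models_G \phi$ for every $\Lambda$-formula~$\phi$, and show that $r$ is an $L_\Lambda$-simulation; since $x \mathrel{r} y$ holds by assumption, this yields $x \simul_\Lambda y$. Concretely, fixing $x \mathrel{r} y$, I must establish $\gamma(x) \mathrel{L_\Lambda r} \delta(y)$, that is, for every $(\lambda,\mu) \in \Lambda$ (say $n$-ary) and all $A_1,\dots,A_n \subseteq C$ with $\gamma(x) \in \lambda_C(A_1,\dots,A_n)$, we must derive $\delta(y) \in \mu_D(r[A_1],\dots,r[A_n])$.

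The first reduction uses finite branching together with naturality of the predicate liftings. Since $C$ is finitely branching, $\gamma(x) \in FC'$ for some finite $C' \subseteq C$; naturality of $\lambda$ with respect to the inclusion $C' \hookrightarrow C$ shows that the membership $\gamma(x) \in \lambda_C(A_1,\dots,A_n)$ depends only on the intersections $A_i \cap C'$, so by monotonicity of~$\mu$ it suffices to treat the case where each $A_i \subseteq C'$ is finite. Dually, since $D$ is finitely branching, $\delta(y) \in GD'$ for a finite $D' \subseteq D$, and naturality of~$\mu$ reduces the target membership to a statement about the finite sets $r[A_i] \cap D'$.

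The heart of the argument is the construction of separating $\Lambda$-formulae, and this is the step I expect to be the main obstacle. For each $i$ and each $d \in D' \setminus r[A_i]$, the fact that $d \notin r[\{c\}]$ for every $c \in A_i$ provides, for each such~$c$, a formula $\psi_{c,d}$ with $c \models_F \psi_{c,d}$ but $d \not\models_G \psi_{c,d}$; since $A_i$ is finite, the disjunction $\psi_d = \bigvee_{c \in A_i} \psi_{c,d}$ is a $\Lambda$-formula satisfied by all of $A_i$ yet refuted by~$d$. Since $D'$ is finite, the conjunction $\phi_i = \bigwedge_{d \in D' \setminus r[A_i]} \psi_d$ is again a $\Lambda$-formula, and it satisfies $A_i \subseteq \Sem{\phi_i}_C$ and $\Sem{\phi_i}_D \cap D' \subseteq r[A_i]$. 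The delicate point is precisely that both index ranges are finite---$A_i$ by finite branching of~$C$, and $D' \setminus r[A_i]$ by finite branching of~$D$---so that the requisite disjunctions and conjunctions stay inside the finitary logic; without both finiteness assumptions this construction breaks down, mirroring the role of finite branching in the classical Hennessy--Milner theorem.

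It then remains to stitch the pieces together. Monotonicity of~$\lambda$ and $A_i \subseteq \Sem{\phi_i}_C$ upgrade $\gamma(x) \in \lambda_C(A_1,\dots,A_n)$ to $\gamma(x) \in \lambda_C(\Sem{\phi_1}_C,\dots,\Sem{\phi_n}_C)$, i.e.\ $x \models_F \doublemod{\lambda,\mu}(\phi_1,\dots,\phi_n)$. Theory inclusion $x \mathrel{r} y$ then yields $y \models_G \doublemod{\lambda,\mu}(\phi_1,\dots,\phi_n)$, that is, $\delta(y) \in \mu_D(\Sem{\phi_1}_D,\dots,\Sem{\phi_n}_D)$. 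Finally, naturality of~$\mu$ along $D' \hookrightarrow D$, combined with $\Sem{\phi_i}_D \cap D' \subseteq r[A_i]$ and monotonicity of~$\mu$, transports this to $\delta(y) \in \mu_D(r[A_1],\dots,r[A_n])$, completing the verification that~$r$ is an $L_\Lambda$-simulation. (Throughout I read $\doublemod{\lambda,\mu}$ as an $n$-ary modality for $n$-ary $(\lambda,\mu)$, the unary clause in the grammar being the representative case.)
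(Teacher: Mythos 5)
Your proposal is correct and follows the same overall strategy as the paper: the forward direction is delegated to \autoref{prop:adequacy}, and for the converse both proofs show that the theory-inclusion relation~$r$ is an $L_\Lambda$-simulation, using finite branching to obtain finite sets $C'\subseteq C$, $D'\subseteq D$ with $\gamma(x)\in FC'$, $\delta(y)\in GD'$, and naturality plus monotonicity of the predicate liftings to reduce all memberships to these finite sets. The one point where you diverge is the inner combinatorial step: you \emph{directly} construct a characteristic formula $\phi_i=\Land_{d\in D'\setminus r[A_i]}\Lor_{c\in A_i}\psi_{c,d}$ satisfying $A_i\subseteq\Sem{\phi_i}_C$ and $\Sem{\phi_i}_D\cap D'\subseteq r[A_i]$, whereas the paper argues by contradiction, forming the family $\FA=\{B\subseteq D'\mid \delta(y)\in\mu(B)\}$ of accepted subsets of~$D'$, picking a witness $y_B\in B\setminus r[A\cap C']$ for each $B\in\FA$, and using the dually shaped formula $\Lor_{x'\in C'\cap A}\Land_{B\in\FA}\phi_{x',B}$ to reach a contradiction. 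Your version is arguably cleaner: it avoids quantifying over the (finite but potentially exponential) family~$\FA$, avoids the indirect argument, and treats $n$-ary modalities explicitly where the paper handles the unary case as representative. Both arguments rest on exactly the same ingredients---finite branching on both sides, separating formulae obtained from failures of theory inclusion, closure of $\FLang(\Lambda)$ under finite positive Boolean combinations (including the degenerate cases $\bot$ and $\top$), and monotonicity and naturality of the liftings---so this is a difference of local organization rather than of substance.
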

\begin{proof}[sketch]
  Show that theory inclusion
  $r=\{(x,y)\in C\times D\mid\forall
  \phi\in\FLang(\Lambda).\,x\models_F\phi\implies y\models_G\phi\}$ is
  an $L_\Lambda$-simulation. \qed
\end{proof}

\begin{remark}
  From \autoref{thm:expr}, we recover in particular the coalgebraic
  generalization of the Hennessy-Milner theorem for behavioural
  equivalence~\cite{Pattinson04,Schroder08}, restricted to monotone
  modalities, by instantiating to $\Lambda\subseteq\id$ satisfying the
  usual separation condition (cf.~\autoref{cor:lax-kant}). This
  theorem applies to a logic with full Boolean propositional base;
  note here that when~$\Lambda$ is closed under duals, our logic
  admits an encoding of negation via negation normal forms. Also,
  \autoref{thm:expr} subsumes coalgebraic Hennessy-Milner theorems for
  behavioural preorders such as
  simulation~\cite{KapulkinEA12,WildSchroder21}. Our main interest is
  in heterogeneous examples, listed next.
\end{remark}

\begin{example}\label{expl:hm-expls}
  \begin{enumerate}[wide]
  \item From the Kantorovich description of the relational connectors
    $K_R,L_R\colon\pow(\A\times(-))\to\pow(\B\times(-))$ induced from
    $R\colon\frel{\A}{\B}$
    (\autoref{expl:kantorovich}.\ref{item:lts}), we obtain logical
    characterizations of $K_R$-similarity and $L_R$-(bi)similarity on
    finitely branching $\A$-LTS and $\B$-LTS. For instance, states
    $x\in C$, $y\in D$ in an $\A$-LTS~$C$ and a $\B$-LTS~$D$, both
    finitely branching, are $L_R$-bisimilar iff~$x$ and~$y$ satisfy
    the same formulae in a modal logic with modalities
    $\doublemod{\Diamond_l,\Diamond_m}$ and
    $\doublemod{\Box_l,\Box_m}$ for $(l,m)\in R$.
  \item In \autoref{expl:kantorovich}.\ref{item:ioco-kant}, a Kantorovich description 
  is given for \emph{ioco} simulation, yielding a logical characterization
  by \autoref{thm:expr}. The logic features the modalities $\Diamond_i$ for inputs $i \in I$,
  $\Box_o$ for outputs $o \in O$ and ``undefinedness'' modalities $\downarrow_o$,
  which hold at a state iff there is no outgoing $o$-transition from that state.
\item The Kantorovich definition of the relational connector
  $L_\Lambda\colon \pow(\A\times(-))\to\pow(\B\times(-))$, for
  $\Lambda=\{(\Diamond_l,L_{l,\epsilon})\mid l\in\A\}$ as per
  \autoref{expl:kantorovich}.\ref{item:prob-kant}, implies a logical
  characterization of $L_\Lambda$-simulation: Given states $x\in C$,
  $y\in D$ in an finitely branching $\A$-LTS~$C$ and a finitely
  branching probabilistic $\A$-LTS~$D$, we have $x\simul_\Lambda y$
  iff whenever~$x$ satisfies a formula~$\phi$ in the positive fragment
  of Hennessy-Milner logic with only diamond modalities~$\Diamond_l$,
  then~$y$ satisfies the probabilistic modal formula obtained
  from~$\phi$ by replacing $\Diamond_l$ with $L_{l,\epsilon}$
  throughout. Corresponding characterizations hold for
  $L_{\dual\Lambda}$-similarity and
  $L_{\Lambda\cup\dual\Lambda}$-similarity.
\item From the Kantorovich description of the connector
  $\tracecon\cdot\rev{{\tracecon}}\colon\pow(\A\times(-))\to\pow(\A\times(-))$,
  we obtain a logical characterization of
  $\tracecon\cdot\rev{{\tracecon}}$-bisimilarity, i.e.\ of sharing an
  infinite trace: States~$x,y$ in finitely branching $\A$-LTS are
  $\tracecon\cdot\rev{{\tracecon}}$-bisimilar iff whenever~$x$
  satisfies a formula~$\phi$ in a positive modal logic with $|\A|$-ary
  modalities $\Land_{l\in\A}\Box_l(-)_l$ as per
  \autoref{expl:kant-comp}, then~$y$ satisfies the formula obtained
  from~$\phi$ by replacing $\Land_{l\in\A}\Box_l(-)_l$ with
  $\Lor_{l\in\A}\Diamond_l(-)_l$ throughout. We note that in a
  scenario where we view~$x$ as specifying a set of bad traces, this
  means that the fact that~$y$ does \emph{not} have a bad trace can be
  witnessed by a single counterexample formula~$\phi$.
  \end{enumerate}
\end{example}

\section{Conclusions}

\noindent We have presented a systematic approach to comparing systems
of different transition types, abstracted as set functors in the
paradigm of universal coalgebra~\cite{Rutten00}: We induce notions of
\emph{heterogeneous (bi)simulation} from \emph{relational connectors}
among set functors. We have exhibited a number of key constructions of
relational connectors, including composition, converse, identity, and
a Kantorovich construction in which a connector is induced from a
relation between modalities. Building on the latter, we have proved a
Hennessy-Milner type theorem that characterizes heterogeneous
(bi)similarity in terms of theory inclusion in a flavour of positive
coalgebraic modal logic~\cite{KapulkinEA12} that is interpretable over
both of the involved system types. One instance of this result asserts
that absence of a shared trace between LTS can be witnessed by a pair
of modal formulae in Hennessy-Milner logic.

We leave quite a few problems open for further investigation, maybe
most notably including the question whether every relational connector
among finitary functors is Kantorovich (this holds for lax
extensions~\cite{KurzLeal12,MartiVenema12,MartiVenema15}, which form a
special case of relational connectors, and generalizes to arbitrary
functors when infinitary modalities are
allowed~\cite{Schroder08,GoncharovEA24}). More specifically, one would
be interested in a logical descriptions of composites of Kantorovich
connectors, working from \autoref{rem:comp-kant}. A further open
question is under what conditions similarity for a composite
relational connector $L \cdot K$ equals the composite of the
similarity relations for $L$ and $K$ respectively---currently, we only
have one inclusion (\autoref{lem:functorial-simulation}).  This is of
particular interest for the example on ioco conformance
(\autoref{expl:ioco}), where two specifications are known to be
compatible iff they have a common conforming
implementation~\cite{BosJM19}, a result that has been recovered in a
coalgebraic setting~\cite{RotW23}. A further issue for future research
is to develop the coinductive up-to
techniques~\cite{DBLP:books/cu/12/PousS12,DBLP:journals/acta/BonchiPPR17}
for relational connectors.  \bibliographystyle{splncs04}
\bibliography{coalgml}
%\end{document}
\clearpage
\appendix

\section{Omitted Details and Proofs}

\subsection{Details for \autoref{sec:rel-conn}}
\subsection*{Details for \autoref{expl:lts-connector}}

We have to show that $L_R$ is actually a relational connector. We
could prove this using \autoref{thm:kantorovich} but we give a direct
proof to avoid the forward reference. Monotonicity is clear; we prove
naturality. So let $f\colon X'\to X$, $g\colon Y'\to Y$, and
$r\colon\frel{X}{Y}$; we have to show that
\begin{equation*}
  L_R(\rev{g}\cdot r\cdot f)=\rev{(Gg)}\cdot L_Rr\cdot Ff.
\end{equation*}
We prove the two inclusions separately;

$\subseteq$: Let $S\in\pow(\A\times X')$, $T\in\pow(B\times Y')$ such
that $S\mathrel{L_R(\rev{g}\cdot r\cdot f)}T$. We have to show that
$\pow(\A\times f)(S)\mathrel{L_Rr}\pow(\B\times g)(T)$. So let
$(a,b)\in R$ and $(a,x)\in \pow(\A\times f)(S)$. Then there is
$(a,x')\in S$ such that $f(x')=x$. Thus, there is $(b,y')\in T$ such
that $x'\mathrel{(\rev{g}\cdot r\cdot f)}y'$, i.e.\
$x=f(x')\mathrel{r}g(y')$, and $(b,g(y'))\in\pow(\B\times g)(T)$ as
required. The remaining condition is shown symmetrically.

$\supseteq$: Let $S\in\pow(\A\times X')$, $T\in\pow(B\times Y')$ such
that $\pow(\A\times f)(S)\mathrel{L_Rr}\pow(\B\times g)(T)$. We have
to show that $S\mathrel{L_R(\rev{g}\cdot r\cdot f)}T$. So let
$(a,b)\in R$ and $(a,x')\in S$. Then
$(a,f(x'))\in\pow(\A\times f)(S)$, so there is
$(b,y)\in\pow(\B\times g)(T)$ such that $f(x')\mathrel{r} y$. Now~$y$
has the form $y=g(y')$ for some $y'$ such that $(b,y')\in T$, and then
$x'\mathrel{(\rev{g}\cdot r\cdot f)}y'$ as required. Again, the remaining
condition is symmetric. \qed

\subsection*{Full proof of \autoref{lem:comp}}
\emph{Monotonicity:} Let $r\subseteq r'\colon \frel{X}{Z}$, and let
$a\;(L\cdot K)r\;c$; we have to show that $a\;(L\cdot K)r'\;c$.  By
definition, we have $r=s\cdot t$ for some $t\colon\frel{X}{Y}$,
$s\colon\frel{Y}{Z}$, and~$b\in GY$ such that $a\;Kt\;b\;Ls\;c$. Let
$Y'=Y+(r'\setminus r)$, w.l.o.g.\ with the coproduct just being a
disjoint union, let $i\colon Y\to Y'$ be the left coproduct injection,
and define $t'\colon \frel{X}{Y'}$, $s'\colon \frel{Y'}{Z}$ by
\begin{align*}
  t'&=t\cup\{(x,(x,z))\mid (x,z)\in r'\setminus r\}\\
  s'&=s\cup\{((x,z),z)\mid (x,z)\in r'\setminus r\}.
\end{align*}
Then $r'=s'\cdot t'$, $t=\rev{i}\cdot t'$, and $s=s'\cdot i$. By
naturality, $Kt=\rev{(Gi)}\cdot Lt'$ and $Ls=Ls'\cdot Gi$. Thus,
$a\mathrel{Kt'} Gi(b)\mathrel{Ls'}c$, so $a\mathrel{(L\cdot K)r'}c$ as
required.

\emph{Naturality:} Let $r\colon\frel{X}{Z}$, $f\colon X'\to X$,
$g\colon Z'\to Z$; we have to show that
\begin{equation*}
  (L\cdot K)(\rev{g}\cdot r\cdot f)=\rev{(Hg)}\cdot(L\cdot K)r\cdot Ff.
\end{equation*}
We split this equality into two inclusions:

\emph{`$\supseteq$':} Let $a\in FX'$, $b\in HZ'$ such that
$a\mathrel{(\rev{(Hg)}\cdot (L\cdot K)r \cdot Ff)}c$, i.e.\ we have
$r=s\cdot t$ and~$b$ such that
$Ff(a)\mathrel{Kt}b\mathrel{Ls}Hg(c)$. We have to show that
$a\mathrel{(L\cdot K)(\rev{g}\cdot r\cdot f)}c$. Indeed, by naturality
of~$K$ and~$L$, we have
$a\mathrel{K(t\cdot f)}b\mathrel{L(\rev{g}\cdot s)} c$, and hence
$a\mathrel{(L\cdot K)(\rev{g}\cdot s\cdot t\cdot f)} c$. Since
$s\cdot t=r$, this implies the claim.

\emph{`$\subseteq$':} Let $a\in FX'$, $c\in HZ'$ such that
$a\mathrel{(L\cdot K)(\rev{g}\cdot r\cdot f)} c$, so we have
$\rev{g}\cdot r\cdot f=s\cdot t$ and~$b$ such that
$a\mathrel{Kt} b\mathrel{Ls} c$. We have to show that
$a\mathrel{(L\cdot K)(\rev{g}\cdot r\cdot f)} c$. Since
$\rev{f}\cdot f\supseteq\Delta_{X'}$ and
$\rev{g}\cdot g\supseteq\Delta_{Y'}$ (totality), we have
\begin{equation*}
  a\mathrel{K(t\cdot\rev{f}\cdot f)} b
  \mathrel{L(\rev{g}\cdot g\cdot s)} c
\end{equation*}
by monotonicity of~$K$ and~$L$. By naturality of~$K$ and~$L$, it
follows that
\begin{equation*}
  Ff(a)\mathrel{K(t\cdot\rev{f})} b
  \mathrel{L(g\cdot s)} Hg(c),
\end{equation*}
so $Ff(a)\mathrel{(L\cdot K)(g\cdot s\cdot t\cdot \rev{f})}
Hg(c)$. But we have
\begin{equation*}
  g\cdot s\cdot t\cdot \rev{f}=g\cdot\rev{g}\cdot r\cdot f\cdot\rev{f}\subseteq r
\end{equation*}
by univalence, so $Ff(a)\mathrel{(L\cdot K)r} Hg(c)$ follows
by monotonicity of $L\cdot K$, which we have already shown above. \qed

\subsection*{Proof of \autoref{lem:comp-subset}}

For $r\supseteq s\cdot t$, we have
$Ls\cdot Kt\subseteq (L\cdot K)(s\cdot t)\subseteq (L\cdot K)r$ by
monotonicity of $L\cdot K$. \qed

\subsection*{Proof of \autoref{lem:couniv-decomp}}
We define $f$ by $f(y)=(\rev{(t')}[\{y\}],s'[\{y\}])$; we have
$f(y)\in Y$ because $s'\cdot t'=r$. 
Let $x \in X$, $y \in Y'$ and $z \in Z$.
We check the requisite properties:

$t' = \rev{f} \cdot t$: 
By definition of $t$ and $f$, $x\mathrel{t'}y \iff x\in\rev{(t')}[\{y\}] \iff x\mathrel{t} f(y)$.

$s' = s \cdot f$:
By definition of $s$ and $f$, $y\mathrel{s'}z \iff z \in s'[\{y\}] \iff f(y)\mathrel{s}z$.
\qed

\subsection*{Proof of \autoref{thm:comp-couniv}}

Let~$Y$ be the intermediate set in the couniversal factorization, so
$s\colon\frel{Y}{Z}$ and $t\colon\frel{X}{Y}$.

The right-to-left inclusion is immediate from the definition of
$(L\cdot K)r$ as per~\eqref{eq:comp}. 
For the left-to-right inclusion, note that for every factorization $r=s'\cdot t'$ of~$r$ into $s'\colon\frel{Y'}{Z}$ and $t'\colon\frel{X}{Y'}$, by \autoref{lem:couniv-decomp}, there is $f\colon Y'\to Y$ such that $t' = \rev{f} \cdot t$ and $s' = s \cdot f$.
Therefore, by naturality and univalence, we have $Ls' \cdot K t' = L(s \cdot f) \cdot K(\rev{f} \cdot t) = Ls \cdot Gf \cdot \rev{(Gf)} \cdot Kt \leq Ls \cdot Kt$.
\qed

\subsection*{Proof of \autoref{lem:assoc}}
We have
\begin{align*}
  & ((M\cdot H)\cdot L)r\\
  & = \bigvee_{r=s\cdot t}(M\cdot H)s\cdot Lt\\
  & = \bigvee_{r=s\cdot t}\big(\bigvee_{s=u\cdot v}(M u\cdot Hv)\big)\cdot Lt\\
  & = \bigvee_{r=u\cdot v\cdot t}M u\cdot Hv\cdot Lt,
\end{align*}
which by an analogous computation equals $M\cdot (H\cdot L)r$. In
the last step, we have used join continuity of relational
composition. \qed

\subsection*{Proof of \autoref{lem:id}}

\emph{Monotonicity} is immediate; we show \emph{naturality}. So let
$r\colon\frel{X}{Y}$, $f\colon X'\to X$, $g\colon Y'\to Y$; we have
to show that
\begin{equation*}
  \idcon{F}(\rev{g}\cdot r\cdot f)=\rev{(Fg)}\cdot\idcon{F}r\cdot Ff.
\end{equation*}
We prove the two inclusions separately:

\emph{$\subseteq$:} Let
$b\mathrel{\idcon{F}(\rev{g}\cdot r\cdot f)}c$; we have to show that
$b\mathrel{(\rev{(Fg)}\cdot\idcon{F}r\cdot Ff)}c$, i.e.\
$Ff(b)\mathrel{\idcon{F}r}Fg(c)$. So let~$L\colon G\to F$ be a
relational connector, let $s\colon \frel{Z}{X}$, and let $a\in GZ$
such that $a\mathrel{Ls} Ff(b)$; we have to show
$a\mathrel{L(r\cdot s)} Fg(c)$. From $a\mathrel{Ls} Ff(b)$, we
obtain $a\mathrel{L(\rev{f}\cdot s)} b$ by naturality of~$L$, and
hence
\begin{equation*}
  a\mathrel{L(\rev{g}\cdot r\cdot f\cdot\rev{f}\cdot s)} c
\end{equation*}
because $b\mathrel{\idcon{F}(\rev{g}\cdot r\cdot f)}c$. Since
$f\cdot\rev{f}\subseteq\Delta_X$, we obtain
$a\mathrel{L(\rev{g}\cdot r\cdot s)} c$ by monotonicity, and hence
by naturality $a\mathrel{L(r\cdot s)} Fg(c)$, as required.

\emph{$\supseteq$:} Let $Ff(b)\mathrel{\idcon{F}r}Fg(c)$; we have to
show $b\mathrel{\idcon{F}(\rev{g}\cdot r\cdot f)}c$. So let
$L\colon G\to F$ be a relational connector, let
$s\colon\frel{Z}{X'}$, and let $a\in GZ$ such that $a\mathrel{Ls}b$;
we have to show that $a\mathrel{L(\rev{g}\cdot r\cdot f\cdot
  s)}c$. Since $\Delta_{X'}\subseteq\rev{f}\cdot f$ (totality),
$a\mathrel{Ls}b$ implies $a\mathrel{L(\rev{f}\cdot f\cdot s)}b$ by
monotonicity, and hence $a\mathrel{L(f\cdot s)} Ff(b)$ by
naturality. Since $Ff(b)\mathrel{\idcon{F}r}Fg(c)$, we thus obtain
$a\mathrel{L(r\cdot f\cdot s)} Fg(c)$, which by naturality implies
that $a\mathrel{L(\rev{g}\cdot r\cdot f\cdot s)} c$, as required. \qed

\subsection*{Proof of \autoref{lem:id-comp}}

We show that~$\idcon{F}$ is a left identity ($L=\idcon{F}\cdot L$). By
a symmetric argument, one obtains that composition of relational connectors also has
right identities~${\idcon{F}}'$, and then the left and right
identities are equal
($\idcon{F}=\idcon{F}\cdot {\idcon{F}}'={\idcon{F}}'$), so~$\idcon{F}$
is also a right identity.

We thus have to show that for $L\colon G\to F$, we have
\begin{equation*}
  \idcon{F}\cdot L=L.
\end{equation*}
We split this equality into two inclusions:

\emph{$\subseteq$:} Let $r\colon\frel{X}{Y}$, and let
$a\mathrel{(\idcon{F}\cdot L)r}c$; we have to show $a\mathrel{Lr}c$ By
definition of the composition of relational connectors, we have $r=s\cdot t$ and~$b$
such that
\begin{equation*}
  a\mathrel{Lt}b\mathrel{\idcon{F}s}c.
\end{equation*}
By definition of $\idcon{F}$, it follows that
$a\mathrel{L(s\cdot t)}c$, i.e.\ $a\mathrel{Lr}c$, as required.

\emph{$\supseteq$:} Let $r\colon\frel{X}{Y}$, and let
$a\mathrel{Lr} c$; we have to show that
$a\mathrel{(\idcon{F}\cdot L)r} c$. We have $r=\Delta_Y\cdot r$, and
by~\eqref{eq:idcon-diag},
\begin{equation*}
  a\mathrel{L} c \mathrel{\idcon{F}\Delta_Y} c,
\end{equation*}
so $a\mathrel{(\idcon{F}\cdot L)r} c$, as required. \qed

\subsection*{Proof of \autoref{lem:converse-meet-product}}

For naturality of the converse, we have
\begin{align*}
	\rev{L}(\rev{g} \cdot r \cdot f)
	&= \rev{(L(\rev{(\rev{g} \cdot r \cdot f)}))}\\
	&= \rev{(L(\rev{f} \cdot \rev{r} \cdot g))}\\
	&= \rev{(\rev{(Gf)} \cdot L(\rev{r}) \cdot Fg)} \\
	&= \rev{(Fg)} \cdot \rev{(L(\rev{r}))} \cdot Gf \\
	&= \rev{(Fg)} \cdot \rev{L}(r) \cdot Gf .
\end{align*}

Naturality of the meet is (also) straightforward, using that
$$
	\rev{(Gg)} \cdot (L \cap K)(r) \cdot Ff = (\rev{(Gg)} \cdot Lr \cdot Ff) \cap (\rev{(Gg)} \cdot Kr \cdot Ff)
$$
for \emph{functions} $f,g$.

For the product, given relational connectors $L_1 \colon F_1 \to G_1$
and $L_2 \colon F_2 \to G_2$,
note that the construction can be reformulated as 
$$
(L_1 \times L_2) r = (\rev{(\pi_{1,G})} \cdot (L_1r) \cdot \pi_{1,F}) \cap (\rev{(\pi_{2,G})} \cdot (L_2r) \cdot \pi_{2,F})
$$
where $\pi_{i,F} \colon F_1 \times F_2 \rightarrow F_i$ and $\pi_{i,G} \colon G_1 \times G_2 \rightarrow G_i$
are the projections. Since the meet of relational connectors is again a relational connector (shown above), the
result follows once we show that both
$(\rev{(\pi_{i,G})} \cdot (L_i r) \cdot \pi_{i,F})$ for $i \in \{1,2\}$ are relational connectors.
This, in turn, follows by \autoref{lem:transform} since $\pi_{i,F}$ and $\pi_{i,G}$ are natural transformations.
(To avoid the forward reference, it is straightforward to prove explicitly that 
each $(\rev{(\pi_{i,G})} \cdot (L_i r) \cdot \pi_{i,F})$ is a relational connector.)

\subsection*{Proof of \autoref{lem:rev-comp}}
All properties are straightforward; we prove only the last one: Let
$r\colon\frel{X}{Y}$. We have
\begin{align*}
  & \rev{(L\cdot K)}r \\
  & = \rev{\big(\textstyle\bigvee_{\rev{r}=s\cdot t}Ls\cdot Kt\big)}\\
  & = \rev{\big(\textstyle\bigvee_{r=\rev{t}\cdot \rev{s}}Ls\cdot Kt\big)}\\
  & = \rev{\big(\textstyle\bigvee_{r=t\cdot s}L(\rev s)\cdot K(\rev t)\big)}\\
  & = \big(\textstyle\bigvee_{r=t\cdot s}\rev{K(\rev{t})}\cdot \rev{L(\rev{s})}\big)\\
  & = (\rev K\cdot\rev L)r.
\end{align*}
\qed

\subsection*{Details for \autoref{expl:lts-comp}}

$K_Q\cdot K_R\le K_{Q\cdot R}$: Let $s\colon\frel{Y}{Z}$,
$t\colon\frel{X}{Y}$, $S\in\pow(\A\times X)$,
$T\in\pow(\B\times Y)$, $U\in\pow(\C\times Z)$ such that
$r=s\cdot t$, $S\mathrel{K_Rt}T$, $T\mathrel{K_Qs}U$. We have to
show that $S\mathrel{K_{Q\cdot R}(s\cdot t)} U$. So let
$(a,b)\in R$, $(b,c)\in Q$, and $(a,x)\in S$. We have to find
$(c,z)\in U$ such that $x\mathrel{(s\cdot t)}z$. Since
$S\mathrel{K_Rt}T$, there is $(b,y)\in T$ such that $x\mathrel{t}y$,
and since $T\mathrel{K_Qs}U$, there is $(c,z)\in U$ such that
$y\mathrel{s}z$. Then $x\mathrel{(s\cdot t)}z$, as required.

$L_Q\cdot L_R\le L_{Q\cdot R}$: By the description of $L_R$ in terms
of $K_R$, we have
\begin{align*}
  & L_Q\cdot L_R \\
  & = (K_Q\cap\rev{(K_{\rev Q})})\cdot (K_R\cap \rev{(K_{\rev R})})\\
  & \le (K_Q\cdot K_R)\cap(K_{\rev Q}\cdot K_{\rev R})\\
  & \le K_{Q\cdot R}\cap \rev{(K_{\rev R\cdot\rev Q})}\\
  & =K_{Q\cdot R}\cap \rev{(K_{\rev{(Q\cdot R)}})}\\
  & = L_{Q\cdot R}
\end{align*}
using monotonicity of composition in the second step and the previous
inequality in the third step.

$ K_{Q\cdot R}\le K_Q\cdot K_R$ when~$Q$ is left total and~$R$ is
right total: Let $S\in\pow(\A\times X)$, $U\in\pow(\C\times Z)$, and
$r\colon\frel{X}{Z}$ such that $S\mathrel{K_{Q\cdot R}r} U$. We have
to construct a set~$Y$, an element $T\in\pow(\B\times Y)$, and
relations $s\colon\frel{Y}{Z}$, $t\colon\frel{X}{Y}$ such that
$s\cdot t\subseteq r$, $S\mathrel{K_Rt}T$, and
$T\mathrel{K_Qs}U$. Indeed we can put $Y=X$, $t=\Delta_X$, $s=r$
(trivially ensuring $r=s\cdot t$), and
\begin{equation*}
  T=\{(b,x)\in\B\times X\mid \forall (b,c)\in Q.\,\exists (c,z)\in U.\,x\mathrel{r}z\}.
\end{equation*}
Then $T\mathrel{K_Qs}U$ by construction. To show that also
$S\mathrel{K_Rt}T$, let $(a,b)\in R$ and $(a,x)\in S$; we have to show
$(b,x)\in T$. So let $(b,c)\in Q$. Then $(a,c)\in Q\cdot R$, so since
$S\mathrel{K_{Q\cdot R}r} U$, there exists $(c,z)\in U$ such that
$x\mathrel{r} z$, implying $(b,x)\in T$.

\emph{Full description of $L_Q\cdot L_R$:} Again, we use
$L_R=K_R\cap\rev{K_{\rev R}}$. Let $r=s\cdot t$, with
$s\colon\frel{Y}{Z}$, $t\colon\frel{X}{Y}$, be the couniversal
factorization of~$r$ (\autoref{def:couniv-decomp}). We define
$T\in\pow(\B\times Y)$ by
\begin{align*}
  T=\{(m,(A,B))\mid (&\forall (l,m)\in R.\,\exists x\in A.\,(l,x)\in S)\;\land\\&\forall (m,p)\in Q.\,\exists z\in B.\,(p,z)\in U\}.
\end{align*}
Then $S\mathrel{\rev{K_{\rev R}}t}T$ and $T\mathrel{K_Q s} U$ by
construction, and~$T$ is the largest subset of $\B\times Y$ with this
property, so by \autoref{thm:comp-couniv}, we have
$S\mathrel{(L_Q\cdot L_R)r}U$ iff $S\mathrel{K_R t}T$ and
$T\mathrel{\rev{K_{\rev Q}}s} U$ (noting that $K_R t$ is upwards
closed in the right argument, and $\rev{K_{\rev Q}}s$ in the left
argument). We show that the former condition is equivalent to
\emph{forth}; ones shows symmetrically that the second condition is
equivalent to \emph{back}.

To this end, we just unfold the definition of $S\mathrel{K_R
  t}T$. This definition requires that for $(l,x)\in S$ and
$(l,m)\in R$, we have $(m,(A,B))\in T$ such that $x\mathrel{t}(A,B)$,
i.e.~$x\in A$.  Unfolding the definitions of $(m,(A,B))\in T$ and
$(A,B)\in Y$ gives exactly \emph{forth}.

\subsection*{Proof of \autoref{lem:transform}}

\noindent Put $L_{\alpha,\beta} = \rev{\beta}\natcomp L\natcomp\alpha$.
We show that $L_{\alpha,\beta}$ is a relational connector.
For $L\natcomp\alpha$ and $\rev{\beta}\natcomp L$ the claim then follows by replacing $\beta$ or $\alpha$ with the identity natural transformation, respectively.

Monotonicity is immediate; we show naturality: For
$r\colon X\to Y$, $f\colon X'\to X$, $g\colon Y'\to Y$, we have
\begin{align*}
  & L_{\alpha,\beta}(\rev g\cdot r\cdot f)\\
  & = \rev{\beta_{Y'}}\cdot (\rev g\cdot r\cdot f)\cdot \alpha_{X'} &&\by{definition}\\ 
  & = \rev{\beta_{Y'}}\cdot \rev{(Gg)}\cdot Lr\cdot Ff\cdot \alpha_{X'}&&\by{naturality of~$L$}\\
  & = \rev{(G'g)}\cdot\rev{\beta_Y}\cdot Lr\cdot \alpha_X\cdot F'f
  && \by{naturality of $\alpha,\beta$}\\
  & = \rev{(G'g)}\cdot L_{\alpha,\beta}r \cdot F'f&&\by{definition}.
\end{align*}

\subsection*{Proof of \autoref{thm:least-trans}}

\noindent We first check that $\idcon{G}\natcomp\alpha$
extends~$\alpha$: By~\eqref{eq:idcon-diag}, we have
$\alpha_X=\Delta_{GX}\cdot\alpha_x\le(\idcon{G}\Delta_X)\cdot\alpha_X=(\idcon{G}\natcomp\alpha)\Delta_X$.

Now let $L\colon F\to G$ be a relational connector that
extends~$\alpha$, and let $r\colon\frel{X}{Y}$. Then
\begin{equation*}
  Lr=(\idcon{G}\cdot L)r\ge\idcon{G}r\cdot L\Delta_X\ge
  \idcon{G}r\cdot\alpha_X=(\idcon{G}\natcomp\alpha)r. 
\end{equation*}
\qed

\subsection*{Details for \autoref{rem:lifting}}

Analogously to the fact that lax extensions of a functor
$F \colon \Set \to \Set$ can be thought of as certain liftings of $F$
along the forgetful functor $\ORD \to \Set$ \cite{GoncharovEA23} from
the category of preordered sets and monotone maps, relational
connectors from a functor $F \colon \Set \to \Set$ to a functor
$G \colon \Set \to \Set$ can be thought of as certain liftings of
$F \times G \colon \Set^2 \to \Set^2$ along the canonical forgetful
functor $U \colon \BREL \to \Set^2$ from the category of binary
relations and relation-preserving pairs of functions; i.e, a morphism
in $\BREL$ from a relation $r \colon\frel{X}{Y}$ to a relation
$s \colon \frel{X'}{Y'}$ is pair of functions
$(f \colon X \to X',g \colon Y \to Y')$ such that
$r \leq \rev{g} \cdot s \cdot f$.
Indeed, a lifting $L \colon \BREL \to \BREL$ of
$F \times G \colon \Set^2 \to \Set^2$ to $\BREL$ along $U$, in the
sense that $U \cdot L = (F \times G) \cdot U$, assigns to each
relation $r \colon \frel{X}{Y}$ a relation $L r \colon \frel{FX}{FY}$
such that for all relations $r \colon \frel{X}{Y}$ and
$s \colon \frel{X'}{Y'}$ and all functions $f \colon X \to X'$ and
$g \colon Y \to Y'$, whenever $r \leq \rev{g} \cdot s \cdot f$, then
$L r \leq \rev{(Gg)} \cdot Lr \cdot Ff$.  This condition is equivalent
to the following:
\begin{enumerate}
\item if $r \leq s$, then $Lr \leq Ls$, for all $r,s \colon \frel{X}{Y}$;
\item $L(\rev{g} \cdot s \cdot f) \leq \rev{G g} \cdot L s \cdot F f$, for every $s \colon \frel{X'}{Y'}$ and all functions $f \colon X \to X'$ and $g \colon Y \to Y'$.
\end{enumerate}
Therefore, the relational connectors from $F$ to $G$ define liftings
of $F \times G$ to $\BREL$ along $U$.  In fact, it is easy to see that
they correspond precisely to the liftings that additionally satisfy
the following condition: For all functions $f \colon X \to X'$ and
$g \colon Y \to Y'$, whenever $r = \rev{g} \cdot s \cdot f$, then
$L r = \rev{(Gg)} \cdot Lr \cdot Ff$.
In other words, relational connectors from $F$ to $G$ correspond
precisely to the liftings of $F \times G$ along $U$ that preserve
$U$-initial morphisms (also called cartesian or fibered liftings).
This is a very natural condition that is imposed frequently in
situations where liftings of a functor $F \colon \catA \to \catA$
along a functor $\catB \to \catA$ are used to derive notions of
``behavioural conformance'' 
for $F$-coalgebras
(e.g. \cite{BaldanEA18,DBLP:journals/mscs/HasuoKC18,FGH+23,TBK+23}).

\subsection{Details for \autoref{sec:lax}}

\subsection*{Proof of \autoref{lem:symm}}
Immediate from involutivity of converse and the definition of
symmetry. \qed

\subsection*{Proof of \autoref{lem:lax-relconn}}
\begin{enumerate}
\item Immediate from the definition of $L\cdot L$.
\item `Only if' is clear. To see `if', use naturality; e.g.\
  $Ff=\Delta_{FY}\cdot Ff\subseteq L\Delta_Y\cdot Ff=L(\Delta_Y\cdot
  f)=Lf$ for $f\colon X\to Y$.
\item `Only if': Let $r\colon\frel{X}{Y}$. Then
  $\rev{L}r=\rev{(L\rev r)}=\rev{(\rev{(Lr)})}=Lr$. `If': Let
  $r\colon\frel{X}{Y}$. Then
  $L(\rev{r})=\rev{L}(\rev{r})=\rev{(L(\rev{(\rev{r})}))}=\rev{(Lr)}$.  \item This is clear by the previous items and the fact that
  condition~(L1) in the definition of lax extension (monotonicity)
  also features in the definition of relational connector.
\item Let $r\colon\frel{X}{Y}$. Then by hypothesis,
  $Lr\subseteq L\Delta_Y\cdot Lr\subseteq (L\cdot L)(\Delta_y\cdot r)=(L\cdot L)
  r$.
\item Immediate from~\ref{item:L2}. and~\ref{item:L3-comp}. \qed
\end{enumerate}

\subsection*{Details for \autoref{expl:Barr}}
Let~$F$ be a functor, and let~$L$ be a symmetric lax extension of~$F$, and write
$L_B$ for the Barr extension of~$F$. We show that $L_B\le L$. So let
$r\colon\frel{X}{Y}$, with projections $\pi_1,\pi_2$ as per
\autoref{rem:barr}. Then
\begin{align*}
  L_Br & = F\pi_1\cdot\rev{(F\pi_2)} \\
       & \le L\pi_1\cdot \rev{(L\pi_2)} & \by{(L3)}\\
       & = L\pi_1\cdot L(\rev{\pi_2}) & \by{symmetry}\\
       & \le L(\pi_1\cdot\rev{\pi_2}) & \by{(L2)}\\
       & = Lr.
\end{align*}

\subsection*{Proof of \autoref{thm:id-least}}
It is clear that~$\idcon{F}$ is transitive, and
by~\eqref{eq:idcon-diag},~$\idcon{F}$ extends~$F$, so by
\autoref{lem:lax-relconn}.\ref{item:lax},~$\idcon{F}$ is a lax
extension of~$F$. To see that~$\idcon{F}$ is symmetric, it suffices to
show that~$\rev{(\idcon{F})}$ is a right identity for composition of
relational connectors: For $L\colon F\to G$, we have
$L\cdot\rev{(\idcon{F})}=\rev{(\idcon{F}\cdot
  \rev{L})}=\rev{(\rev{L})}=L$ (using \autoref{lem:rev-comp}).
Finally, if~$F$ has some diagonal-preserving lax extension~$L$, then
$\idcon{F}\Delta_X\subseteq L\Delta_X=\Delta_{FX}$, i.e.~$\idcon{F}$
preserves diagonals. \qed

\subsection{Details for \autoref{sec:het-sim}}

\subsection*{Proof of \autoref{lem:sim-mor}}

We show first that $r\cdot f$ is an $L$-simulation. So let
$f(x)\mathrel{r} y$; we have to show
$\gamma'(x)\mathrel{L(r\cdot f)}\delta(y)$. By naturality, this is
equivalent to $Ff(\gamma'(x))\mathrel{Lr}\delta(y)$. Since~$f$ is an
$F$-coalgebra morphism, we have $Ff(\gamma'(x))=\gamma(f(x))$, and
$\gamma(f(x))\mathrel{Lr}\delta(y)$ because~$r$ is an $L$-simulation
and $f(x)\mathrel{r}y$.

Second, we show that $r\cdot\rev{g}$ is an $L$-simulation. Let
$x\mathrel{r}y$, so that $g(x)\mathrel{(r\cdot\rev{g})}y$. We have to
show that $\gamma''(g(x))\mathrel{L(r\cdot\rev
  g)}\delta(y)$. Since~$g$ is a $G$-coalgebra morphism, we have
$\gamma''(g(x))=Gg(\gamma(x))$, so the claim is, by naturality,
equivalent to $\gamma(x)\mathrel{L(r\cdot\rev g\cdot g)}\delta(y)$. By
monotonicity of~$L$ and totality of~$g$, it suffices to show
$\gamma(x)\mathrel{r}\delta(y)$, which follows from $x\mathrel{r}y$
because~$r$ is an $L$-simulation. \qed

\subsection*{Proof of \autoref{lem:functorial-simulation}} Let
$x\in C$, $z\in E$ such that $x\mathrel{(s\cdot r)} z$; i.e.\ we have
$y\in D$ such that $x\mathrel{r}y\mathrel{s}z$. Then
$\gamma(x)\mathrel{Kr}\delta(y)\mathrel{Ls}\varepsilon(z)$, so
$\gamma(x)\mathrel{(K\cdot L)(s\cdot r)}\varepsilon(z)$, as
required. \qed

\subsection*{Proof of \autoref{lem:sim-conv}}

We have
$\rev{r}\subseteq\rev{\gamma}\cdot
\rev{(Lr)}\cdot\delta=\rev{\gamma}\cdot \rev{L}(\rev
r)\cdot\delta$. \qed

\subsection*{Proof of \autoref{lem:sim-props}}
Throughout, we use
\begin{lemma}\label{lem:sim-monot}
  Let~$L,L'\colon F\to G$ be relational connectors such that
  $L\le L'$. If~$r\colon\frel{C}{D}$ is an $L$-(bi)simulation between
  an $F$-coalgebra $(C,\gamma)$ and a $G$-coalgebra $(D,\delta)$,
  then~$r$ is also an~$L'$-(bi)simulation.
\end{lemma}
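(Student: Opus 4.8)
The plan is to reduce everything to the pointfree characterization of $L$-simulation from the definition, namely that $r\colon\frel{C}{D}$ is an $L$-simulation iff $r\subseteq\rev{\delta}\cdot Lr\cdot\gamma$. The single ingredient needed is that relational composition is monotone in each argument, which is immediate from the join continuity of relational composition recalled in \autoref{sec:prelims}.

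First I would handle the simulation case. Suppose $r$ is an $L$-simulation, so that $r\subseteq\rev{\delta}\cdot Lr\cdot\gamma$. From the assumption $L\le L'$ we obtain $Lr\subseteq L'r$, and monotonicity of composition then gives $\rev{\delta}\cdot Lr\cdot\gamma\subseteq\rev{\delta}\cdot L'r\cdot\gamma$. Chaining these two inclusions yields $r\subseteq\rev{\delta}\cdot L'r\cdot\gamma$, i.e.\ $r$ is an $L'$-simulation.

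For the bisimulation case, where $F=G$, recall that $r$ is an $L$-bisimulation precisely when both $r$ and $\rev{r}$ are $L$-simulations. Applying the simulation argument above separately to $r$ and to $\rev{r}$ shows that both are $L'$-simulations, whence $r$ is an $L'$-bisimulation.

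I do not anticipate any real obstacle, since this is a pure weakening result; the only point deserving mild care is the bisimulation case, which simply requires running the simulation argument twice (once for $r$ and once for $\rev{r}$) and relies on $F=G$ so that $\rev{r}$ is correctly typed as a relation between the relevant coalgebras.
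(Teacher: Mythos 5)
Your proof is correct, and it matches what the paper intends: the paper states this lemma without proof (treating it as immediate), and your argument---unfolding the pointfree characterization $r\subseteq\rev{\delta}\cdot Lr\cdot\gamma$, using $Lr\subseteq L'r$ together with monotonicity of relational composition, and then running the same argument for $\rev{r}$ in the bisimulation case---is exactly the direct verification being taken for granted. No gaps.
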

Then, essentially all claims are immediate from
\autoref{lem:functorial-simulation} and \autoref{lem:sim-conv};
specifically:

  \begin{enumerate}
  \item Immediate from \autoref{lem:functorial-simulation}.
  \item Immediate from
    \autoref{lem:sim-conv}. 
  \item On an $F$-coalgebra $(C,\gamma)$,
    $\Delta_C$ 
    is an $L$-simulation because $\Delta_{FC}\subseteq
    L\Delta_C$. 
  \item Immediate from \autoref{lem:functorial-simulation} and
    \autoref{lem:sim-conv}. \qed
  \end{enumerate}

\subsection*{Proof of \autoref{cor:bisim-pres}}

Immediate from \autoref{lem:functorial-simulation},
\autoref{lem:sim-conv}, and \autoref{lem:sim-monot}. \qed

\subsection*{Details for \autoref{expl:bisim-transfer}}

We need to show that $L_R\cdot \rev{L_R}\le\id_G$. So let~$r=s\cdot t$
for $t\colon\frel{X}{Y}$, $s\colon\frel{Y}{Z}$, and let $S\in GX$,
$T\in FY$, $U\in GZ$ such that
$S\mathrel{L_Rt} T\mathrel{\rev{L_R}s} U$. We have to show that
$S\mathrel{\id_G r}U$, i.e.\ that~$S$ and~$U$ are related under the
usual Egli-Milner extension of~$r$. For the forward direction, let
$(b,x)\in S$; we have to find $(b,z)\in U$ such that
$x\mathrel{r}z$. Since~$R$ is right total, there is $a\in\A$ such that
$a\mathrel{R}b$, so by hypothesis we first obtain $(a,y)\in T$ such
that $x\mathrel{t}y$, and then $(b,z)\in U$ such that
$y\mathrel{s}z$. Since $r=s\cdot t$, we have $x\mathrel{r}z$, as
required. The back direction is analogous. \qed

\subsection*{Details for \autoref{expl:ioco}}

We check that $L$ is indeed a relational connector. To this end,
it is convenient to form it as a product $L_I \times L_O$
using \autoref{lem:converse-meet-product}.
Let $F = F_I \times F_O$ with $F_I(X) = (I \partialto X)$ and $F_O = (O \partialtone X)$,
and $G = G_I \times F_O$ with $G_I(X) = (I \to X)$. 
Then $L_I \colon F_I \rightarrow G_I$ and $L_O \colon F_O \rightarrow F_O$ are given by
\begin{align*}
	\delta_I \mathrel{L_I r} \tau_I &\iff \forall i \in \dom(\delta_I). \; \delta_I(i)\mathrel{r} \tau_I(i),\quad \text{and}  \\
	\delta_O \mathrel{L_O r} \tau_O &\iff \forall o \in \dom(\tau_O). \; o \in \dom(\delta_O) \text{ and } \delta_O(o)\mathrel{r} \tau_O(o).
\end{align*}
It suffices to check that $L_I$ and $L_O$ are both relational connectors. We focus on naturality.

Given a relation $r \colon \frel{X}{Y}$ and maps 
$f \colon X' \to X$ and $g \colon Y' \rightarrow Y$, we have:
\begin{align*}
	\delta_I \mathrel{L_I (\rev{g} \cdot r \cdot f)} \tau_I 
	&\iff \forall i \in \dom(\delta_I). \; \delta_I(i) \mathrel{\rev{g} \cdot r \cdot f} \tau_I(i) \\
	&\iff \forall i \in \dom(\delta_I). \; f(\delta_I(i)) \mathrel{r} g(\tau_I(i)) \\
	&\iff \forall i \in \dom(\delta_I). \; (F_If)(\delta_I)(i) \mathrel{r} (G_I g)(\tau_I)(i) \\
	&\iff \forall i \in \dom((F_If)(\delta_I)). \; (F_If)(\delta_I)(i) \mathrel{r} (G_I g)(\tau_I)(i) \\
	&\iff (F_I f)(\delta_I) \mathrel{L_I r} (G_I g)(\tau_I) \\
	&\iff \delta_I \mathrel{\rev{(G_I g)} \cdot L_I r \cdot (F_I f)} \tau_I \,.
\end{align*}

Naturality for $L_O$ follows in a similar manner:
\begin{align*}
	& \delta_O \mathrel{L_O (\rev{g} \cdot r \cdot f)} \tau_O \\
	&\iff \forall o \in \dom(\tau_O). \; o \in \dom(\delta_O) \text{ and } \delta_O(o) \mathrel{\rev{g} \cdot r \cdot f} \tau_O(o) \\
	&\iff \forall o \in \dom(\tau_O). \; o \in \dom(\delta_O) \text{ and } f(\delta_O(o)) \mathrel{r} g(\tau_O(o)) \\
	&\iff \forall o \in \dom(\tau_O). \; o \in \dom(\delta_O) \text{ and }  (F_Of)(\delta_O)(o) \mathrel{r} (F_O g)(\tau_O)(o) \\
	&\iff \forall o \in \dom((F_Og)(\tau_O)). \; o \in \dom((F_Of)(\delta_O)) \\
	& \qquad \qquad \text{ and } (F_Of)(\delta_O)(o) \mathrel{r} (F_O g)(\tau_O)(o) \\
	&\iff (F_O f)(\delta_O) \mathrel{L_O r} (F_O g)(\tau_O) \\
	&\iff \delta_O \mathrel{\rev{(F_O g)} \cdot L_O r \cdot (F_O f)} \tau_O \,.
\end{align*}

Next, we focus on the composition $\rev{L} \cdot L$, and prove that it indeed
corresponds to the presentation given in \autoref{expl:ioco}.
Let $r \colon \frel{X}{Z}$, and let $s \cdot t$ be its couniversal factorization,
with intermediate set $Y = \{(A,B) \mid A \times B \subseteq r\}$.
\autoref{thm:comp-couniv} tells us that $\rev{L} \cdot L r = \rev{L}s\cdot L t$.
We have:
\begin{align*}
	&(\delta_I, \delta_O) \mathrel{\rev{(L\rev{s})}\cdot L t} (\delta_I', \delta_O') \\
	&\iff \exists \tau_I, \tau_O.  \;
	\begin{array}{l}
	(\delta_I, \delta_O) \mathrel{Lt} (\tau_I, \tau_O), \quad \text{and} \\
	(\delta_I', \delta_O') \mathrel{L\rev{s}} (\tau_I, \tau_O)
	\end{array} \\
	& \iff \exists \tau_I, \tau_O. \;
	\left\{
    \begin{array}{l}
      \forall i \in \dom(\delta_I). \; \delta_I(i)\mathrel{t} \tau_I(i) \\
      \forall o \in \dom(\tau_O). \; o \in \dom(\delta_O) \text{ and } \delta_O(o)\mathrel{t} \tau_O(o) \\
      \forall i \in \dom(\delta_I'). \; \delta_I'(i)\mathrel{\rev{s}} \tau_I(i) \\
      \forall o \in \dom(\tau_O). \; o \in \dom(\delta_O') \text{ and } \delta_O'(o)\mathrel{\rev{s}} \tau_O(o) \\
    \end{array} \right.	\\
	& \iff \exists \tau_I, \tau_O. \;
	\left\{
    \begin{array}{l}
      \forall i \in \dom(\delta_I). \; \delta_I(i) \in \pi_1(\tau_I(i)) \\
      \forall o \in \dom(\tau_O). \; o \in \dom(\delta_O) \text{ and } \delta_O(o) \in \pi_1(\tau_O(o)) \\
      \forall i \in \dom(\delta_I'). \; \delta_I'(i) \in \pi_2(\tau_I(i)) \\
      \forall o \in \dom(\tau_O). \; o \in \dom(\delta_O') \text{ and } \delta_O'(o) \in \pi_2(\tau_O(o)) \\
    \end{array} \right.	
\end{align*}
We first claim that the existence of $\tau_I \colon I \to Y$ satisfying the first and third condition
is equivalent to $\forall i \in \dom(\delta_I) \cap \dom(\delta_I'). \; \delta_I(i) \mathrel{r} \delta_I'(i)$.
Indeed, given such a $\tau_I$ and $i \in \dom(\delta_I) \cap \dom(\delta_I')$, by first and third condition
there is a pair $(A,B) \in Y$ (so that $A \times B \subseteq r$) with $\delta_I(i) \in A$ and $\delta_I'(i) \in B$, hence 
$\delta_I(i) \mathrel{r} \delta_I'(i)$. 

Conversely, suppose that $\forall i \in \dom(\delta_I) \cap \dom(\delta_I'). \; \delta_I(i) \mathrel{r} \delta_I'(i)$.
Define $\tau_I$ by 
$$\tau_I(i) = 
\begin{cases} 
	(\{\delta_I(i)\}, \{\delta_I'(i)\}) & \text{ if } i \in \dom(\delta_I) \cap \dom(\delta_I') \\
	(\{\delta_I(i)\}, \emptyset) &\text{ if } i \in \dom(\delta_I) \setminus \dom(\delta_I') \\
	(\emptyset, \{\delta_I'(i)\}) &\text{ if } i \in \dom(\delta_I') \setminus \dom(\delta_I) \\
	(\emptyset, \emptyset) & \text{ otherwise}
\end{cases}\,.
$$
The first case is well-defined by assumption, and the necessary conditions are satisfied.

Next, we prove that the existence of $\tau_O \colon O \partialtone Y$ satisfying the second and fourth condition above is equivalent
to the statement $\exists o \in \dom(\delta_O) \cap \dom(\delta_O'). \; \delta_O(o) \mathrel{r} \delta_O'(o)$.
From left to right, from the type of $\tau_O$ there exists $o \in \dom(\tau_O)$,
and by assumption this means $o \in \dom(\delta_O) \cap \dom(\delta_O')$, $\delta_O(o) \in \pi_1(\tau_O(o))$
and $\delta_O'(o) \in \pi_2(\tau_O(o))$. Thus there is a pair $(A,B) \in Y$ with $\delta_O(o) \in A$ and $\delta_O'(o) \in B$.
Since $A \times B \subseteq r$, we have
$\delta_O(o) \mathrel{r} \delta_O'(o)$.

For the converse, 
we assume there exists $o \in \dom(\delta_O) \cap \dom(\delta_O')$ such that $\delta_O(o) \mathrel{r} \delta_O'(o)$, and
define $\tau_O$ by 
$$
\tau_O(o') = 
\begin{cases}
	(\{\delta_O(o)\}, \{\delta_O'(o)\}) & \text{ if } o'=o \\
	(\emptyset, \emptyset) & \text{ otherwise } 
\end{cases} \,.
$$
Then $\dom(\tau_O)$ is indeed non-empty and well-defined, and the necessary conditions are satisfied.

\subsection{Details for \autoref{sec:kantorovich}}

\subsection*{Proof of \autoref{thm:kantorovich}}

We check the conditions of \autoref{def:relational-connector}.

\emph{Monotonicity}: Immediate from the definition of~$L_\Lambda$ and
monotonicity of the modalities.

\emph{Naturality}: Let $r\colon\frel{X}{Y}$, $f\colon X'\to X$,
$g\colon Y'\to Y$, $a\in FX'$, $b\in GY'$. We split the claimed
equivalence
$Ff(a)\mathrel{L_\Lambda r}Gg(b)\iff a\mathrel{L_\Lambda(\rev{g}\cdot
  r\cdot f)} b$ into two implications:

`$\Rightarrow$': Let $Ff(a)\mathrel{L_\Lambda r}Gg(b)$,
$(\lambda,\mu)\in\Lambda$, and $A\subseteq X'$ such that
$a\in\lambda(A)$. We have to show that
$b\in\mu(\rev{g}\cdot r\cdot f[A])$, equivalently (by naturality of
predicate liftings) that $Gg(b)\in\mu(r\cdot f[A])$. By hypothesis,
this follows once we show that $Ff(a)\in\lambda(f[A])$, equivalently
$a\in\lambda(\rev{f}\cdot f[A])$; the latter follows from
$a\in\lambda(A)$ by monotonicity of~$\lambda$ and totality of~$f$.

`$\Leftarrow$': Let $a\mathrel{L_{\Lambda}(\rev{g}\cdot r\cdot f)} b$,
$(\lambda,\mu)\in\Lambda$, and $A\subseteq X$ such that
$Ff(a)\in\lambda(A)$; we have to show that $Gg(b)\in\mu(r[A])$,
equivalently $b\in\mu(\rev{g}\cdot r[A])$. Now we have
$a\in\lambda(\rev{f}[A])$ by naturality of~$\lambda$, and hence
$b\in\mu(\rev{g}\cdot r\cdot f\cdot\rev{f}[A])$ by hypothesis. But
$f\cdot\rev{f}[A]\subseteq A$, so by monotonicity of~$\mu$, we obtain
$b\in\mu(\rev{g}\cdot r[A])$ as required. \qed

\subsection*{Details for \autoref{expl:kantorovich}}

For \autoref{item:ioco-kant}, given $(\delta_I, \delta_O) \in FX$, 
$(\tau_I, \tau_O) \in GY$ and $r \colon \frel{X}{Y}$ we have 
\begin{align*}
	& (\delta_I, \delta_O) \mathrel{L_\Lambda} (\tau_I, \tau_O) \\
	& \iff \forall A \in 2^X. \;
		\left\{
		\begin{array}{l}
			\forall i \in I. \; \delta_I \in \Diamond_i(A) \Rightarrow \tau_I \in \Diamond_i(r[A]), \; \text{and}\\
			\forall o \in O. \; \delta_O \in \Box_o(A) \Rightarrow \tau_O \in \Box_o(r[A]), \; \text{and} \\
			\forall o \in O. \; \delta_O \in {\downarrow_o} \Rightarrow \tau_O \in {\downarrow_o} .
		\end{array}
		\right. 
\end{align*}
The first line above is equivalent to the statement $\forall i \in \dom(\delta_I). \; \delta_I(i) \mathrel{r} \tau_I(i)$.
To see this, from right to left, suppose $\forall i \in \dom(\delta_I). \; \delta_I(i) \mathrel{r} \tau_I(i)$, and let
$A \in 2^X$ and $i \in I$ such that $\delta_I \in \Diamond_i(A)$. This means that $i \in \dom(\delta_I)$ and
$\delta_I(i) \in A$. By assumption, from $i \in \dom(\delta_I)$ we get $\delta_I(i) \mathrel{r} \tau_I(i)$, and since
$\delta_I(i) \in A$ we get $\tau_I(i) \in r[A]$. Hence, $\tau_I \in \Diamond_i(r[A])$ (note that $i \in \dom(\tau_I)$ holds
since $\tau_I$ is total).

Conversely, assume the first line for all $A \in 2^X$ and suppose that 
$i\in \dom(\delta_I)$. Take $A = \{\delta_I(i)\}$; then $\delta_I \in \Diamond_i(A)$, hence $\tau_I \in \Diamond_i(r[A])$,
so that $\tau_I(i) \in r[A]$. Since $A = \{\delta_I(i)\}$, the latter implies $\delta_I(i) \mathrel{r} \tau_I(i)$.

The second and third line in the characterisation of the Kantorovich connector above, spelled out, say: 
\begin{align*}
\forall A \in 2^X, o \in O.
\begin{array}{l}
(o \in \dom(\delta_O) \Rightarrow \delta_O(o) \in A) \Rightarrow 
(o \in \dom(\tau_O) \Rightarrow \tau_O(o) \in r[A]); \\
(o \not \in \dom(\delta_O) \Rightarrow o \not \in \dom(\tau_O)).
\end{array}
\end{align*}
This is equivalent to:
\begin{align*}
	\forall A \in 2^X. \; \forall o \in \dom(\tau_O). \; o \in \dom(\delta_O) \text{ and } (\delta_O(o) \in A \Rightarrow \tau_O(o) \in r[A])
\end{align*}
This, in turn, is equivalent to 
$\forall o \in \dom(\tau_O). \; o \in \dom(\delta_O) \text{ and } \delta_O(o) \mathrel{r} \tau_O(o)$
as needed (again taking $A = \{\delta_O(o)\}$ in one direction).

\subsection*{Proof of \autoref{thm:kant-functorial}}
  \begin{enumerate}
  \item Let $t\colon\frel{X}{Y}$, $s\colon\frel{Y}{Z}$, $a\in FX$
    $b\in GY$, $c\in HZ$ such that
    $a\mathrel{L_\Lambda t} b\mathrel{L_\Theta s} c$; we have to show
    that $a\mathrel{L_{\Theta\cdot\Lambda}(s\cdot t)} c$. So let
    $(\lambda,\pi)\in\Theta\cdot\Lambda$, i.e.\ we have
    $\mu\in\predlift{G}$ such that $(\lambda,\mu)\in\Lambda$ and
    $(\mu,\pi)\in\Theta$; let $A\subseteq X$; and let
    $a\in\lambda(A)$.  We have to show that $c\in\pi(s\cdot
    t[A])$. But from $a\in\lambda(A)$ we obtain $b\in\mu(t[A])$
    because $a\mathrel{L_\Lambda t} b$, whence $c\in\pi(s\cdot t[A])$
    because $b\mathrel{L_\Theta s} c$.
  \item We show $\le$; the converse inequality then follows:
    \begin{equation*}
      L_{\rev{(\dual\Lambda)}}=\rev{(\rev{L_{\rev{(\dual\Lambda)}}})}\le\rev{L_{\rev{\big(\dual{\rev{(\dual\Lambda)}}\big)}}}=\rev{L_{\Lambda}}.
    \end{equation*}
    So let $r\colon \frel{X}{Y}$, and let $a\in GX$, $b\in FY$ such
    that $a\mathrel{\rev{L_\Lambda}r} b$,
    i.e. $b\mathrel{(L_\Lambda(\rev{r}))}a$. We have to show that
    $a\mathrel{L_{\rev{\dual\Lambda}}r} b$. So let
    $(\lambda,\mu)\in\Lambda$ and $A\subseteq X$ such that
    $a\in\dual\mu(A)$; we have to show that
    $b\in\dual\lambda(r[A])=FY\setminus\lambda(Y\setminus r[A])$. So
    assume that $b\in\lambda(Y\setminus r[A])$; then
    $a\in\mu(\rev{r}[Y\setminus r[A]])$ because
    $b\mathrel{(L_\Lambda(\rev{r}))}a$. By monotonicity of~$\mu$, this
    is in contradiction with
    $a\in\dual\mu(A)=GX\setminus\mu(X\setminus A)$ because
    $\rev{r}[Y\setminus r[A]]\subseteq X\setminus A$ (to see the
    latter, note that for~$x\in\rev{r}[Y\setminus r[A]]$ we have
    $x\mathrel{r}y$ for some $y\in Y\setminus r[A]$, so $x\notin
    A$). \qed
  \end{enumerate}

\subsection*{Proof of \autoref{cor:lax-kant}}

Claims~\ref{item:kant-trans} and~\ref{item:kant-symm} are immediate
from \autoref{thm:kant-functorial} (for Claim~\ref{item:kant-symm},
notice that the assumptions imply $\rev{(\dual\Lambda)}=\Lambda$). For
Claim~\ref{item:kant-lax}, transitivity of $L_\Lambda$ for
$\Lambda\subseteq\id$ is immediate from Claim~\ref{item:kant-trans};
moreover, for $\Lambda\subseteq\id$ it is trivial to note
that~$L_\Lambda$ extends~$F$, i.e.\ that
$\Delta_{FX}\subseteq L_\Lambda\Delta_X$ for
all~$X$. Claim~\ref{item:kant-lax-normal} is similarly immediate. \qed

\subsection*{Details for \autoref{rem:comp-kant}}

The proof of the inequality
$L_\Theta\cdot L_\Lambda\le L_{\Theta\logcomp\Lambda}$ is completely
analogous to that of
$L_\Theta\cdot L_\Lambda\le L_{\Theta\cdot\Lambda}$
(\autoref{thm:kant-functorial}.\ref{item:kant-comp}). To define
$\Lambda^\posbool$, we first define the set $\Pos(Z)$ of
\emph{positive Boolean combinations}~$\phi,\psi$ over a set~$Z$ by
\begin{equation*}
  \phi,\psi::=z\mid\bot\mid\top\mid\phi\lor\psi\mid\phi\land\psi\qquad(z\in Z),
\end{equation*}
and let~$V$ denote the set of placeholders $(-)_n$ for
$n\in\Nat$. Again for any set~$Z$, we write
$\Lambda(Z)=\{\doublemod{\lambda,\mu}(z_1,\dots,z_n)\mid(\lambda,\mu)\in\Lambda\text{
  $n$-ary}\}$. We then put
\begin{equation*}
  \Lambda^\posbool=\Pos(\Lambda(\Pos(V))).
\end{equation*}
For~$\phi\in\Lambda^\posbool$ mentioning placeholders
$(-)_1,\dots,(-)_k$ and subsets $A_1,\dots,A_k\subseteq X$, we
interpret $\phi(A_1,\dots,A_k)$ as a subset of~$FX$ recursively in the
obvious manner: We interpret~$(-)_i$ by~$A_i$, we interpret both inner
and outer occurrences of propositional operators as expected ($\land$
by intersection, $\lor$ by union etc.), and for
$(\lambda,\mu)\in\Lambda$, we interpret $\doublemod{\lambda,\mu}$ by
applying~$\lambda$ to the interpretations of its argument
formulae. Overall, we obtain an interpretation of~$\phi$ as a
predicate lifting for~$F$. Analogously, we have an interpretation
of~$\phi$ as predicate lifting for~$G$, so that~$\phi$ represents a
pair of $k$-ary predicate liftings.

In the claimed equality $L_\Lambda=L_{\Lambda^\posbool}$, `$\ge$' is
trivial because $\Lambda$ is contained in~$\Lambda^\posbool$ modulo a
minor shift in syntax caused by the use of placeholders. To show
`$\le$', let $r\colon\frel{X}{Y}$, $a\in FX$, and $b\in GY$ such that
$a\mathrel{L_{\Lambda}}b$. To show that
$a\mathrel{L_{\Lambda^\posbool}}b$, we have to show for
$\phi\in\Lambda^\posbool$ $k$-ary that whenever
$a\in\phi(A_1,\dots,A_k)$ for $A_1,\dots,A_k\in 2^X$, then
$b\in\phi(r[A_1],\dots,r[A_k])$ (where we interpret~$\phi$ as a
predicate lifting for~$F$ in the first instance and as a predicate
lifting for~$G$ in the second instance). This is by straightforward
induction on the structure of~$\phi$.

\subsection*{Details for \autoref{expl:kant-comp}}

Let $r\colon\frel{X}{Y}$, $S\in\pow(\A\times X)$,
$T\in\pow(\A\times Y)$ such that
$S
\mathrel{L_{\Lambda^{\posbool}\logcomp\rev{(\dual\Lambda^{\,\posbool})}}
  r} T$. We have to show that
$S\mathrel{(\tracecon\cdot\rev{{\tracecon}})r}T$. We factorize~$r$ as
$r= r\cdot\Delta_Y$. For $l\in\A$, put $A_l=\{x\mid(a,x)\in S\}$. Then
$S\models\Land_{l\in\A}\Box_l A_l$, so by hypothesis,
$T\models\Lor_{l\in\A}\Diamond_lr[A_l]$; thus, there exist $x\in A_l$,
$y\in Y$, $l\in\A$ such that $x\mathrel{r}y$ and $(l,y)\in T$. Since
moreover $(l,x)\in S$ by the definition of~$A_l$, this shows that
$S\mathrel{(\tracecon\cdot\rev{{\tracecon}})r}T$ by the description of
$\tracecon\cdot\rev{{\tracecon}}$ given in \autoref{expl:trace-con}.

\subsection{Details for \autoref{sec:expressiveness}}

\subsection*{Proof of \autoref{prop:adequacy}}

Let $\gamma\colon C\to FC$ be an $F$-coalgebra, let
$\delta\colon D\to GD$ be a $G$-coalgebra, and let
$r\colon\frel{C}{D}$ be an $L_\Lambda$-simulation from~$C$ to~$D$. We
proceed by induction on~$\phi$. Boolean cases are trivial. For the
modal case, let $x\in C$ such that
$x\models_F\doublemod{\lambda,\mu}\phi$, i.e.\
$\gamma(x)\in\lambda(\Sem{\phi}_C)$, and let $x\mathrel{r} y$. Then
$\gamma(x)\mathrel{L_\Lambda r}\delta(y)$, so
$\delta(y)\in\mu(r[\Sem{\phi}_C])$. By induction,
$r[\Sem{\phi}_C]\subseteq\Sem{\phi}_D$, so
$\delta(y)\in\mu(\Sem{\phi}_D)$ by monotonicity of~$\mu$, i.e.\
$y\models_G\doublemod{\lambda,\mu}\phi$. \qed

\subsection*{Proof of \autoref{thm:expr}}

Put
\begin{equation*}
  r=\{(x,y)\in C\times D\mid\forall
  \phi\in\FLang(\Lambda).\,x\models_F\phi\implies y\models_G\phi\}.
\end{equation*}
We show that~$r$ is an $L_\Lambda$-simulation. So let $x\mathrel{r}y$;
we have to show that $\gamma(x)\mathrel{L_\Lambda
  r}\delta(y)$. Since~$C$ and~$D$ are finitely branching, we have
finite subsets $C'\subseteq C$, $D'\subseteq D$ such that
$\gamma(x)\in FC'\subseteq FC$ and $\delta(y)\in GD'\subseteq GD$. Let
$A\subseteq C$ such that $\gamma(x)\in\lambda(A)$, equivalently
$\gamma(x)\in\lambda(C'\cap A)$; we have to show that
$\delta(y)\in\mu(r[A])$, equivalently $\delta(y)\in\mu(D'\cap
r[A])$. By monotonicity of~$\mu$, it suffices to show
$\delta(y)\in\mu(D'\cap r[A\cap C'])$. Assume the contrary. Put
$\FA=\{B\subseteq D'\mid \delta(y)\in\mu(B)\}$. Again by monotonicity
of~$\mu$, the assumption means that we have
$B\not\subseteq r[A\cap C']$ for every $B\in\FA$; we can thus pick
$y_B\in B\setminus r[A\cap C']$. Then $(x',y_B)\notin r$ for every
$x'\in C'\cap A$; that is, we can pick a $\Lambda$-formula
$\phi_{x',B}$ such that $x'\models_F\phi_{x',B}$ but
$y_B\not\models_G\phi_{x',B}$. Now put
\begin{equation*}\textstyle
  \psi=\Lor_{x'\in C'\cap A}\Land_{B\in\FA}\phi_{x',B}
\end{equation*}
(a finite formula because $C'$ and $\FA$ are finite).  Then
$ C'\cap A\subseteq\Sem{\psi}_C$ by construction, so
$x\models_F\doublemod{\lambda,\mu}\psi$ by monotonicity
of~$\lambda$. Since $x\mathrel{r} y$, we obtain that
$y\models_G\doublemod{\lambda,\mu}\psi$, i.e.\
$\delta(y)\in\mu(\Sem{\psi}_D)$, and hence
$\delta(y)\in\mu(D'\cap\Sem{\psi}_D)$. Thus,
$B:=D'\cap\Sem{\psi}_D\in\FA$, so we have $x'\in C'\cap A$ such that
$y_B\models\phi_{x',B}$, contradiction. \qed

\end{document}